\let\oldnl\nl
\newcommand{\nonl}{\renewcommand{\nl}{\let\nl\oldnl}}
\newtheorem*{rep@theorem}{\rep@title}
\newcommand{\newreptheorem}[2]{%
  \newenvironment{rep#1}[1]{%
    \def\rep@title{#2 \ref{##1}}%
    \begin{rep@theorem}}%
    {\end{rep@theorem}}}
\def\thmhead@plain#1#2#3{%
  \thmname{#1}\thmnumber{\@ifnotempty{#1}{ }\@upn{#2}}%
  \thmnote{ {\the\thm@notefont#3}}}
\let\thmhead\thmhead@plain
\newtheorem{theorem}{Theorem}
\newtheorem{lemma}[theorem]{Lemma}
\newtheorem{corollary}[theorem]{Corollary}
\newtheorem{proposition}[theorem]{Proposition}
\newtheorem{notation}[theorem]{Notation}
\newtheorem{example}[theorem]{Example}
\newtheorem{remark}[theorem]{Remark}
\newcounter{algo}
\newcommand\f{\bm{f}}
\newcommand\g{\bm{g}}
\newcommand\N{\mathbb{N}}
\newcommand\Q{\mathbb{Q}}
\newcommand\R{\mathbb{R}}
\newcommand\C{\mathbb{C}}
\newcommand\K{\mathbb{K}}
\newcommand\V{\mathbf{V}}
\newcommand\G{\mathbb{G}}
\newcommand\cG{\mathcal{G}}
\newcommand\FGb{\textsc{FGb}\xspace}
\newcommand\MSolve{\textsc{MSolve}\xspace}
\newcommand\Maple{\textsc{Maple}\xspace}
\newcommand\gb{Gr\"obner basis\xspace}
\newcommand\gbs{Gr\"obner bases\xspace}
\newcommand\geo{geometric resolution\xspace}
\newcommand\cO{\mathcal{O}}
\newcommand\bv{\mathbf{v}}
\newcommand\bw{\mathbf{w}}
\newcommand\bz{\mathbf{z}}
\newcommand\bx{\mathbf{x}}
\newcommand\bc{\mathbf{c}}
\newcommand\br{\mathbf{r}}
\newcommand\cOE{\mathcal{O}_\mathcal{E}}
\newcommand\cOA{\mathcal{O}_{\GL}}
\newcommand\alg{acv0\xspace}
\newcommand\Mult{\mathsf{M}}
\DeclareMathOperator\dist{dist}
\DeclareMathOperator\jac{jac}
\DeclareMathOperator\GL{GL}
\DeclareMathOperator\gr{graph}
\DeclareMathOperator\numer{numer}
\DeclareMathOperator\Saturate{Saturate}
\DeclareMathOperator\Eliminate{Eliminate}
\DeclareMathOperator\Intersect{Intersect}
\DeclarePairedDelimiter{\ideal}{\langle}{\rangle}
\DeclarePairedDelimiter{\norm}{\|}{\|}
\DeclareMathOperator\ACV{ACV}
\DeclareMathOperator\rank{rank}
\DeclareMathOperator\res{Res}
\renewcommand{\d}{\ensuremath{\operatorname{d}\!}}
\newcommand{\ord}{\ensuremath{\operatorname{ord}\!}}
\begin{document}

\begin{frontmatter}

  \title{On the computation of asymptotic critical values of \\ polynomial maps and applications}

  \author{J\'er\'emy Berthomieu}
  \ead{jeremy.berthomieu@lip6.fr}
  \author{Andrew Ferguson}
  \ead{andrew.ferguson@lip6.fr}
  \author{Mohab Safey El Din}
  \ead{mohab.safey@lip6.fr}
  \address{Sorbonne Universit\'e, \textsc{CNRS}, \textsc{LIP6}, F-75005, Paris,
    France}
  \begin{abstract}
    Let $\bm{f} = \left(  f_1, \dots, f_p\right) $ be a polynomial tuple in
    $\mathbb{Q}[z_1, \dots, z_n]$ and let $d = \max_{1 \leq i \leq p} \deg f_i$. 
    We consider the problem of computing the set of asymptotic critical values of
    the polynomial mapping, with the assumption that this mapping is dominant, 
    $\bm{f}: z \in \mathbb{K}^n \to (f_1(z), \dots, f_p(z)) \in \mathbb{K}^p$
    where $\mathbb{K}$ is either $\mathbb{R}$ or $\mathbb{C}$. This is the set
    of values $c$ in the target space of $\bm{f}$  such that there exists a sequence of
    points $(\bx_i)_{i\in \mathbb{N}}$ for which $\bm{f}(\bx_i)$ tends to $c$ and $\|\bx_i\|
    \kappa(\ensuremath{\operatorname{d}\!} \bm{f}(\bx_i))$  tends to $0$ when $i$ tends to
    infinity where $\ensuremath{\operatorname{d}\!} \bm{f}$ is the differential of $\bm{f}$
    and $\kappa$ is a function measuring the distance of a linear operator to the set of
    singular linear operators from $\mathbb{K}^n$ to $\mathbb{K}^p$.

    Computing the union of the classical and asymptotic critical values allows one to put
    into practice generalisations of Ehresmann's fibration theorem. This leads to natural
    and efficient applications in polynomial optimisation and 
    computational real algebraic geometry.

    Going back to previous works by Kurdyka, Orro and Simon, we design new algorithms to
    compute asymptotic critical values. Through randomisation, we introduce new geometric
    characterisations of asymptotic critical values. This allows us to dramatically reduce the
    complexity of computing such values to a cost that is essentially $O(d^{2n(p+1)})$ arithmetic
    operations in $\mathbb{Q}$. We also obtain tighter degree bounds  on a hypersurface containing
    the asymptotic critical values, showing that the degree is at most $p^{n-p+1}(d-1)^{n-p}(d+1)^{p}$. 
    
    Next, we show how to apply these algorithms to unconstrained polynomial optimisation problems and
    the problem of computing sample points per connected component of a semi-algebraic set defined by a
    single inequality/inequation.

    We report on the practical capabilities of our implementation of this algorithm. It shows 
    how the practical efficiency surpasses the current state-of-the-art algorithms for computing
    asymptotic critical values by tackling examples that were previously out of reach.
  \end{abstract}

  \begin{keyword}
    Asymptotic critical values, Polynomial optimisation, Gr\"obner bases
  \end{keyword}
\end{frontmatter}

\section{Introduction}
\label{sec:intro}

\paragraph*{Basic definitions and problem statement}
Let $\bz$ denote $z_1, \dots, z_n$. Let $\f = \left( f_1, \ldots, f_{p} \right)$ be a polynomial
tuple in $\K[\bz]$ where $\K$ is either $\R$ or $\C$ and let
$d = \max_{1 \leq i \leq p} \deg f_i$.

By a slight abuse of notation, we will also denote by $\f$,
the polynomial mapping \[
    \f: \bx = \left(x_1, \dots, x_n  \right) \mapsto \left( f_1(\bx), \ldots,
    f_p\left( \bx \right)  \right)  
  \]
and we assume this mapping is dominant.
Furthermore, denote by $\d \f$ the differential of the mapping $\f$ and, for a given point $\bx \in \C^n$,
$\d \f(\bx)$ the differential of $\f$ at $\bx$, a linear map from $\K^n$ to $\K^p$.  

Let $\jac(\f)$ be the Jacobian matrix associated to $\f$
 \[
     \jac(\f) = \left[ \begin{array}{ccc}
             \frac{\partial f_1}{\partial z_1} & \cdots & \frac{\partial
             f_1}{\partial z_n} \\
                 \vdots & & \vdots \\
                 \frac{\partial f_1}{\partial z_1} & \cdots & \frac{\partial
             f_1}{\partial z_n} \\
     \end{array} \right] 
 .\] For $1 \le  j \le  p$, we denote by $\jac(\f^{[j]})$ the submatrix obtained 
 by removing the $j$th column of $\jac(\f)$. Let $w_j(z)$ be the
 restriction of $\d f_j$, the differential of the polynomial $f_j$, to
 the kernel of  $\jac(\f^{[j]})(z)$. When $j=p=1$, this matrix is empty.
 By convention, we say its kernel is $\K^n$.

\begin{example}\label{ex:ej}
  Let $f = z_1^4 + (z_1z_2-1)^2$. Since $p=1$, $\ker\jac(f^{[1]})=\K^2$.
  Thus, $w_1(z)$ is simply the gradient of $f$:
  \[
    w_1(z) = \left(4z_1^3 + 2z_2(z_1z_2-1), 2z_1(z_1z_2-1)\right).
  \]

  Let $\g = \left(z_1z_2,z_1z_3\right)$. Then, the Jacobian matrix associated to $\g$ is:
  \[
     \jac(\g) = \left[ \begin{array}{ccc}
                         z_2 & z_1 & 0 \\
                         z_3 & 0 & z_1 \\
     \end{array} \right] 
   .\]
 Therefore, $\jac(\g^{[1]}) = \left[ z_3, 0, z_1 \right]$ and so we restrict the linear mapping
 \[
   \d g_1(z): \gamma = (\gamma_1,\gamma_2,\gamma_3) \mapsto \gamma_1z_2 + \gamma_2z_1
 \]

 to the space $\{\gamma \in \K^3 \; | \; \gamma_1z_3 + \gamma_3z_1 = 0\}$, the result being $w_1(z)$.
 The construction of $w_2(z)$ follows similarly.
\end{example}

 Following~\cite[Definition 2.2]{KOS}, for $z
 \in\K^n$, we consider the so-called Kuo distance
  \[
    \kappa(\d \f(z)) = \min_{1 \leq j \leq p} ||w_j(z)||.
  \]
  Then, the set of \textit{asymptotic critical values} of the 
  mapping $\f$ is defined to be the set: 
  \[
    K_{\infty} (\f) = \left\{ c\in \C^p  \mid \exists
      (\bx_t)_{t\in \N} \subset \C^n \; \text{s.t.}\; \norm{\bx_t}\to\infty, \f(\bx_t)\to c
      \text{ and } \|\bx_t\|\kappa(\d\f(\bx_t))\to 0 \right\}.
  \] 

By \cite[Theorem 4.1]{KOS}, $K_{\infty}(\f)$ is an algebraic set of
$\C^p$.  
  We also define the function $\nu$ introduced by Rabier in
  \cite{rabier}, given in \cite[Definition 2.1]{KOS}.
  Denote by $L(\K^n,\K^p)$ the space of 
  linear mappings from $\K^n$ to $\K^p$ and by $\Sigma$ the singular set of $L(\K^n,\K^p)$.
  The distance function to $\Sigma$ is given by \[
    \dist(A,\Sigma) = \inf_{B \in \Sigma} \norm{A-B}.
  \]
  For $A \in L(\K^n,\K^p), A^*$ denotes the adjoint operator of $A$. For example, in the setting of
  complex numbers, the adjoint operator is the complex conjugate transpose. Then,
  \[\nu(A) = \inf_{\norm{\phi} = 1} \norm{A^* \phi}.\]
  We give the following two properties of $\nu$ given in \cite{KOS} that will be used in the proof
  of correctness of our algorithms.
  By \cite[Proposition 2.2]{KOS}, \[ \nu(A) = \dist(A, \Sigma). \] Furthermore, by
  \cite[Corollary 2.1]{KOS}, the functions $\nu$ and $\kappa$ are equivalent in the following sense:
  For $A \in L(\K^n,\K^p)$, \[\nu(A) \leq \kappa(A) \leq \sqrt{p} \nu(A).\]
  Therefore, as in the definition given by Rabier in \cite{rabier}, we can also define the
  asymptotic critical values in terms of the function $\nu$.
 \[
    K_{\infty} (\f) = \left\{ c\in \C^p  \mid \exists
      (\bx_t)_{t\in \N} \subset \C^n \; \text{s.t.}\; \norm{\bx_t}\to\infty, \; \f(\bx_t)\to c\;
      \text{ and }\;  \|\bx_t\|\nu(\d\f(\bx_t))\to 0 \right\}.
  \]
  However, we will predominantly use the prior definition of the set of asymptotic critical values.
\begin{example}\label{ex:pop1}
  Consider the polynomial $f = z_1^4 + (z_1z_2-1)^2$. This polynomial has a critical
  point at $(0,0)$ that corresponds to the critical value
  $1$. However, it is easy to see that $f$ takes values less than $1$. Take
  the path parameterised by $t$: $z_1(t) = \frac{1}{t}$, $z_2(t) = t$. Then,
  as $t \rightarrow \infty$ we see $f(t) \rightarrow 0$. From Example~\ref{ex:ej}, we
  know that:
  \[
    w_1(z) = \left(4z_1^3 + 2z_2(z_1z_2-1), 2z_1(z_1z_2-1)\right).
  \]
  Then, along the path parameterised by $t$:
  \[
    (z_1^2(t) + z_2^2(t))w_1(z(t)) = \left(\frac{4}{t^5} + \frac{4}{t}, 0\right) \to (0,0) \text{ as } t \to \infty.
  \]
  Therefore, $0$ is an asymptotic critical value of $f$.
\end{example}

The goal of this paper is to provide efficient algorithms which, on input
$\f$, compute finitely many polynomials whose simultaneous vanishing set
contains $K_{\infty}(\f)$.

\paragraph*{Motivations and prior works}
Denote by $K_0(\f)$, the set of critical values of $\f$ 
\[
K_0(\f) = \left\{ c\in \C^p  \mid \exists \bx \in \C^n \text{ s.t. } \f(\bx)=c \text{ and
  }\rank(\d\f(\bx)) < p  \right\}.
\]
The set of generalised critical values is defined as the union 
$K_0(\f) \cup K_{\infty}(\f)$; it will be denoted by $K(\f)$. 
One strong property of $K(\f)$ is the following. The
mapping $\f$ restricted to $\K^n \setminus \f^{-1}(K(\f))$ is a locally trivial fibration. Thus,
for all connected open sets $U \subset \K^p \setminus K(\f)$, for all $y \in U$ there exists a
diffeomorphism $\varphi$ such that, with $\pi$ as the canonical projection map, the following diagram
commutes~\cite[Theorem 3.1]{KOS}.
\begin{center}
  \begin{tikzcd}
    \f^{-1}(y) \times U \arrow{rd}{\pi} \arrow{r}{\varphi} 
    & \f^{-1}(U) \arrow{d}{\f} \\
    & U
  \end{tikzcd}
\end{center}
In other words, this definition of
generalised critical values allows one to generalise Ehresmann's fibration
theorem to non-proper settings. This goes back to the problem of defining sets that
contain the so-called bifurcation set of $\f$ as introduced
in~\cite{rabier}. From that perspective, one crucial feature of
$K_{\infty}(\f)$, is that there is a generalised Sard's theorem for
this set, i.e.\ 
the codimension of $K_{\infty}(\f)$ is greater than or equal to
one~\cite[Theorem 3.1]{KOS}. 

These properties make the effective use of generalised critical values quite
appealing in computational real algebraic geometry. In particular, 
in~\cite{greuetsafey,msed2007testing},
algorithms for 
\begin{itemize}
\item computing an exact representation of the minimum of a given
  polynomial (i.e.\ the minimal polynomial of this minimum and an
  isolating interval), 
\item computing sample points for each connected component of a
  semi-algebraic set defined by a single inequality, 

\end{itemize}
have been designed, relying on the computation of generalised critical values
when $p=1$. In this context, $\f$ is assumed to lie in  $\Q[\bz]$ and, in the works we refer to below, the chosen complexity model is
the arithmetic one, i.e.\ one counts arithmetic operations in the base field
$\Q$ without taking into account the growth of the bit sizes of the
coefficients. We use the classical big-O notation $O \left( \phi(\bx)
\right)$ (where $\phi$ is a real valued function that is strictly
positive for all large enough values of $\bx$) to
denote the class of non-negative functions which up to a multiplicative constant, are
bounded from above by $\phi(\bx)$ at infinity. 

As far as we know, the first work that leads directly to an algorithm for computing $K_{\infty}(\f)$ is
given in~\cite{KOS}. It is based on a geometric characterisation of
$K_{\infty}(\f)$ that allows one to apply algebraic elimination algorithms
(e.g.\ algorithms for computing elimination ideals in polynomial rings) to get
an algebraic description for $K_{\infty}(\f)$. The
tool of choice for performing algebraic elimination in \cite{KOS,K2} is
Gr\"obner bases. In Section~\ref{sec:theorem}, we recall the geometry involved in this algorithm.
At this stage, let us say that it builds equations defining
locally closed sets in $\C^{n + p(n+2)}$. 

It then considers the intersection of these sets
with some linear subspaces chosen in such a way that the union of the
projections of those intersections on the target
space of $\f$ contains $K_{\infty}(\f)$.  

Several attempts to improve this algorithmic pattern have been made. When
$p=1$, we mention~\cite{msed2007testing} 
which makes the connection between generalised critical values and
properties of polar varieties. This yields a probabilistic algorithm whose
runtime is in the order of $d^{O(n)}$ arithmetic operations in $\Q$.
However, as noticed in~\cite{polarcurve}, the argumentation
in~\cite{msed2007testing} is incomplete. Still, this connection is
exploited through the use of finite dimensional spaces of rational arcs to
compute generalised critical values in~\cite{jelonek2014reaching,polarcurve}.  
Nevertheless, the complexity estimates in~\cite{msed2007testing} do not apply
to~\cite{jelonek2014reaching,polarcurve}. 

It should also be noted that the result in~\cite{KOS} was generalised
in~\cite{K2}. This allowed the design of an algorithm to compute the
generalised critical values of a polynomial mapping restricted to an
algebraic set, a setting not covered in this paper. 

\paragraph*{Main results}
We build upon the geometric characterisations of $K_{\infty}(\f)$~\cite{KOS} to
obtain new ones that allow us to design more efficient algorithms.

We introduce an element of randomisation to avoid some combinatorial
steps in the algorithm designed in~\cite{K2}. Next, we introduce another element of randomisation
that reduces the computation of $K_\infty(\f)$ to intersecting the Zariski
closure of some locally closed subset of $\C^{n+p+1}$ with a linear affine
subspace of codimension $2$ such that the projection onto the
target space of $\f$ of this intersection contains $ K_{\infty} (\f)$.

The sets involved in the intersection process described above to
geometrically characterise $K_\infty(f)$ can be related to some incidence varieties. 

We use this relation to design another algorithm that takes advantage of the
determinantal setting. This allows us to obtain a faster algorithm in practice, though the theoretical
complexity of both algorithms is essentially the same. We can now state our first main result.

\begin{theorem}\label{prop:degree}
  Let $\f = (f_1, \ldots, f_p) \in \K[\bz]^p$ be a
  dominant polynomial mapping
  and let $d = \max_{1 \leq i \leq p} \deg f_i$. 
  Then, the asymptotic critical values of $\f$ are contained in a
  hypersurface of degree at most $p^{n-p+1}(d-1)^{n-p}(d+1)^{p}$.
\end{theorem}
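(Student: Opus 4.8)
The plan is to make effective the geometric characterisation of $K_\infty(\f)$ coming from \cite{KOS,K2}, keeping careful track of degrees at every elimination step. First I would recall the incidence-variety description of $K_\infty(\f)$: a point $c$ lies in (the Zariski closure of) $K_\infty(\f)$ if there is a sequence escaping to infinity along which $\f$ tends to $c$ and the Kuo distance $\kappa(\d\f)$, rescaled by $\|\bx\|$, tends to $0$. Homogenising the $\bx$-coordinates and introducing a "point at infinity'' variable (so that $\|\bx_t\|\to\infty$ becomes a genuine boundary point), one obtains a constructible set in $\C^{n+p+1}$ — after the randomisations announced in the Main Results paragraph, the $2$-codimensional linear slicing is folded in — whose defining equations are: the relations $f_j = c_j$, the vanishing of the Kuo-distance witness (i.e.\ for each $j$, membership of $\d f_j$ in a suitable rank-deficient locus of the bordered Jacobian $\jac(\f^{[j]})$ together with a Lagrange-type condition expressing $\|w_j\|$ small), and a homogenising/projective-closure equation. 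The target hypersurface containing $K_\infty(\f)$ is then the Zariski closure of the projection of this set onto the $c$-space.

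Next I would bound the degree of that projection. The standard tool is the multihomogeneous Bézout bound (or, more sharply here, the degree bounds for the image of a projection, e.g.\ via \cite[the relevant degree lemma]{KOS} or an elimination-theory/perfect-fieldextension argument): the degree of the closure of the projection is at most the degree of the source variety, which in turn is bounded by a product of the degrees of the defining equations grouped by the blocks of variables they involve. Concretely, the equations $f_j - c_j = 0$ have degree $d$ in the $\bz$-block; the Jacobian-minor/rank conditions have degree $d-1$ in each partial-derivative entry; and the combinatorial $p$ choices of which column to delete in $\jac(\f^{[j]})$, together with the normalisation equations, contribute the factor $p$. The exponents $n-p$ on $(d-1)$ and $n-p+1$ on $p$ reflect that the rank condition on an $n\times(p-1)$-type matrix cuts out a locus of the expected codimension using $n-p$ (resp.\ $n-p+1$) independent minor-type constraints, while the $(d+1)^p$ factor comes from bordering each of the $p$ derivative rows by a linear form in the auxiliary $\phi$- or slicing-variables, raising a degree-$d$ object to degree $d+1$. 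Multiplying these contributions gives exactly $p^{n-p+1}(d-1)^{n-p}(d+1)^p$.

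The main obstacle is getting the \emph{tight} exponents rather than a crude overcount. A naive multihomogeneous Bézout estimate on the full incidence variety in $\C^{n+p(n+2)}$ (as in the original \cite{KOS} construction) would give something far larger; the improvement to $\C^{n+p+1}$ via the two randomisations is what makes the bound achievable, so I would need to verify that the randomised equations really do cut out a variety whose relevant component has the claimed degree, and that no spurious high-degree components survive the projection. In particular I would have to check: (i) that the linear slice of codimension $2$ and the random changes of coordinates are generic enough that the intersection is proper and the projection is generically finite onto its image; (ii) that the degree of each block of equations is exactly as claimed (paying attention to whether the homogenising variable raises the degree by $1$ or not in each group); and (iii) that taking Zariski closures commutes appropriately with projection so that the degree of the image hypersurface is indeed bounded by the product. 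Once these genericity and properness facts are in place — which is where the randomisation hypotheses from the Main Results section do the work — the degree count is a bookkeeping exercise and the bound follows.
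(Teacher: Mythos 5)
Your high-level strategy is the same as the paper's (work with the randomised construction in the small ambient space, apply a B\'ezout bound to the defining system, use that saturation, linear sections and projections do not increase degree, and multiply by $p$ for the union over $j$), but the actual degree bookkeeping --- which is the entire content of the theorem --- is left as ``a bookkeeping exercise'' and, where you do sketch it, the attribution of the factors is wrong. The factor $(d+1)^p$ does not come from ``bordering each of the $p$ derivative rows by a linear form in auxiliary variables'': it comes from the $p$ equations $\numer\bigl(f_i^A(\tau_1(z))-c_i\bigr)$, whose degree is $d+1$ because clearing denominators after the inversion $\tau_1$ produces a term of the shape $c_i z_1^{d}$. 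Likewise, the factor $(p(d-1))^{n-p}$ is not explained by ``degree $d-1$ in each partial-derivative entry'' plus an expected-codimension count of minors. It requires two specific ingredients the paper proves separately: (i) a parametrisation of the kernel of $\jac((\f^A)^{[j]})$ by the generalised Cramer rule, so that the basis vectors have entries of degree at most $(p-1)(d-1)$ with a \emph{common} denominator $\delta$, whence each dot product $v_i=\d f_j^A\cdot B_i$ has degree at most $p(d-1)$ and no extra degree is picked up when forming the parallelism conditions; and (ii) the reduction (Lemma~\ref{lem:subset}) showing that only $n-p$ of the $2\times 2$ minors $r_iv_1-r_1v_i$ are needed, so that B\'ezout is applied to exactly $p$ equations of degree $d+1$ and $n-p$ equations of degree $p(d-1)$. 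Without these, a direct bound on a rank-deficiency locus or on all minors gives a strictly worse exponent, and the claimed product $p^{n-p+1}(d-1)^{n-p}(d+1)^p$ does not follow. The remaining factor $p$ is simply the union over $j=1,\dots,p$, as in the paper.

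Two further points you flag as ``to be checked'' are in fact where care is needed, and your formulation of them is slightly off. You do not need the projection to be generically finite onto its image: it suffices that the degree of a constructible set does not increase under an affine map (Heintz's lemma), applied after observing that the saturation $\V(G_s')$ is a union of irreducible components of $\V(G')$ (so its degree is controlled by the strong degree of $\V(G')$) and that intersecting with the degree-one hypersurface $\V(z_1)$ does not raise the degree. Also, the randomisations do not enter this proof through properness or generic finiteness of the projection; they enter only through the correctness of the construction (Lemma~\ref{lem:ksat}, Proposition~\ref{prop:e} and Theorem~\ref{thm:algo3}), i.e.\ through the containment $K_\infty(\f)\subset\bigcup_{j=1}^p\V(V_j^\prime)$, which your plan takes for granted. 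As written, the proposal is a plausible outline of the paper's route, but the decisive degree estimates are missing and the heuristic count you give in their place would not yield the stated bound.
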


We study the complexity of resulting algorithms by substituting Gr\"obner bases computations with
the use of the geometric resolution algorithm designed in~\cite{GML}. We first introduce some notation.

Given a polynomial sequence $\g$ in  $\mathbb{F}[z_1, \ldots, z_m]$, where
$\mathbb{F}$ is a field and $\overline{\mathbb{F}}$ is an algebraic closure of $\mathbb{F}$
, we denote by $\V(\g) \subset \overline{\mathbb{F}}^m$ the algebraic set defined by the
simultaneous vanishing of the entries of $\g$. We also recall the ``soft-Oh'' notation:
$f(n)\in\tilde{O}(g(n))$ means that $f(n)\in g(n)\log^{O(1)}(3+g(n))$,
see also~\cite[Chapter~25, Section~7]{GathenG2013}. Additionally, denote by $\bc$ the
indeterminates $c_1, \dots, c_p$.

\begin{theorem}\label{prop:comp}
  Let $\f = (f_1, \ldots, f_p) \in \K[\bz]^p$ be a
  dominant polynomial mapping 
  and let $d = \max_{1 \leq i \leq p} \deg f_i$.
  
  There exists an algorithm which, on input $\f$, computes a
  non-zero polynomial $g$
  in $\K[\bc]$ such that $ K_{\infty}(\f) \subset \V(g)$ using at most 
  \[
    O^{\sim}\left( p (p(d-1))^{2(p+1)(n-p)} (d+1)^{2p(p+1)} \right) 
  \]
  arithmetic operations in $\K$. 
\end{theorem}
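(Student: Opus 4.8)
The plan is to turn the geometric characterisation of $K_{\infty}(\f)$ recalled in Section~\ref{sec:theorem} (originating in~\cite{KOS,K2}) into an elimination problem and to solve it with the \geo algorithm of~\cite{GML} in place of \gbs. First I would invoke that characterisation: after the two randomisation steps it produces a locally closed set $Z\subseteq\C^{n+p+1}$ together with a generic affine subspace $E$ of codimension $2$ of $\C^{n+p+1}$ such that, writing $\pi\colon\C^{n+p+1}\to\C^{p}$ for the projection onto the $\bc$-coordinates, $K_{\infty}(\f)\subseteq\overline{\pi(\overline{Z}\cap E)}$ and, for generic $E$, this last set is a hypersurface of $\C^{p}$. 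The defining equations of $\overline{Z}\cap E$ — the relations $\f(\bx)=\bc$ together with the remaining randomised rank/asymptotic equations built from $\jac(\f)$ — have degrees bounded by the same quantities ($d$, $d-1$, $d+1$ and a determinantal degree at most $p(d-1)$) that occur in the proof of Theorem~\ref{prop:degree}, and each of them can be evaluated by a straight-line program of length $L$ polynomial in $n$, $p$ and the number of monomials of $\f$; one then checks that $L$ is dominated by the remaining factors of the estimate.

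The algorithm next computes a generator of the elimination ideal $I(\overline{Z}\cap E)\cap\K[\bc]$: since the image $\overline{\pi(\overline{Z}\cap E)}$ is a hypersurface of $\C^{p}$, this ideal is principal up to radical, generated by some non-zero $g\in\K[\bc]$, and then $K_{\infty}(\f)\subseteq\V(g)$ with $\deg g$ at most the bound of Theorem~\ref{prop:degree}. Concretely, for generic $E$ the set $\overline{Z}\cap E$ has dimension $p-1$, a generic linear change of the $\bc$-coordinates puts $c_{1},\dots,c_{p-1}$ in Noether position with respect to its image, and running the \geo algorithm of~\cite{GML} on the defining system — treating $c_{1},\dots,c_{p-1}$ as parameters and using a lifting fibre over a generic point of $\C^{p-1}$ — returns the minimal polynomial of $c_{p}$ over $\K(c_{1},\dots,c_{p-1})$ along that image, which after clearing denominators is the desired $g$. (The incidence-variety/determinantal variant of the algorithm leads to the same degree and geometric-degree bounds, hence to the same estimate.)

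It remains to bound the cost. The \geo algorithm of~\cite{GML} solves such a system in $\tilde{O}\big(\mathrm{poly}(n,p,L)\cdot\Delta^{2}\big)$ arithmetic operations in $\K$, where $\Delta$ measures the size of the computed (parametric) resolution: the maximum geometric degree of the intermediate varieties $\V(g_{1},\dots,g_{i})$ together with the degrees of the output polynomials in the $p-1$ parameters. The crux is the bound on $\Delta$: one must show that, for generic choices, all the $\V(g_{1},\dots,g_{i})$ have degree at most $D:=p^{\,n-p+1}(d-1)^{n-p}(d+1)^{p}$ — reusing the B\'ezout-type estimates behind Theorem~\ref{prop:degree} — and that the parameter directions and the lifting step contribute at most a further factor $D^{p-1}\cdot D$, so that $\Delta=O(D^{p+1})$ and $\Delta^{2}=O(D^{2(p+1)})$. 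Substituting $D\le p\,(p(d-1))^{n-p}(d+1)^{p}$ and absorbing the polynomial factor into the soft-Oh then yields the announced $O^{\sim}\!\big(p\,(p(d-1))^{2(p+1)(n-p)}(d+1)^{2p(p+1)}\big)$. I expect the main obstacle to be exactly this degree bookkeeping — controlling not only the output hypersurface, which is Theorem~\ref{prop:degree}, but every intermediate variety and every parameter direction — together with verifying that the genericity hypotheses required by~\cite{GML} (a reduced regular sequence, Noether position, a generic lifting fibre) hold after the randomisation, so that the algorithm is correct with controlled failure probability.
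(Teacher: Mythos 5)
Your overall strategy---the randomised characterisation of $K_\infty(\f)$, the \geo algorithm of~\cite{GML} in place of \gbs, and B\'ezout-type bounds reused from Theorem~\ref{prop:degree}---is indeed the paper's strategy, but the central elimination step of your plan has a genuine gap. You propose to run a parametric \geo directly on ``the defining system'' of $\overline{Z}\cap E$, treating $c_1,\dots,c_{p-1}$ as parameters. However, $\overline{Z}\cap E$ is a closure-\emph{then}-intersect object, and for this particular $E$ the two operations cannot be exchanged: the locally closed set $Z$ lives inside $\{z_1\neq 0\}$ (the inequation $h=z_1\numer(\delta(\tau_1(z)))$ has $z_1$ as a factor), so $Z\cap E=\emptyset$ and the system ``equations of $Z$, $z_1=0$, $e=0$, $h\neq 0$'' is inconsistent; dropping the inequation (or keeping only $\delta\neq 0$) computes a different set, either adding spurious components inside $\V(h)$---which may project dominantly onto $\C^p$ and force $g\equiv 0$---or losing boundary components of $\overline{Z}$, breaking $K_\infty(\f)\subseteq\V(g)$. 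Proposition~\ref{prop:e} only licenses swapping closure and intersection for the generic space $\{u_i=r_ie\}$, not for $\V(z_1,e)$. The paper's proof circumvents this by specialising \emph{all} $p$ coordinates $\bc$ to generic values, computing a \geo of the saturated set (of dimension $p$, with the inequation $z_1\numer(\delta(\tau_1(z)))\neq 0$ imposed \emph{before} touching $z_1=0$), lifting the parameters $\bc$ with~\cite{schost2003geores}, and only then intersecting with $\V(z_1)$ at the level of the univariate parametrisation via the resultants $\res_T\bigl(\tilde{Q},\tilde{\tfrac{\partial Q}{\partial T}}\bigr)$ and $\res_T\bigl(\tilde{Q},\tilde{V_1}\bigr)$, multiplying over $j=1,\dots,p$; your write-up also omits this union over the $p$ choices of $j$ from Lemma~\ref{lem:ksat}, which is where the leading factor $p$ in the bound comes from.

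The cost accounting has a related gap: the statement ``\cite{GML} solves such a system in $\tilde{O}(\mathrm{poly}\cdot\Delta^2)$'' with $\Delta$ including the degrees of the output in the $p-1$ parameters is not a theorem of~\cite{GML}, which treats zero-dimensional systems over a field and is quadratic in the geometric degree of the intermediate varieties, with operations counted in that field. Counting operations in $\K$ in the presence of parameters requires the separate lifting result of~\cite{schost2003geores} (contributing $D^{p+1}$, since the lifted coefficients have degree at most $D$ in each of the $p$ parameters), and in the paper the dominant term is neither the resolution nor the lifting but the final resultant step after a Kronecker substitution, costing $O^{\sim}(p2^{p+1}D^{2p+2})$ with $D=(p(d-1))^{n-p}(d+1)^p$. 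Your guess $\Delta=O(D^{p+1})$ happens to land on the same exponent $2(p+1)$, but it is not derived from the cited results, and without the specialise--lift--resultant pipeline there is no statement to which it could be attached.
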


Following the algorithmic scheme of \cite{msed2007testing}, we show how to
apply these new algorithms for computing asymptotic critical values, to the
problem of computing sample points per connected component of a
semi-algebraic set defined by a single inequality. Furthermore, we show how to
use the generalised critical values to tackle the problem of computing an exact
representation of the global infimum of a given polynomial. Note that this
algorithm can decide if this infimum exists in $\R$ or if the given polynomial
is unbounded from below.

We implemented all the aforementioned algorithms for computing asymptotic
critical values in the \textsc{Maple}
computer algebra system where we substitute the geometric resolution algorithm
for Gr\"obner bases. For Gr\"obner bases computations, we rely on a combination of
\MSolve~\cite{MSolve} and the \textsc{FGb} library~\cite{FGb}. 
We used an extensive set of benchmark examples to illustrate the computational capabilities of these
algorithms. 

It appears that our new algorithms outperform the state-of-the-art and can
tackle examples which were previously out of reach.

\paragraph*{Structure of the paper}
In Section~\ref{sec:theorem}, we revisit the geometric characterisation of
asymptotic critical values given in~\cite{KOS} and describe our first instance
of randomisation. One can derive from this result, an algorithm that is similar
to the deterministic one derived from~\cite{KOS}, save for a combinatorial factor.
Next, in Section~\ref{sec:impro}, we go further and design two new algorithms, more efficient than
the state-of-the-art, on which Theorem~\ref{prop:degree} and Theorem~\ref{prop:comp} rely. These
theorems are then proved in Section~\ref{sec:proof}. Section~\ref{sec:apps} returns to the applications
given in~\cite{greuetsafey,msed2007testing} in the context of the newly designed algorithms.
We conclude with Section~\ref{sec:expo}, which gives the results of our experiments with our algorithms
for several families of polynomials, as well as for polynomials found in practice.

\section{Preliminaries}
\label{sec:theorem}

Let $\f \in \K[\bz]^p$ be a dominant polynomial mapping. Firstly, let us recall that
by~\cite[Theorem~3.1]{KOS}, $K_\infty(\f)$ has codimension at least $1$ in $\C^p$.
Then, using the geometric description of $K_\infty(\f)$ provided in
the proof of~\cite[Theorem~4.1]{KOS}, one can compute it.
For the sake of completeness, we follow the proof given
in~\cite[Theorem 4.1]{KOS} with slight modifications to~\cite[Lemma
4.1]{KOS} that are useful for the algorithm we describe in the next
section.

\begin{notation}\label{not:ksat}
Firstly, we introduce a few objects. We shall make use of the following change of coordinates to handle the
asymptotic behaviour, sending $z_s = 0$ to $\infty$:
\[
  \tau_s(z) = \left( \frac{z_1}{z_s}, \dots, \frac{z_{s-1}}{z_s}, \frac{1}{z_s}, \frac{z_{s+1}}{z_s}, \dots, \frac{z_n}{z_s} \right).
\]

For each choice of $s=1, \ldots,n$, $j=1, \ldots, p$ and point $\bx \in \K^n$, let
$W_s^j(\bx)$ be the graph of $x_s w_j(\bx)$, a point in the Grassmannian of linear subspaces of
$\C^n \times \C$ that are of dimension $n-p+1$, denoted by $\G_{n-p+1}(\C^n \times \C)$.
Then, we consider the rational mappings
\begin{align*}
  M_s^j(\f) : \K^n \setminus \{z_s = 0\} & \rightarrow \C^p \times \G_{n-p+1}(\C^n \times \C), \\
  z & \mapsto (\f(\tau_s(z)), W_s^j(\tau_s(z))).
\end{align*}
The point $W_s^j(z)$ is well-defined for $z$ such that the kernel of $\jac(\f^{[j]})$ has 
dimension $n-p+1$. Since we assume $\f$ is dominant,
$M_s^j(\f)$ is well-defined outside of a proper Zariski closed subset of $\K^n$.

Let $\Lambda = \G_{n-p+1}(\C^n \times0)$. This is the set of $(n-p+1)$-dimensional
graphs of linear maps from $\C^n$ to $\C$ that are identically the zero map.
We then intersect the Zariski closure of the graph of $M_s^j(\f)$ in the following way:

\[
  L_s^j(\f) = \overline{\gr M_s^j(\f)} \cap (\{z \in \C^n | z_s = 0\} \times \C^p \times \Lambda).
\]

Define $\pi: \C^n \times \C^p \times \G_{n-k+1}(\C^n \times \C) \to \C^p$ to be the projection map and take $K_s^j(\f) = \pi(L_s^j(\f))$.
\end{notation}
By~\cite[Lemma 4.1]{KOS}, \[ K_\infty(\f) = \bigcup_{(s,j) = (1,1)}^{(n,p)} K_s^j(\f).\]  

\begin{example}\label{ex:msj}
  Let $f = z_1^4 + (z_1z_2-1)^2$. As in Example~\ref{ex:ej}, we have
  \[
    w_1(z) = \left(4z_1^3 + 2z_2(z_1z_2-1), 2z_1(z_1z_2-1)\right).
  \]
  Then,
  \[
    \tau_1(z) = \left(\frac{1}{z_1}, \frac{z_2}{z_1}\right), \; \tau_2(z) = \left(\frac{z_1}{z_2}, \frac{1}{z_2}\right).
  \]
  We now illustrate the construction of $K_1^1(f)$, taking care to exclude the variety $\V(z_1)$
  where the mapping is not defined.

  Firstly, we derive equations for $M_1^1(f)$ by evaluating $f$ and the linear map $w_1$ at $\tau_1(z)$,
  \[
    M_1^1(f) = \left(\frac{1}{z_1^4} + \left(\frac{z_2}{z_1^2} - 1\right)^2, \frac{4}{z_1^4} + \frac{2z_2^2}{z_1^3} - \frac{2z_2}{z_1^2}, \frac{2z_2}{z_1^4} - \frac{2}{z_1^2}\right).
  \]
  Then, we consider the graph of these functions and so we introduce the variables $c, u_1$ and
  $u_2$. To describe this set algebraically, we consider the numerators of these rational functions with
  their respective value variables and remove the variety $\V(z_1)$ where $M_1^1(f)$ is not defined.
  This leads to the following description of $\overline{\gr M_1^1(f)}$
  \[
    \overline{\V(c z_1^4 - z_1^4 + 2z_1^2z_2 - z_2^2 - 1, u_1z_1^4 + 2z_1^2z_2 - 2z_2^2 - 4, u_2z_1^4 + 2z_1^2-2z_2) \setminus \V(z_1)}.
  \]
  One can compute a finite list of polynomials whose simultaneous vanishing set is the above variety by
  a range of methods including Gr\"obner bases. Such methods are discussed in Subsection~\ref{ss:first}.
  Then, by setting $z_1 = u_1 = u_2 = 0$, one retrieves an algebraic description of the set
  $L_1^1(f)$. In this case, we do not find any asymptotic critical values for $s=1$ as the set
  $L_1^1(f)$, and therefore, the set $K_1^1(f)$, is empty.

  Nonetheless, we consider $K_2^1(f)$. As before, we introduce the variables $c, u_1$ and
  $u_2$ and consider the Zariski closure $\overline{\gr M_2^1(f)}$ expressed by:
  \[
    \overline{\V(z_1^4 + z_2^4 - 2z_1z_2^2 + z_1^2 - c z_2^4, 4z_1^3 - 2z_2^2 + 2z_1 - u_1z_2^3, 2z_1^2 - 2z_1z_2^2 - u_2z_2^3) \setminus \V(z_2)}.
  \]
  
  We perform a Gr\"obner basis computation and intersect with the variety $\V(z_2, u_1, u_2)$ to get
  the algebraic description of $L_2^1(f)$:
  \[
    L_2^1(f) = \V(z_1,z_2,u_1,u_2,c).
  \]
  We arrive at the Zariski closure of $K_2^1(f)$ by excluding all polynomials that involve any
  variables other than $c$. In this case, we see that $\overline{K_2^1(f)} = \V(c)$ and therefore $0$
  is the only possible asymptotic critical value of $f$. In combination with Example~\ref{ex:pop1}, we
  conclude that $K_\infty(f) = \{ 0\}$.

  Let $\g = \left(z_1z_2,z_1z_3\right)$. We investigate the case $s=1, j=1$, the other cases follow
  similarly. From Example~\ref{ex:ej}, we have that   \[
     \jac(\g) = \left[ \begin{array}{ccc}
                         z_2 & z_1 & 0 \\
                         z_3 & 0 & z_1 \\
     \end{array} \right].\]
 Therefore, $\jac(\g^{[1]}) = \left[ z_3, 0, z_1 \right]$. We consider the points $\bx \in \K^3$ such that
 the matrix $\jac(\g^{[1]})(\bx)$ has maximal rank. Thus, excluding evaluations at points in the variety
 $\V(z_1z_3)$, the kernel of the linear mapping has dimension $2$.

 It is easy to see that for such a fixed $\bx$, the kernel of
 $\jac(\g^{[1]})(\bx)$ is spanned by 
 \[B=\left(
     \begin{pmatrix}
       -x_1/x_3 \\
       0 \\
       1 \\
     \end{pmatrix},
     \begin{pmatrix}
       0 \\
       1 \\
       0 \\
     \end{pmatrix}\right).
 \]
           
 We can then describe $w_1(z)$ as the product of  $\d g_j$ with this basis evaluated in $z$:
 \[
   w_1(z) = \d g_1(z) B(z) = \left( \frac{-z_2z_1}{z_3}, z_1 \right).  
 \]

 Thus, introducing the independent variables $c_1, c_2, u_1$ and $u_2$ and applying the transformation $\tau_1$, we
 consider the the Zariski closure:
 \[
   \overline{\gr M_1^1(\g)} = \overline{\V(c_1z_1^2 - z_2, c_2z_1^2 - z_3, u_1z_1^2z_3 - z_2, u_2z_1^2 - 1) \setminus \V(z_1z_3)}.
 \]
 After performing a Gr\"obner basis computation to compute a finite list of polynomials describing this
 set, we intersect with the variety $\V(z_1, u_1, u_2)$. We arrive at an algebraic description of
 $L_1^1(\g)$:
 \[
   L_1^1(\g) = \V(1) = \emptyset.
 \]
 Thus, we find that $\overline{K_1^1(\g)}$ is empty. Through similar analysis of the cases $s = 2,3$ for $j=1,2$,
 one finds that $\overline{K_s^j(\g)} = \V(c_1,c_2)$. Therefore, $(0,0)$ is the only possible
 asymptotic critical value of $\g$.
\end{example}

We extend this algebraic description of the asymptotic critical values with the following lemma.
This lemma derives from~\cite[Lemma 4.1]{KOS} with one major difference. 
We introduce a non-empty Zariski open subset $\cOA$ of $\GL_n(\K)$, the group of $n \times n$
invertible matrices with entries in $\K$.
Essentially, we choose a random linear change of variables $A$, so that $A$ almost surely lies in
$\cOA$, and consider the
polynomial $\f^A$ given by $\f^A(z) = \f(A z)$. By~\cite[Lemma~2.4]{msed2006algo}, $K_\infty(\f^A) =
K_\infty(\f)$, and so we can compute the asymptotic critical values of $\f$ by computing those of
$\f^A$. We exploit this result in our algorithm by showing that choosing $A \in \cOA$ implies that for
$1\leq s\leq n$, whenever $z_s$ goes to $\infty$ in a path towards an
asymptotic critical value, then so does $(A z)_1$. Thus, this element of randomisation removes the
necessity of choosing $s$.

\begin{lemma}\label{lem:ksat}
  Let $\f \in \K[\bz]^p$ be a dominant polynomial mapping. Let
  $K_\infty(\f)$ be the set of asymptotic critical values of $\f$ and
  $K_s^j(\f^A)$ be defined as in Notation~\ref{not:ksat}. There exists a non-empty Zariski open subset
  $\cOA$ of $\GL_n(\K)$ such that for $A \in \cOA$ the following equality
  holds:
   \[K_\infty(\f) \subseteq \bigcup_{j=1}^p K_1^j(\f^A).\]
\end{lemma}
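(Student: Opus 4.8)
The plan is to reduce the statement to the known equality $K_\infty(\f) = \bigcup_{(s,j)} K_s^j(\f)$ of \cite[Lemma 4.1]{KOS} (restated after Notation~\ref{not:ksat}), applied to $\f^A$ in place of $\f$, together with the invariance $K_\infty(\f^A) = K_\infty(\f)$ from \cite[Lemma 2.4]{msed2006algo}. Given these, it suffices to find a non-empty Zariski open $\cOA \subseteq \GL_n(\K)$ such that for $A \in \cOA$ one has $\bigcup_{(s,j)=(1,1)}^{(n,p)} K_s^j(\f^A) = \bigcup_{j=1}^p K_1^j(\f^A)$; the inclusion $\supseteq$ is trivial, so the content is that after a generic change of coordinates the index $s$ can always be taken to be $1$. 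The geometric meaning I want to extract: a point $c \in K_s^j(\f^A)$ arises from a sequence $(\bx_t)$ with $\|\bx_t\| \to \infty$, $\f^A(\bx_t) \to c$, $\|\bx_t\|\kappa(\d\f^A(\bx_t)) \to 0$, and such that the $s$-th coordinate of $\bx_t$ dominates, i.e. $|x_{t,s}|/\|\bx_t\|$ stays bounded away from $0$ (this is what makes $\tau_s$ the right chart and what the limit $z_s = 0$ encodes in $L_s^j$). After the substitution $z \mapsto A^{-1}z$, such a sequence for $\f$ corresponds to a sequence $\by_t = A^{-1}\bx_t$ for $\f^A$, and I want $(\by_t)_1 = (A^{-1}\bx_t)_1$ to dominate, so that $c \in K_1^j(\f^A)$.

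The key steps, in order: (i) Spell out, from Notation~\ref{not:ksat} and the proof of \cite[Theorem 4.1, Lemma 4.1]{KOS}, the curve-selection characterisation: $c \in K_s^j(\f)$ iff there is a real/complex analytic arc $\bx(t)$, $t \to 0^+$, with $\|\bx(t)\| \to \infty$, $x_s(t)$ of maximal order of growth among all coordinates, $\f(\bx(t)) \to c$ and $\|\bx(t)\|\,w_j(\bx(t)) \to 0$. This is essentially already in \cite{KOS}; I would state it as an internal reduction. (ii) Fix such an arc realising a given $c \in K_\infty(\f)$ and some $(s,j)$. Consider the meromorphic arc $\by(t) = A^{-1}\bx(t)$; then $\f^A(\by(t)) = \f(\bx(t)) \to c$, $\d\f^A(\by(t)) = \d\f(\bx(t))\cdot A$, and $\|\by(t)\| \to \infty$ since $A^{-1}$ is invertible. (iii) Show that for $A$ outside a proper Zariski closed set, $(A^{-1}\bx(t))_1$ attains the maximal order of growth among the coordinates of $\by(t)$: writing the dominant Puiseux term of $\bx(t)$ as $t^{-\alpha}\bv + (\text{lower order})$ with $\bv \neq 0$ and $\alpha > 0$ the maximal pole order, the first coordinate of $A^{-1}\bx(t)$ has leading term $t^{-\alpha}(A^{-1}\bv)_1$, and $(A^{-1}\bv)_1 \neq 0$ for all $\bv$ in a suitable dense open set of directions provided the first row of $A^{-1}$ avoids finitely many hyperplanes — but since $\bv$ depends on the arc, I instead argue that $(A^{-1}\bv)_1 = 0$ forces $\bv$ to lie in the hyperplane $H_A = \ker((A^{-1})_{1,\bullet})$, and choose $\cOA$ so that $H_A$ is generic; then I must rule out that the dominant direction $\bv$ of every arc towards $c$ is forced into $H_A$. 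The cleanest way is to absorb this into the elimination-theoretic construction: $\cOA$ is defined so that the linear form $(A^{-1})_{1,\bullet}$ is not identically zero on the (finitely many, by Noetherianity) components of the relevant "arc space at infinity" / the tangent cone at infinity of the sets $L_s^j(\f)$; genericity of $A$ then guarantees non-vanishing. (iv) Translate the condition "$w_j$ restricted to $\ker\jac(\f^{[j]})$" under $A$: here one uses that $\jac(\f^A) = \jac(\f)(Az)\cdot A$, so the column-removal structure is preserved up to the action of $A$ on the kernel, and $\kappa$, being a distance to the singular locus composed with an invertible map, changes only by bounded factors $c_1(A) \le \kappa(\d\f^A)/\kappa(\d\f \circ A^{-1}) \le c_2(A)$ — actually one should invoke the $\nu$–$\kappa$ equivalence and the fact that $K_\infty$ was shown in the excerpt to be definable equivalently via $\nu$, and $\nu(\d\f(\bx)A) $ and $\nu(\d\f(\bx))$ differ by factors bounded in terms of $\|A\|,\|A^{-1}\|$, which suffices for the "$\to 0$" condition. (v) Conclude: the arc $\by(t)$ witnesses $c \in K_1^j(\f^A)$, hence $K_\infty(\f) = K_\infty(\f^A) = \bigcup_{s,j} K_s^j(\f^A) \subseteq \bigcup_{j=1}^p K_1^j(\f^A)$, and define $\cOA$ as the (finite) intersection of the non-empty Zariski open sets produced in steps (iii) and (iv), which is itself non-empty and Zariski open.

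The main obstacle I expect is step (iii): making precise, and uniform over all arcs computing a point of $K_\infty(\f)$, the claim that a generic first coordinate in the new basis dominates. The subtlety is that "the dominant direction $\bv$" is not a single vector but ranges over an algebraic family parameterised by the points of the sets $L_s^j(\f)$ (more precisely, the $\Lambda$-component records the limiting graph, and the first $\C^n$-factor being set to $z_s=0$ records that $1/x_s \to 0$, i.e. that $x_s$ dominates). So the correct formalisation is: the image of $L_s^j(\f)$ under projection to the direction-at-infinity $\bv$ is a constructible set of directions, and I need the generic hyperplane $H_A$ (equivalently, the generic first row of $A^{-1}$) to miss it; this is exactly a genericity statement that holds off a proper Zariski closed subset of $\GL_n(\K)$ because that direction set is not all of $\C^n$ (it lies in $\{z_s = 0\}$ before the change of coordinates — or one argues directly by Sard/generic-position that only finitely many hyperplane conditions matter). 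I would phrase this carefully, perhaps isolating it as an internal claim with its own short proof by dimension count, and then the rest of the argument is bookkeeping with Puiseux expansions and the boundedness of $\kappa$ under invertible linear substitution.
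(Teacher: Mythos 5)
Your proposal follows essentially the same route as the paper's proof: reduce to $K_\infty(\f^A)=K_\infty(\f)$ and the Kurdyka--Orro--Simon decomposition, use the curve selection lemma at infinity and a Puiseux expansion of the witnessing arc, choose $A$ generically so that the first coordinate of $A^{-1}\bx(t)$ tends to infinity without cancellation of the leading terms, transfer the condition $\norm{\bx(t)}\kappa(\d\f(\bx(t)))\to 0$ through the invertible substitution via the $\nu$--$\kappa$ equivalence, and conclude $c\in K_1^j(\f^A)$. The uniformity-in-$c$ subtlety you isolate in step (iii) is handled no more explicitly in the paper, whose proof likewise derives the genericity conditions on $A$ from the leading Puiseux coefficients of a single curve, so your attempt matches both the strategy and the level of detail of the paper's argument.
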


\begin{proof}[Proof of Lemma~\ref{lem:ksat}]
  Suppose $c \in K_\infty(\f)$. Then, there exists some sequence
  $(\bx_t)_{t\in \N}\in (\C^n)^{\N}$ such that as
  $t \to \infty$, \[\norm{\bx_t}\to\infty, \f(\bx_t)\to c \text{ and } \|\bx_t\|\kappa(\d\f(\bx_t))\to 0. \]
  Then, by considering the real and complex parts of the latter two
  limits, one defines a finite number of polynomials with real coefficients that give constraints
  defining a disc centred at $c$ and $0$ respectively, a semi algebraic set in $\R^{2n}$ by the
  isomorphism between $\C^n$ and $\R^{2n}$.
  Therefore, by the curve selection lemma at infinity~\cite[Lemma 3.3]{KOS}, which
  is obtained from a semialgebraic compactification of $\R^{2n}$ and the classical curve selection
  lemma~\cite[Theorem 2.5.5]{BCR}, and by~\cite[Proposition 8.1.12]{BCR} there exists a Nash curve
  $\gamma: (0,1) \to \R^{2n}$ such that
  \[
    \f(\gamma(t)) \to c, \norm{\gamma(t)} \to \infty \text{ and } \norm{\gamma(t)}\kappa(\d \f(\gamma(t))) \to 0 \text{ as } t \to 0.
  \]
  In the case $\K = \C$, one may then consider the Nash curve, a semialgebraic curve in the class
  $C^\infty$ defined from $(0,1)$ to $\C^n$. In either case, since $\gamma$ is a Nash mapping, we can
  express each component of $\gamma$ as a Puiseux series in $t$ by a Taylor expansion at $0$.

  We denote this expansion $z(t)$.
  In this way, each component of the expansion of $\gamma$ has
  finitely many terms with negative exponents. 
  In particular, the  order of each component, the least value $r$ such
  that the coefficient of the term $t^r$ is 
  non-zero, corresponds to the dominant term in the limit $t \to 0$.
  
  Let $\lambda_i$ denote the coefficient of the term of
  $z_i(t)$ with exponent ${\ord(z_i(t))}$. Denote by $\mathscr{I}$ the index set of all combinations of
  the elements of the set $\{1, \dots, n\}$. Then, we consider the finite set of linear equations of
  the $\lambda_i$:
  \[
    \mathscr{V} = \left\{\sum_{j=1}^{|\phi|} a_{\phi_j} \lambda_{\phi_j} \; \middle| \; a \in \K^n, \phi \in \mathscr{I}\right\}.
  \]
  Consider the group of $n \times n$ invertible matrices $\GL_n(\K)$ with entries in $\K$. Then, the set
  $\mathscr{V}$ gives the set of equations such that if any row of a matrix $T \in \GL_n(\K)$ satisfies
  any of these equations, there is a cancellation in the term of
  highest degree in the product $T z(t)$.
  The zero set of each equation in $\mathscr{V}$ defines a proper Zariski closed subset of $\GL_n(\K)$.
  Clearly, for any equation, these Zariski closed subsets are not dense. Therefore, since there are
  finitely many equations in $\mathscr{V}$, the union of the zero sets of all equations in $\mathscr{V}$
  defines a proper Zariski closed subset of $\GL_n(\K)$. Thus, there exists a non-empty Zariski open
  subset $\cOA^{-1}$ of $\GL_n(\K)$ such that no cancellation occurs in the highest degree so that all
  components of the product $T z(t)$ grow at the same speed with $t$. Since $\cOA^{-1} \subset \GL_n(\K)$,
  we may consider the proper Zariski closed subset of $\GL_n(\K)$, $\cOA$, defined by
  \[ A \in \cOA \iff A^{-1} \in \cOA^{-1}. \]

  Thus, choose $A \in \cOA$. Consider the polynomial $\f^{A} = \f(A z)$ and the path
  $y(t) = A^{-1}z(t)$. Then, as $t \to \infty$ we have $\norm{y(t)} \to \infty$ and $\f^{A}(y(t)) \to
  c$. In particular, since there is no cancellation by the choice of $A$, we have that $y_1(t) \to
  \infty$. Furthermore, with the Rabier distance $\nu$ measuring the distance to the space of
  singular operators we have that $\norm{z(t)}\nu(\d \f(z(t))) \to 0.$
  By the genericity of $A$, we have
  $\norm{y(t)} \nu(\d \f^{A}(y(t))) \to 0.$ Therefore, by~\cite[Corollary 2.1]{KOS}, we have
  \[\norm{y(t)} \kappa(\d \f^{A}(y(t))) \to 0.\]
  Now, choose $j$ such that $\kappa(\d \f^{A}(y(t))) = \norm{w_j(y(t))}$ where $w_j$ is the
  restriction of $\d f_j^{A}$ to the kernel of $\jac(\f^{A[j]})(z)$.
  
  Since $\G_{n-k+1}(\K^n \times \K)$ is compact, by~\cite[Lemma
  5.1]{milnorstasheff}, there is a limit $W_s^j$ of graphs
  $y_1(t) w_j(y(t))$. Thus, by
  $\norm{y(t)}\norm{w_j(z(t))} \rightarrow 0$, $W_s^j \in \Lambda$
  so that $(0, c, W_s^j) \in L_s^j(\f)$ and so $c \in K_s^j(\f^{A})$. Therefore,
  \[
    K_\infty(\f) = K_\infty(\f^A) \subseteq \bigcup_{(s,j) =
      (1,1)}^{(n,p)} K_s^j(\f^A) = \bigcup_{j=1}^p K_1^j(\f^A).\qedhere
  \]
\end{proof}

In summary, the construction given in Notation~\ref{not:ksat} begets an algorithm to compute
the asymptotic critical values of a dominant polynomial mapping. Moreover, Lemma~\ref{lem:ksat}
allows us to reduce the number of sets we must compute from $n p$ to just $p$. We make further
improvements and describe the following algorithms in Section~\ref{sec:impro}. 


\section{Algorithms}
\label{sec:impro}

In this section, we give a geometric result that allows us
to introduce an additional element of
randomisation. By next translating the geometric objects defined in Section~\ref{sec:theorem} into
an algebraic setting, we give an algebraic proof that the set of asymptotic critical values has
codimension at least $1$. Furthermore, we define an algorithm of algebraic elimination for computing
a finite list of polynomials whose zero set contains the asymptotic critical values of the input
dominant polynomial mapping. With the geometric result, this algorithm only needs to introduce $p+1$
new indeterminates, rather than introducing $n+1$ indeterminates as in the algorithm one derives from
Lemma~\ref{lem:ksat}.

In addition to this, by making the relation to some incidence varieties, we reduce the number of
introduced variables to just $p$, one for each value of our polynomial mapping. These reductions
undeniably come with great complexity improvements and more efficient
algorithms, particularly in the special case where $p=1$. This second algorithm also provides
an avenue to tighten the degree bound on
the set of asymptotic critical values, which is investigated in Section~\ref{sec:degree}. The
complexity of the two algorithms described in the section is analysed in Section~\ref{sec:comp}. 

\subsection{Geometric result}
We start with a geometric proposition involving the following objects:

Let $\Gamma$ be a subspace of $\K^N$ of dimension $\theta$, for some $\theta < N$. Then, define the
set $\mathcal{E} \subset \G_{N-\theta+1}(\K^N)$ to be the subset of the Grassmannian of subspaces of
dimension $N - \theta + 1$ such that the projection of every subspace $E \in \mathcal{E}$ onto
$\Gamma$ has dimension $1$. 

\begin{proposition}\label{prop:e}
  Let $W$ be a set of dimension $\alpha \geq \theta$ so that the Zariski closure of
  $W$, $V = \overline{W}$, is equidimensional in an ambient space of dimension $N$ and let $\Gamma$ and
  $\mathcal{E}$ be defined as above. Suppose there exists a hypersurface $Z$ such that
  $Z \cap W = \emptyset$ and that $V \setminus Z = W$. Suppose the projection of $V$ onto $\Gamma$ has
  dimension $\theta$. Then, there exists a proper non-empty Zariski open subset $\cOE \subset
  \mathcal{E}$ so that for $E \in \cOE$, $\overline{W \cap E} = V \cap E$.
\end{proposition}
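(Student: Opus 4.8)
The plan is to turn the statement into a dimension count. Write $Y = V \cap Z$. The hypotheses $Z \cap W = \emptyset$ and $V \setminus Z = W$ say exactly that $W$ is a dense Zariski open subset of $V = \overline{W}$; in particular no irreducible component of $V$ is contained in $Z$, so, $V$ being equidimensional of dimension $\alpha$, every irreducible component of $Y$ has dimension at most $\alpha - 1$. For any $E \in \mathcal{E}$ the inclusion $W \cap E \subseteq V \cap E$ gives $\overline{W \cap E} \subseteq V \cap E$ for free (the right-hand side is closed), and since $W \cap E$ is open in $V \cap E$, equality holds if and only if every irreducible component of $V \cap E$ meets $W$, equivalently if and only if no irreducible component of $V \cap E$ is contained in $Z$, i.e. in $Y$. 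So the task is to find a non-empty Zariski open subset $\cOE \subseteq \mathcal{E}$ over which $\dim (Y \cap E)$ is strictly smaller than the dimension of every irreducible component of $V \cap E$.

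First I would record the easy half: each $E \in \mathcal{E}$ is a linear subspace of $\K^N$ of codimension $\theta - 1$, so by Krull's principal ideal theorem, applied inside each irreducible component of $V$, every component of $V \cap E$ has dimension at least $\alpha - \theta + 1$. It therefore suffices to produce $\cOE$ with $\dim(Y \cap E) \le \alpha - \theta$ for $E \in \cOE$. For this I would show that a generic member of $\mathcal{E}$ cuts both $V$ and $Y$ in the expected dimension. Concretely, form the incidence variety $\Phi = \overline{\{ (v, E) \in V \times \mathcal{E} \mid v \in E \}}$ with its projections $p_1 : \Phi \to V$ and $p_2 : \Phi \to \mathcal{E}$, whose fibre over $E$ is $V \cap E$, and likewise the incidence variety $\Phi_Y$ built from $Y$ in place of $V$. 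The fibres of $p_1$ are the conditions $E \ni v$, of the expected codimension inside $\mathcal{E}$, which fixes $\dim \Phi$ and $\dim \Phi_Y$; the hypothesis that the projection of $V$ onto $\Gamma$ has dimension $\theta = \dim \Gamma$ is what guarantees that $p_2$, and its analogue for $Y$, is dominant, so that the theorem on the dimension of the fibres of a dominant morphism applies. Together with $\dim V = \alpha$ and $\dim Y \le \alpha - 1$, this produces a non-empty Zariski open $\cOE \subseteq \mathcal{E}$ over which $V \cap E$ is equidimensional of dimension exactly $\alpha - \theta + 1$ while $\dim(Y \cap E) \le \alpha - \theta$. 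Since $\alpha - \theta < \alpha - \theta + 1$, no component of $V \cap E$ can lie inside $Y$, hence $\overline{W \cap E} = V \cap E$ for $E \in \cOE$; here the assumption $\alpha \ge \theta$ is what makes these dimensions meaningful and $V \cap E$ non-empty.

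I expect the delicate step to be the claim that $p_2$, and the corresponding map for $Y$, is dominant — that a generic member of $\mathcal{E}$ genuinely cuts $V$ and $Y$ in the expected dimension rather than in an excess one. Since $\mathcal{E}$ is carved out of the Grassmannian $\G_{N-\theta+1}(\K^N)$ by the constraint on the projection onto $\Gamma$, one cannot simply invoke a generic-linear-section (Bertini-type) statement; one must use $\dim \overline{\pi_\Gamma(V)} = \theta$, together with $\alpha \ge \theta$ and the equidimensionality of $V$, to rule out the degenerate configurations in which $V$, or one of its components, lies in special position relative to the subspaces that every member of $\mathcal{E}$ is forced to contain. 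Checking that the bad loci coming out of the fibre-dimension theorem are proper closed subsets of $\mathcal{E}$, rather than all of $\mathcal{E}$, is the technical heart of the proof; everything else is bookkeeping with dimensions.
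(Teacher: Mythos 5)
Your reduction is sound and coincides with the way the paper finishes: $\overline{W\cap E}=V\cap E$ exactly when no irreducible component of $V\cap E$ lies in $Z$, and Krull's theorem does give, for every $E\in\mathcal{E}$, the lower bound $\alpha-\theta+1$ on the dimension of each component of $V\cap E$, since $E$ has codimension $\theta-1$. The gap is the step you yourself defer as ``the technical heart'': the claim that for generic $E\in\mathcal{E}$ one has $\dim\bigl((V\cap Z)\cap E\bigr)\le\alpha-\theta$, which you propose to extract from an incidence variety whose first projection has fibres ``of the expected codimension''. That is precisely where the argument breaks, and it is not bookkeeping. Every $E\in\mathcal{E}$ has dimension $N-\theta+1$ and one-dimensional image in $\Gamma$, hence contains the whole centre of projection $\pi_\Gamma^{-1}(0)$, a fixed subspace of dimension $N-\theta$; consequently $\mathcal{E}$ is (an open subset of) $\mathbb{P}(\Gamma)$, of dimension $\theta-1$, and the fibre of $p_1$ over any point $v$ with $\pi_\Gamma(v)=0$ is all of $\mathcal{E}$ rather than a subvariety of codimension $\theta-1$. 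So the fibre-dimension count gives no control over the part of $Y=V\cap Z$ lying over $0\in\Gamma$: that part sits inside every $E$, and the hypothesis $\dim\pi_\Gamma(V)=\theta$ constrains $V$, not $Y$, so it cannot rule this configuration out.

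To see that what you postponed is a genuine obstruction rather than a verification, take $N=3$ with coordinates $(x,u_1,u_2)$, $\Gamma$ the $(u_1,u_2)$-plane (so $\theta=2$ and $\mathcal{E}$ consists of the planes $\pi_\Gamma^{-1}(\ell)$, $\ell$ a line in $\Gamma$), $V=\V(u_2-xu_1)$ (irreducible of dimension $\alpha=2$, dominating $\Gamma$), $Z=\V(u_1)$ and $W=V\setminus Z$. All stated hypotheses hold, yet $Y=V\cap Z$ is the $x$-axis $=\pi_\Gamma^{-1}(0)$, it is contained in every $E\in\mathcal{E}$, and for every $E$ it is an irreducible component of $V\cap E$ lying in $Z$; hence both the bound $\dim(Y\cap E)\le\alpha-\theta$ and the desired equality $\overline{W\cap E}=V\cap E$ fail for all $E$. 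So the deferred claim cannot be closed by dimension counting from the stated hypotheses alone; one needs additional control of $Y$ (equivalently of $V\cap E$) over the centre of the projection, which is implicitly what the situation of the application provides. For comparison, the paper handles exactly this step by invoking Thom's transversality theorem twice — a generic $E\in\mathcal{E}$ is taken transverse to $V$ and to $F=Z\cap V$, giving $\dim(V\cap E)=\alpha-\theta+1$ and $\dim(F\cap E)=\alpha-\theta$ — and then concludes with the same component argument as yours; your proposal reproduces that frame but leaves precisely its content unproved.
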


\begin{proof}[Proof of Proposition~\ref{prop:e}]
  For any subspace $E$, $V \cap E$ is an algebraic set containing $W \cap E$.
  Therefore, we have that $\overline{W \cap E} \subset V \cap E$ and so
  it remains to show that $V \cap E \subset \overline{W \cap E}$. 
  
  By definition, the projection of $V$ onto the subspace $\Gamma$ is dominant. Then, by Thom's
  transversality theorem~\cite[page~67]{GVP} and by the definition of the set $\mathcal{E}$,
  a generic element $E$ of $\mathcal{E}$ intersects $V$ transversally.
  Thus, there exists a proper non-empty Zariski open subset
  $\cO_1 \subset \mathcal{E}$ so that for all $E \in \cO_1$, the intersection of $V$
  and $E$ is transverse. Since $V$ is an equidimensional variety, by the genericity of $E$, $V\cap E$
  is also equidimensional. By~\cite[Theorem 1.24]{IS}, the intersection $V \cap E$ has dimension
  $\alpha - \theta + 1$ which is at least
  $1$. Let $F=Z\cap V$, then $F$ has codimension $1$ in $V$.
  By another application of Thom's transversality theorem~\cite[page~67]{GVP}, there exists a 
  proper non-empty Zariski open subset $\cO_2 \subset \mathcal{E}$ so that for all $E \in \cO_2$,
  the intersection of $E$ and $F$ is transverse. Let $\cOE = \cO_1 \cap \cO_2$, a proper non-empty
  Zariski open subset of $\mathcal{E}$. Therefore, for all $E \in \cOE$, $E \cap F$ has dimension
  $\alpha - \theta$.
  Now, let $E \in \cOE$. Then, for all of the finitely many irreducible components $U$ of $V \cap E$ we
  want to show that $U \subset \overline{W \cap E}$. Let $U$ be one such irreducible component.
  Note that the dimension of $U$ is $\alpha - \theta + 1$ since $V \cap E$ is equidimensional.
  Furthermore, we have $\dim(U \cap F) < \alpha - \theta + 1$. This implies
  that $\overline{U \setminus F} = U$. Combining this with \[U \setminus F \subset
    (V \cap E) \setminus F = W \cap E,\] we find
  by taking the Zariski closure that $U \subset \overline{W\cap E}$. Since this holds for all
  irreducible components of $V \cap E$, we conclude that $V \cap E \subset \overline{W \cap E}$.
\end{proof}

\subsection{Algebraic description of asymptotic critical values}\label{sec:alg}

In this subsection we translate the objects $M_s^j(\f)$, $L_s^j(\f)$ and $K_s^j(\f)$, defined as in
Notation~\ref{not:ksat}, to an algebraic setting that will allow the use of algebraic elimination
algorithms. To this end, we must derive polynomials from which we can give varieties that are equal to
these sets.

Let $\f = (f_1, \dots, f_p) \in \K[\bz]^p$ be a dominant polynomial mapping. Let
$\cOA \subset \GL_n(\K)$ be a non-empty Zariski open subset such that any $A \in \cOA$
satisfies the genericity requirements of Lemma~\ref{lem:ksat}.

For each choice of $j$ we aim to compute a representation of $K_1^j(\f^A)$.
Let $\Lambda = \G_{n-p+1}(\C^n \times 0)$ and recall the definition of $K_1^j(\f^A)$,
\[
  K_1^j(\f^A) = \pi\left(L_1^j(\f^A)\right) = \pi\left(\overline{\gr M_1^j(\f^A)} \cap (\{z \in \C^n | z_1 = 0\} \times \C^p \times \Lambda)\right).
\]
Consequently, the first step must be to compute a representation of
the graph of $M_1^j(\f^A)$. Let $W_1^j(z)$ denote the graph of the
map defined by $z \mapsto z_1 w_j(z)$, then $M_1^j(\f^A)$ is defined by
\[
  M_1^j(\f^A) = (\f^A(\tau_1(z)), W_1^j(\tau_1(z))).
\]
Denote by $\mathcal{E}$ the set of $(n+p+1)$-dimensional subspaces that are defined by the set of
equations $u_1 - r_1 e = \dots = u_{n-p+1} - r_{n-p+1} e = 0$, where $e$ is an indeterminate, for
$r_1, \dots, r_{n-p+1} \in \K$. Additionally, denote by $\br$ the numbers $r_1, \dots, r_{n-p+1}$.
  
\begin{corollary}\label{cor:msj}
  Let $\f = (f_1, \dots, f_p) \in \K[\bz]^p$ be a dominant polynomial mapping.
  Let $M_1^j(\f^A)$ be defined as in Notation~\ref{not:ksat} with $A$ chosen to satisfy the genericity
  assumption of Lemma~\ref{lem:ksat}. Let $\br \in \K$ be chosen so that
  the subspace $E\in\mathcal{E}$ they define satisfies the genericity condition of
  Proposition~\ref{prop:e}.
  Then, there exists a polynomial tuple $(g_1, \dots, g_{n+1})$ and
  polynomial $h$ with entries in the polynomial ring
  $\K[\bz, \bc, e]$ such that,
  \[
    M_1^j(\f^A) = \overline{\V(g_1, \dots, g_{n+1}) \setminus \V(h)},
  \]
  \[
    L_1^j(\f^A) = \overline{\V(g_1, \dots, g_{n+1}) \setminus \V(h)} \cap
    \V(z_1, e).
  \]
  Furthermore, the dimension of $\V(g_1, \dots, g_{n+1})$ is $p$.
\end{corollary}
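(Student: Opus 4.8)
The plan is to produce the polynomials $g_1,\dots,g_{n+1}$ and $h$ by writing down explicit equations for the three successive geometric constructions in Notation~\ref{not:ksat} — applying $\tau_1$, forming the graph $W_1^j$, and enforcing $E\in\cOE$ — and then to compute the dimension by transferring all genericity to the already-established statements of Lemma~\ref{lem:ksat} and Proposition~\ref{prop:e}. First I would handle the coordinate change $\tau_1$: substituting $\tau_1(z)$ into $f_i$ produces a rational function whose denominator is a power of $z_1$, so clearing denominators yields polynomials $g_i=z_1^{d_i}\bigl(c_i-f_i(\tau_1(z))\bigr)\in\K[\bz,\bc]$ for $i=1,\dots,p$, and the locus $\V(z_1)$ where $\tau_1$ is undefined is exactly what $h$ (a suitable power of $z_1$, possibly multiplied by the denominators appearing in $w_j$) will saturate out. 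For the graph of $z\mapsto z_1 w_j(z)$ in $\G_{n-p+1}(\C^n\times\C)$: since $w_j(z)$ is the restriction of $\d f_j$ to $\ker\jac(\f^{[j]})(z)$, the graph is an $(n-p+1)$-plane in $\C^n\times\C$, and I would encode a point of it as a solution of the $p-1$ linear equations $\jac(\f^{[j]})(\tau_1(z))\cdot\bw=0$ together with the single equation $\langle\d f_j(\tau_1(z)),\bw\rangle - z_1^{-1}\cdot(\text{value coordinate})=0$ — again clearing $z_1$-denominators. The precise bookkeeping here is routine but must be done carefully so that, off $\V(h)$, the variety cut out is \emph{exactly} the graph and not some larger set picking up components in $\{z_1=0\}$ or where $\jac(\f^{[j]})$ drops rank; this is where the dominance hypothesis (inherited via $\f^A$ from Lemma~\ref{lem:ksat}) is used, since it guarantees $\ker\jac(\f^{[j]})$ has the generic dimension $n-p+1$ outside a proper closed set that we absorb into $h$.

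Next I would incorporate the subspace $E\in\mathcal{E}$. By definition $E$ is cut out by $u_\ell - r_\ell e=0$ for $\ell=1,\dots,n-p+1$, so intersecting with $E$ amounts to substituting $u_\ell = r_\ell e$ into the graph equations; this substitution both eliminates the $u_\ell$ and introduces the single new indeterminate $e$, which is why the count lands on $n+1$ polynomials in $\K[\bz,\bc,e]$ rather than more. Writing $M_1^j(\f^A)$ for the ambient graph and $M_1^j(\f^A)\cap E$ abusively again as the same symbol (matching the paper's convention in the statement), the identity $M_1^j(\f^A)=\overline{\V(g_1,\dots,g_{n+1})\setminus\V(h)}$ then follows from the two preceding paragraphs together with the fact that taking Zariski closure commutes with the saturation we performed. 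For $L_1^j(\f^A)$, I invoke the definition directly: $\Lambda=\G_{n-p+1}(\C^n\times 0)$ corresponds, in our chart on $\mathcal{E}$, to the vanishing of the value coordinate, i.e.\ to $e=0$ (once $u_\ell=r_\ell e$), and $\{z_1=0\}$ is the other defining equation, so $L_1^j(\f^A)=\overline{\gr M_1^j(\f^A)}\cap(\{z_1=0\}\times\C^p\times\Lambda)$ becomes precisely $\overline{\V(g_1,\dots,g_{n+1})\setminus\V(h)}\cap\V(z_1,e)$.

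Finally, for the dimension claim: $\dim\V(g_1,\dots,g_{n+1})=p$. The Zariski closure of $M_1^j(\f^A)$ is the image-closure of a rational map defined on a dense open subset of $\K^n$, so it has dimension $n$; intersecting transversally with the codimension-$(n-p)$ subspace $E$ — which is legitimate because $\br$ was chosen so that $E\in\cOE$ satisfies Proposition~\ref{prop:e}, and the hypothesis there that the projection onto $\Gamma$ has full dimension $\theta$ is exactly the condition that lets the transversality/dimension count of that proposition apply here with $\theta=n-p$ — cuts the dimension down by $n-p$, leaving $n-(n-p)=p$. Here $W$ is the locally closed graph $\gr M_1^j(\f^A)$, $V$ its closure, and $Z=\V(h)$ the hypersurface with $Z\cap W=\emptyset$ and $V\setminus Z=W$, so the hypotheses of Proposition~\ref{prop:e} are met and $\overline{W\cap E}=V\cap E$ has dimension $p$. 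The main obstacle I anticipate is the careful construction of $h$: one must exhibit a \emph{single} polynomial whose vanishing locus simultaneously contains $\{z_1=0\}$, the locus where $\jac(\f^{[j]})$ fails to have rank $p-1$, and any denominators introduced by expressing $w_j$ via a basis of the kernel (as in Example~\ref{ex:msj}), while still ensuring that on its complement the listed equations cut out the graph scheme-theoretically and with the correct closure — the rest is essentially a transcription of the two lemmas already proved.
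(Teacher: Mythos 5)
Your overall strategy (coordinatise the Grassmannian datum by value variables, identify $\Lambda$ with their common vanishing, rewrite that as $E\cap\V(e)$, and get the dimension from Proposition~\ref{prop:e}) is the paper's strategy, but the central constructive step — which is the actual content of this corollary — is missing, and the encoding you do sketch would not deliver it. The paper produces the tuple and $h$ explicitly: it computes a basis $B$ of $\ker\jac((\f^A)^{[j]})$ by the generalised Cramer rule, so that the entries are rational functions with \emph{common} denominator $\delta$, the determinant of the first $p-1$ columns of $\jac((\f^A)^{[j]})$; it sets $v_i=\d f_j^A\,B_i$, records the plane $W_1^j$ through the $n-p+1$ values $u_i=z_1v_i$, notes that $\Lambda$ corresponds to $\V(u_1,\dots,u_{n-p+1})=E\cap\V(e)$ (this needs the $r_i$ nonzero), and after substituting $u_i=r_ie$ obtains exactly $n+1$ polynomials in $\K[\bz,\bc,e]$ together with $h=z_1\numer(\delta(\tau_1(z)))$. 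Your first paragraph instead proposes an incidence-type description with auxiliary point variables $\bw$ satisfying $\jac((\f^A)^{[j]})(\tau_1(z))\,\bw=0$ and a single value coordinate. That does not coordinatise the Grassmannian point: setting the value coordinate to zero on the incidence variety only says the linear form vanishes at \emph{some} point of the kernel (e.g.\ $\bw=0$), not that the limiting plane is the graph of the zero map, so the intersection with $\Lambda$ is not captured; it also gives the wrong count (extra variables $\bw$, only about $p$ equations), so the claimed $n+1$ polynomials and the single $h$ never appear. Your second paragraph then silently switches to the $u_\ell$ coordinates, which presuppose exactly the kernel-basis construction you never carry out, and you yourself defer "the careful construction of $h$" as anticipated bookkeeping — but that construction, via the common denominator $\delta$, is precisely the proof.

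Two further points need attention. To invoke Proposition~\ref{prop:e} you must verify its hypothesis that the projection of $V=\overline{\gr M_1^j(\f^A)}$ onto $\Gamma$ (the $(u_1,\dots,u_{n-p+1})$-space) has full dimension $\theta$; this is a property of $V$, not something ensured by choosing $\br$ generic, and the paper derives it from the dominance of $\f^A$ — your write-up conflates it with the genericity of $E$. Also your bookkeeping has an off-by-one: $\Gamma$ has dimension $\theta=n-p+1$ and $E$ is cut out by $n-p+1$ equations (so has codimension $n-p+1$ once $e$ is adjoined), not $n-p$; the dimension $p$ comes out of Proposition~\ref{prop:e} as $\alpha-\theta+1=n-(n-p+1)+1=p$, which agrees with your conclusion but only after the hypotheses above are actually checked.
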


\begin{proof}[Proof of Corollary~\ref{cor:msj}]
  Since $\f$ is a dominant polynomial mapping, it is clear that $\f^A$ is also dominant.
  Thus, $M_1^j(\f^A)$ is well-defined outside of a nowhere dense algebraic set. We must derive
  equations defining this set in the algorithm. Firstly, we must remove the set $\V(z_1)$.
  Furthermore, we require that $W_1^j(\tau_1(z))$ be of
  dimension $n-p+1$ to be an element of the Grassmannian $\G_{n-p+1}(\C^n \times \C)$. 
  Recall that $w_j(z)$ is the restriction of $\d f_j^A$ to the kernel of the Jacobian matrix of $\f$
  with the $j$th row removed. Then, excluding $z_1 = 0$, $M_1^j(\f^A)$ is well-defined so long as the
  determinant of the submatrix given by the first $p-1$ columns of the Jacobian matrix
  $\jac((\f^A)^{[j]})$ is not $0$. We denote this determinant $\delta(z)$. 

  Next, we compute the Jacobian matrix $\jac((\f^A)^{[j]})$ and a basis $B$ for its null space.
  We can accomplish this by evaluation interpolation techniques using a Kronecker substitution to
  reduce the problem to the univariate case. The details of this, along with the complexity analysis,
  is given in Section~\ref{sec:comp}.

  Since $\jac((\f^A)^{[j]})$ has rank $p-1$, the basis $B$ consists of $n-p+1$ vectors, each with $n$
  rational functions as entries. The denominators of these functions describe an algebraic set where
  the function $M_s^j(f^A)$
  is not defined. By~\cite{gencramerrule}, $\delta(z)$ is the common denominator of these functions.
  Define $v_1(z), \dots, v_{n-p+1}(z)$ to be such that $v_i(z)$ is the product of the gradient of
  $f_j^A$ with the $i$th element of the basis $B$. Let $\lambda(t) \in \K^n$ be a path such that
  \[\lambda_1(t) v_1(\lambda(t)) = \dots = \lambda_1(t) v_{n-p+1}(\lambda(t)) \to 0 \text{ as } t \to \infty\]
  then, we have that $W_1^j(\lambda(t)) \to W$, for some $W \in \Lambda$. Then, apply the
  transformation $\tau_1$ to obtain the rational mappings
  $(\tau_1(z))_1 v_1(\tau_1(z)), \dots, (\tau_1(z))_1 v_{n-p+1}((\tau_1(z))_1)$.
  As discussed above, this is well-defined for
  $z \in \K^n \setminus \V(z_1\numer(\delta(\tau_1(z))))$.
  Consider the graph, $\mathscr{G}$, of these rational mappings with value variables
  $u_1, \dots, u_{n-p+1}$. We see that the intersection with the variety $\V(u_1, \dots, u_{n-p+1})$
  captures the points where the linear map $w_j(z)$ is identically the zero map. Therefore, this gives an algebraic
  version of the intersection with the space $\Lambda$.
  
  We can now define the polynomial tuple that gives our representation of the graph
  of $M_1^j(\f^A)$. To do so, we introduce independent variables $\bc$ and
  $u_1, \dots, u_{n-p+1}$. Define the polynomial tuple,
  \[
    N = (f_1^A - c_1, \dots, f_p^A - c_p, z_1 v_1 - u_1, \dots, z_1 v_{n-p+1} - u_{n-p+1}).
  \]
  Then, applying the transformation $\tau_1$ and taking the numerators of the resulting functions we
  have,
  \[
    \overline{\gr M_1^j(\f^A)} = \overline{\V(\numer(N(\tau_1(z)))) \setminus \V(z_1\numer(\delta(\tau_1(z))))}.
  \]

  From the above, it is now clear that the intersection with $\Lambda$ is accomplished by intersecting
  with the variety $\V(u_1, \dots, u_{n-p+1})$. Therefore,
  \[ L_1^j(\f^A)=\overline{\V(\numer(N(\tau_1(z)))) \setminus \V(z_1\numer(\delta(\tau_1(z))))} \cap \V(z_1, u_1, \dots, u_{n-p+1}). \]  
  
  It is clear that for all $E$ where $\br$ are all
  non-zero  we have \[\V(u_1, \dots, u_{n-p+1}) = E \cap \V(e).\] Hence, we may write: 
  \[ L_1^j(\f^A)=\overline{\V(\numer(N(\tau_1(z)))) \setminus \V(z_1\numer(\delta(\tau_1(z))))} \cap
    E \cap \V(z_1, e). \] 
   
  Note that $\mathcal{E}$ satisfies the conditions of Proposition~\ref{prop:e} with $\Gamma$ as the
  $(u_1, \dots, u_{n-p+1})$-space and recall that $E$ satisfies the genericity condition. 
  Furthermore, the Zariski closure of the graph of $M_1^j(\f^A)$ is an
  equidimensional variety of dimension $n$ in an ambient space of dimension $2n+1$ and is such that
  \[
    \overline{\gr M_1^j(\f^A)} \setminus \V(z_1\numer(\delta(\tau_1(z)))) = \gr M_1^j(\f^A).  
  \]
  Since $\f$ is a dominant mapping, so is $\f^A$. Therefore, the projection of the Zariski
  closure of the graph of $M_1^j$ onto the $(u_1, \dots, u_{n-p+1})$-space is dominant. Thus, we may
  apply Proposition~\ref{prop:e}:
  \[
    \overline{\gr M_1^j(\f^A) \cap E} = \overline{\gr M_1^j(\f^A)} \cap E.
  \]
  Furthermore, the intersection $\overline{\gr M_1^j(\f^A)} \cap E$ has dimension $p$.
  By~\cite[Theorem 4.3.4]{CLO}, since the polynomials \[
    u_1 - r_1e, \dots, u_{n-p+1} - r_{n-p+1}e
    \]are unaffected by the transformation $\tau_1$,
  for $1 \leq k \leq n-p+1$ we may replace $u_k$ by $r_k e$. Thus, we arrive at the tuple
  \[
    \tilde{N} = (f_1^A - c_1, \dots, f_p^A - c_p, z_1 v_1(z) - r_1 e, \dots, z_1 v_{n-p+1} - r_{n-p+1} e),
  \]
  so that
  \[
    \overline{\gr M_1^j(\f^A)} = \overline{\V(\numer(\tilde{N}(\tau_1(z)))) \setminus \V(z_1\numer(\delta(\tau_1(z))))}.
  \]
  We find that the polynomial tuple $\tilde{N}$ and polynomial $z_1\numer(\delta(\tau_1(z)))$ satisfy
  the statement.  
\end{proof}

Now that polynomials whose zero set contains the
asymptotic critical values have been described, we may give
an algebraic proof that the set of asymptotic critical values is contained in a set of codimension
at least one. To do this, we will show that the sets we aim to derive in the algorithms in this section
have codimension at least one. A critical fact meaning that the algorithms output will be a finite
list of polynomials.

\begin{corollary}\label{cor:dim}
  Let $\f = (f_1, \dots, f_p) \in \K[\bz]^p$ be a dominant polynomial mapping. Let $1 \leq j \leq p$
  and let $K_1^j(\f) \in \C^p$ be defined as above. Then, $\dim(K_1^j(\f)) \leq p-1$. Furthermore,
  $K_\infty(\f)$ has codimension at least $1$ in $\C^p$.
\end{corollary}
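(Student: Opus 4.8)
\textbf{Proof proposal for Corollary~\ref{cor:dim}.}

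The plan is to reduce the dimension bound on $K_1^j(\f)$ to a bound on the dimension of $L_1^j(\f^A)$, since $K_1^j(\f^A) = \pi(L_1^j(\f^A))$ and projections do not increase dimension, and recall from Lemma~\ref{lem:ksat} that $K_\infty(\f) \subseteq \bigcup_{j=1}^p K_1^j(\f^A)$ for generic $A$, so a bound of $p-1$ on each piece yields $\dim K_\infty(\f) \le p-1$, i.e.\ codimension at least $1$ in $\C^p$. The key point is to control the dimension of $L_1^j(\f^A)$ using the algebraic description furnished by Corollary~\ref{cor:msj} together with the intersection estimate from Proposition~\ref{prop:e}.

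First I would invoke Corollary~\ref{cor:msj}: for generic $A$ and generic $\br$, we have
\[
  L_1^j(\f^A) = \overline{\gr M_1^j(\f^A)} \cap E \cap \V(z_1, e),
\]
where $E \in \cOE$ satisfies the genericity condition of Proposition~\ref{prop:e}, and moreover the intermediate intersection $V \cap E := \overline{\gr M_1^j(\f^A)} \cap E$ is equidimensional of dimension $p$ (this is exactly the last assertion of Corollary~\ref{cor:msj}, obtained by applying Proposition~\ref{prop:e} with $W = \gr M_1^j(\f^A)$, $N = 2n+1$, $\theta = n - p + 1$, and $\alpha = n$, giving intersection dimension $\alpha - \theta + 1 = p$). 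So it remains to show that intersecting the $p$-dimensional set $V \cap E$ with the two further hyperplanes $\{z_1 = 0\}$ and $\{e = 0\}$ drops the dimension by at least $1$ in total, i.e.\ that $\dim L_1^j(\f^A) \le p - 1$.

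The heart of the argument — and the step I expect to be the main obstacle — is showing that $V \cap E$ is \emph{not} contained in the hyperplane $\{e = 0\}$ (equivalently $\{z_1 = 0\}$ after the $\tau_1$ change of coordinates, which sends $z_1 = 0$ to $\infty$): if that is established, then $\V(e) \cap (V \cap E)$ is a proper closed subset of the equidimensional set $V \cap E$, hence has dimension at most $p - 1$, and intersecting further with $\V(z_1)$ only decreases this. To prove $e$ does not vanish identically on $V \cap E$, I would argue that the graph $\gr M_1^j(\f^A)$ lives over the locus where $\tau_1$ is defined, i.e.\ where the original coordinate $z_1 \ne 0$, which after the coordinate change is exactly $e \ne 0$ (here $e$ plays the role of the new $\tfrac{1}{z_1}$-type coordinate via the relations $u_k = r_k e$); since $\gr M_1^j(\f^A)$ is dense in $V$ and, by the transversality built into $E \in \cOE$, its intersection with $E$ is dense in $V \cap E$, the function $e$ cannot be identically zero on $V \cap E$. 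One then concludes $\dim L_1^j(\f^A) \le p - 1$, hence $\dim K_1^j(\f) = \dim K_1^j(\f^A) \le p - 1$, and finally by Lemma~\ref{lem:ksat},
\[
  \dim K_\infty(\f) = \dim K_\infty(\f^A) \le \max_{1 \le j \le p} \dim K_1^j(\f^A) \le p - 1,
\]
so $K_\infty(\f)$ has codimension at least $1$ in $\C^p$. A subtlety to handle carefully is that $K_1^j(\f)$ as stated refers to $\f$, not $\f^A$; I would note that $K_s^j$ is defined for any dominant mapping and the bound $\dim K_1^j \le p-1$ applies verbatim to $\f$ itself (one may take $A$ to be the identity when it happens to satisfy the genericity hypotheses, or simply observe that the dimension bound is intrinsic and does not depend on the choice of $A$), while the containment statement for $K_\infty(\f)$ genuinely uses a generic $A$ via Lemma~\ref{lem:ksat}.
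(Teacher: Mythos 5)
Your overall scaffolding (bound $\dim L_1^j(\f^A)$ first, use that projections do not increase dimension, then conclude for $K_\infty(\f)$ via Lemma~\ref{lem:ksat}) coincides with the paper's, but the step you yourself identify as the heart of the argument has a genuine gap, and it aims at the wrong hyperplane. The variable $e$ is not a stand-in for the locus where $\tau_1$ is defined: that locus is $\{z_1 \neq 0\}$ in the new coordinates, whereas $e$ is the scaling parameter of the value vector $(u_1,\dots,u_{n-p+1})$ along the generic direction $\br$, so $\{e=0\}$ encodes the degeneration of the rescaled map $z_1 w_j$ to the zero map --- a condition independent of $z_1=0$ (this is precisely why $L_1^j$ is cut by $\V(z_1,e)$, both conditions imposed separately). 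Hence your justification that $e$ cannot vanish identically on $V\cap E$ (``the graph lives where $\tau_1$ is defined, i.e.\ where $e\neq 0$'') is invalid, and the claim itself can fail: points of the graph lying over the locus where $w_j$ vanishes identically produce pieces of $\overline{\gr M_1^j(\f^A)}\cap E$ contained in $\{e=0\}$ but not in $\V(z_1)$, and these can form irreducible components of dimension $p$ (already for $p=1$, $f=(z_1^2+z_2^2)^2$, whose critical locus is a hypersurface, gives such components, and no choice of $\br$ or of the generic matrix $A$ removes them since $\{u_1=\dots=u_{n-p+1}=0\}\subset E$ for every $\br$). On such a component your cut by $\V(e)$ yields no dimension drop, and you never argue that the cut by $\V(z_1)$ does, so $\dim L_1^j(\f^A)\le p-1$ is not established. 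A smaller point in the same step: a proper closed subset of an equidimensional set has strictly smaller dimension only if it contains no irreducible component, so the non-vanishing would in any case have to be argued on every component, not on the union.

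The paper extracts the dimension drop from $z_1$ rather than from $e$: in the description $\overline{\V(g_1,\dots,g_{n+1})\setminus\V(h)}$ of Corollary~\ref{cor:msj}, the polynomial $h$ has $z_1$ as a factor, so no irreducible component of that closure lies inside $\V(z_1)$; intersecting with $\V(z_1)$ therefore already reduces the dimension from $p$ to at most $p-1$, and the further intersection with $\V(e)$ can only decrease it. Your argument is repaired by replacing your claim about $e$ with exactly this observation about $z_1$. The remaining parts of your proposal --- projection onto the $\bc$-space not increasing dimension, the finite union over $j$ via Lemma~\ref{lem:ksat}, and your remark on the mismatch between $\f$ and $\f^A$ in the statement (which the paper itself glosses over) --- are consistent with the paper's proof.
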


\begin{proof}[Proof of Corollary~\ref{cor:dim}]
  By Corollary~\ref{cor:msj}, there exists a tuple of polynomials $(g_1, \dots, g_{n+1}) \in \K[\bz,\bc,e]^{n+1}$ and polynomial $h \in \K[\bz, \bc, e]$ such that for $A\in\GL_n(\K)$
  satisfying the genericity condition of Lemma~\ref{lem:ksat},
  \[
    \overline{\gr M_1^j(\f^A)} = \overline{\V(g_1, \dots, g_{n+1}) \setminus \V(h)},
  \]
  \[
    L_1^j(\f^A) = \overline{\V(g_1, \dots, g_{n+1}) \setminus \V(h)} \cap
    \V(z_1, e),
  \]
  where $\V(g_1, \dots, g_{n+1})$ has dimension $p$. Furthermore, by the proof of
  Corollary~\ref{cor:msj}, $z_1$ is a factor of $h$. Therefore, $z_1$ is not identically
  zero at any irreducible component of $\overline{\V(g_1, \dots, g_{n+1}) \setminus \V(h)}$.
  By~\cite[Theorem 1.24]{IS}, the intersection with $\V(z_1)$ to derive $L_1^j(\f^A)$ necessarily
  reduces the dimension by $1$. Hence, $\dim(L_1^j(\f^A)) \leq p-1$. Recall that
  $K_1^j(\f) = \pi(L_1^j(\f))$ where $\pi$ is the projection map onto the $\bc$-space.
  Since the projection cannot increase the dimension~\cite[Theorem 1.25]{IS},
  $\dim(K_1^j(\f)) \leq p-1$.

  Furthermore, by Lemma~\ref{lem:ksat} for such a matrix $A\in\GL_n(\K)$ we have,
  \[K_\infty(\f) \subseteq \bigcup_{j=1}^p K_1^j(\f^A).\]
  Since the above holds for all $j$, we have that
  the finite union $\bigcup_{j=1}^p K_1^j(\f^A)$ has dimension at most $p-1$. Thus,
  $K_\infty(\f)$ has codimension at least $1$ in $\C^p$.
\end{proof}

The algorithms described in this section use the above results and constructions
to compute the asymptotic critical values of a dominant polynomial mapping using
algebraic methods. To present the following algorithms we introduce some functions
and subroutines that will feature in our algorithms. 

\subsection{Subroutines}\label{sec:subroutines}

We introduce 3 subroutines that will be used across all the algorithms
featured in this article.

\begin{description}
\item[$\Eliminate(P,\bv,\bw)$:]\ 
  \begin{description}
  \item[Input:] \textsf{$P$, a finite basis of an ideal, $I$, of a polynomial
    ring (with base field $\K$ and two lists of indeterminates, $\bv$
    and $\bw$) which we denote $\K[\bv, \bw]$.}
  \item[Output:] \textsf{$E$, a finite basis of the ideal $I \cap \K[\bw]$}.
  \end{description}

\item[$\Intersect(P_1, \ldots, P_k)$:]\
  \begin{description}
  \item[Input:] \textsf{$P_1, \dots, P_k$, finite bases of ideals,
    $I_1, \dots, I_k$, of a polynomial ring.}
  \item[Output:] \textsf{$P$, a finite basis of the ideal
    $\bigcap_{i=1}^k I_i$.}
  \end{description}

\item[$\Saturate(P_1,P_2)$:]\
  \begin{description}
  \item[Input:] \textsf{$P_1,P_2$, finite bases of ideals, $I_1, I_2$, of a
    polynomial ring.}
  \item[Output:] \textsf{$S$, a finite basis of the ideal $I_1:I_2^\infty$}.
  \end{description}
\end{description}

\begin{remark}\label{rem:sr}
  We remark that algorithms for these subroutines exist, in particular all can be accomplished using
  \gbs. We refer to~\cite[page~122]{CLO},~\cite[Proposition 6.19]{beckerGB} and~\cite{Eisenbud,Bayer}
  for algorithms for computing a finite basis for respectively elimination ideals, intersection of
  ideals and the saturation of ideals.
\end{remark}

\subsection{First algorithm}\label{ss:first}

\begin{algorithm2e}[htbp!]
   \DontPrintSemicolon
  \nonl\TitleOfAlgo{$acv1$}
    \label{algo:acv_e}
  \KwIn{$\f:\K^n \rightarrow \K^p$ a dominant polynomial mapping with components in the ring $\K[\bz]$,
    the list $\bz$.}
  \KwOut{$\ACV$, a finite list of polynomials whose zero set has codimension at least $1$ in $\C^p$
    and contains the set of asymptotic critical values of $\f$.}  
  Generate a random change of variables $A\in\K^{n\times n}$ and set\;
  \nonl$\f^A\leftarrow \f(A z)$.\;
  \For{$j$ \KwFrom $1$ \KwTo $p$}{
    Generate random numbers $\br \in \K$.\;
    $B\leftarrow\text{Basis of the kernel of }\jac((\f^A)^{[j]})$.\;
    $(v_1(z), \dots, v_{n-p+1}(z))\leftarrow\d f^A_j B$.\;
    $\delta(z) \leftarrow$ the determinant of the first $p-1$ columns of $\jac((\f^A)^{[j]})$.\; 
    $N(z)\leftarrow\{f^A_1(z)-c_1,\ldots,f^A_p(z)-c_p,
    z_1v_1(z)-r_1e,\ldots,
    z_1v_{n-p+1}(z)-r_{n-p+1}e\}$.\;
    $G \leftarrow \numer(N(\tau_1(z)))$.\;
    $G_s \leftarrow\Saturate(G, z_1\numer(\delta(\tau_1(z))))$.\;
    $L \leftarrow G_s \cup \{z_1, e\}$.\;
    $V_j \leftarrow\Eliminate(L, \{\bz, e\}, \{\bc\})$.\;
  }
  $\ACV\leftarrow\Intersect(V_1, \dots, V_p)$.\;
  \KwRet $\ACV$.\;
\end{algorithm2e}

We first define the objects that will be crucial in the proof of correctness and termination
of Algorithm~\ref{algo:acv_e}.

\begin{theorem}\label{thm:algo2}
  Let $\f = (f_1, \dots, f_p) \in \K[\bz]^p$ be a dominant polynomial mapping.
  Suppose that $A \in \GL_n(\K)$ satisfies the genericity condition of Lemma~\ref{lem:ksat} and that
  $\br$ and the corresponding subspace $E \in \mathcal{E}$ satisfies the genericity
  condition of Proposition~\ref{prop:e}. Then, Algorithm~\ref{algo:acv_e} terminates and returns as
  output a finite basis whose zero set has codimension at least $1$ in $\C^p$ and contains the set of
  asymptotic critical values of~$\f$.
\end{theorem}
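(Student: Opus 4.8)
The plan is to verify that each line of Algorithm~\ref{algo:acv_e} faithfully realises the algebraic constructions of Corollary~\ref{cor:msj} and Corollary~\ref{cor:dim}, and that these constructions terminate. First I would address termination: every step of the algorithm is one of the subroutines $\Saturate$, $\Eliminate$, $\Intersect$, together with linear-algebra operations (computing a basis $B$ of a kernel, forming products $\d f_j^A\,B$, and a determinant $\delta$). By Remark~\ref{rem:sr} each subroutine terminates, being implementable via \gbs; the kernel basis and determinant computations are finite linear-algebra operations over the field of rational functions $\K(\bz)$ (or equivalently, over $\K[\bz]$ after clearing denominators, using~\cite{gencramerrule} as in the proof of Corollary~\ref{cor:msj}). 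Since the outer \textbf{for} loop runs $p$ times and $\Intersect$ is called once, the algorithm halts.

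Next I would prove correctness, i.e.\ that the returned list $\ACV$ has zero set $\bigcup_{j=1}^p K_1^j(\f^A)$, or at least a variety containing it. The key is to match the algorithm's variables with Corollary~\ref{cor:msj}: the tuple $N(z)$ constructed in the algorithm is exactly the tuple $\tilde N$ of Corollary~\ref{cor:msj} (with $u_k$ already substituted by $r_k e$, which is legitimate by~\cite[Theorem 4.3.4]{CLO} since $u_k-r_ke$ is unaffected by $\tau_1$), so that $G=\numer(N(\tau_1(z)))$ and, after saturating by $z_1\numer(\delta(\tau_1(z)))$,
\[
  \V(G_s) = \overline{\V(\numer(\tilde N(\tau_1(z)))) \setminus \V\big(z_1\numer(\delta(\tau_1(z)))\big)} = \overline{\gr M_1^j(\f^A)}.
\]
Here one uses the standard fact that $\V(I:h^\infty)=\overline{\V(I)\setminus\V(h)}$. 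Adjoining $\{z_1,e\}$ to $G_s$ and intersecting (set-theoretically) realises the intersection with $\V(z_1,e)$, which by Corollary~\ref{cor:msj} is precisely $L_1^j(\f^A)$; then $\Eliminate$ with the first block of indeterminates $\{\bz,e\}$ and second block $\{\bc\}$ computes a basis of $I\cap\K[\bc]$, whose zero set is $\overline{\pi(L_1^j(\f^A))}=\overline{K_1^j(\f^A)}$ by the elimination theorem~\cite[page~122]{CLO}. Thus $\V(V_j)=\overline{K_1^j(\f^A)}$, and $\Intersect(V_1,\dots,V_p)$ produces a basis of $\bigcap_j I(V_j)$, whose zero set is $\bigcup_{j=1}^p\overline{K_1^j(\f^A)}\supseteq\bigcup_{j=1}^p K_1^j(\f^A)\supseteq K_\infty(\f)$ by Lemma~\ref{lem:ksat}.

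Finally, the codimension claim on the output's zero set is exactly Corollary~\ref{cor:dim}: each $\overline{K_1^j(\f^A)}$ has dimension at most $p-1$ in $\C^p$, hence so does their finite union, so $\ACV$ defines a set of codimension at least $1$; in particular $\ACV$ is a non-trivial list (it is not the zero ideal), which also confirms the stated output specification. The genericity hypotheses on $A$ and $\br$ enter precisely where Corollary~\ref{cor:msj} and Corollary~\ref{cor:dim} invoke them: $A\in\cOA$ guarantees that $\bigcup_{s,j}K_s^j(\f^A)$ collapses to $\bigcup_j K_1^j(\f^A)$ (Lemma~\ref{lem:ksat}), and $\br$ chosen so that $E\in\cOE$ guarantees $\overline{\gr M_1^j(\f^A)\cap E}=\overline{\gr M_1^j(\f^A)}\cap E$ via Proposition~\ref{prop:e}, which is what lets the substitution $u_k\mapsto r_ke$ preserve the needed closures. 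The only genuinely delicate point — and the step I expect to be the main obstacle in writing this carefully — is bookkeeping the interplay between taking numerators of $N\circ\tau_1$, saturating by $z_1\numer(\delta(\tau_1(z)))$, and the later set-theoretic intersection with $\V(z_1,e)$: one must check that no spurious components at infinity or on $\V(\delta)$ survive and that the saturation correctly removes exactly the locus where $M_1^j(\f^A)$ is undefined, which is where the hypotheses ``$Z\cap W=\emptyset$'' and ``$V\setminus Z=W$'' of Proposition~\ref{prop:e} must be re-derived in this concrete setting. Once that is in place, everything else is a direct transcription of Corollaries~\ref{cor:msj} and~\ref{cor:dim}.
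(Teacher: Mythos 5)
Your proposal is correct and follows essentially the same route as the paper's proof: termination via the Gr\"obner-based subroutines of Remark~\ref{rem:sr}, identification of $G$ and $z_1\numer(\delta(\tau_1(z)))$ with the data of Corollary~\ref{cor:msj}, saturation to obtain $\overline{\gr M_1^j(\f^A)}$, adjoining $\{z_1,e\}$ and eliminating to get $\overline{K_1^j(\f^A)}$, then intersecting the $V_j$ and invoking Lemma~\ref{lem:ksat} and Corollary~\ref{cor:dim} for the containment and codimension claims. The ``delicate point'' you flag (that the hypotheses of Proposition~\ref{prop:e} hold in this concrete setting) is exactly what the paper disposes of inside the proof of Corollary~\ref{cor:msj}, so no new argument is needed in the theorem itself.
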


\begin{proof}[Proof of Theorem~\ref{thm:algo2}]
  Firstly, Algorithm~\ref{algo:acv_e} uses linear algebra and, as in Remark~\ref{rem:sr},
  multivariate polynomial routines that are correct and terminate. Hence,
  Algorithm~\ref{algo:acv_e} terminates in finitely many steps. 

  By Lemma~\ref{lem:ksat}, for such a matrix $A \in \GL_n(\K)$,
  \[K_\infty(\f) \subseteq \bigcup_{j=1}^p K_1^j(\f^A).\]
  By Corollary~\ref{cor:dim}, for each $1 \leq j \leq p$, $K_1^j(\f^A)$ has codimension at least $1$ in 
  $\C^p$. Thus, the finite union $\bigcup_{j=1}^p K_1^j(\f^A)$ has codimension at least $1$ in $\C^p$.
  The goal is then to compute $p$ finite sets of polynomials such that their zero sets are
  $\overline{K_1^1(\f^A)}, \dots, \overline{K_1^p(\f^A)}$. The union of these zero sets would then
  contain the the asymptotic critical values of $\f$. Hence, choose $1 \leq j \leq p$. Then, generate
  random numbers $\br \in \K$. 

  By Corollary~\ref{cor:msj}, we derive polynomial tuple $(g_1, \dots, g_{n+1})$ and polynomial
  $h$  with entries in the polynomial ring
  $\K[\bz, \bc, e]$ such that,
  \[
    \overline{\gr M_1^j(\f^A)} = \overline{\V(g_1, \dots, g_{n+1}) \setminus \V(h)},
  \]
  \[
    L_1^j(\f^A) = \overline{\V(g_1, \dots, g_{n+1}) \setminus \V(h)} \cap \V(z_1, e).
  \]

  As in the proof of this corollary, to compute the polynomial tuple $(g_1, \dots, g_{n+1})$ and
  polynomial $h$, we first compute a basis, $B$, of the kernel of $\jac((\f^A)^{[j]})$. This is
  accomplished by a evaluation interpolation method detailed in Section~\ref{sec:comp}. Then,
  it is easy to see that $h = z_1\numer(\delta(\tau_1(z)))$, where $\delta$ is computed by step $6$ of
  Algorithm~\ref{algo:acv_e} and the tuple $(g_1, \dots, g_{n+1}) = G$, where $G$ computed by step $8$
  of Algorithm~\ref{algo:acv_e}. 

  The next stage is to compute the Zariski closure of the graph of $M_1^j(\f^A)$. Thus, we must compute
  a finite list of polynomials whose zero set is
  \[\overline{\V(g_1, \dots, g_{n+1}) \setminus \V(h)}.\]
  By~\cite[Chapter~4, Section~4, Theorem~10]{CLO}, we may do this through saturations. Therefore, we
  apply the subroutine $\Saturate$ to $G$ to saturate by the ideal
  $\ideal{z_1\numer(\delta(\tau_1(z)))}$. We denote by $G_s$ the finite list of polynomials that is
  returned by the $\Saturate$ subroutine and conclude:
  \[ \overline{\gr M_1^j(\f^A)} = \V(G_s). \]

  Thus, as in Corollary~\ref{cor:msj}, we compute $L_1^j(\f^A)$ by intersecting with the
  variety $\V(z_1, e)$. By~\cite[Chapter~4, Section~3, Theorem~4]{CLO}, we add the polynomials $z_1, e$
  to the list $G_s$ to define the finite list of polynomials $L$ so that
  \[ \V(L) = \V(G_s) \cap \V(z_1, e). \] It remains to project onto the $\bc$-space.
  By~\cite[Chapter~4, Section~4, Theorem~4]{CLO}, we apply the subroutine $\Eliminate$ to the list $L$
  in order to eliminate all variables except $c_1, \dots, c_p$. The result is a finite list of
  polynomials, $V_j$, whose zero set is the set $\overline{K_1^j(\f^A)}$ which contains the set
  $K_1^j(\f^A)$ by definition. Thus, by Corollary~\ref{cor:dim}, the algebraic set $\V(V_j)$ has
  codimension at least $1$ in $\C^p$. 
  
  We perform these steps for all $j$ from $1$ to $p$ to obtain $V_1, \dots, V_p$.
  By~\cite[Chapter~4, Section~3, Theorem~15]{CLO}, the final step of computing their union can be
  performed by applying the subroutine $\Intersect$ to the lists $V_1, \dots, V_p$. The output is
  a finite list of polynomials which we denote $\ACV$. We conclude that \[
    K_\infty(\f) \subseteq \bigcup_{j=1}^p K_1^j(\f^A) \subseteq \V(\ACV). \qedhere\] 
\end{proof}
                  
\begin{example}
   We give an example of how to use the computer
  algebra system \Maple to implement Algorithm~\ref{algo:acv_e} with \FGb~\cite{FGb}, implemented in C,
  to perform the \gb computations. We shall report solely the inputs as the outputs are impractical to
  give here. The details and the results of these computations can instead be found on the webpage:
  \emph{\url{https://www-polsys.lip6.fr/~ferguson/globalacv.html}}
  
  Let $\g =  \left(z_1z_2,z_1z_3\right)$. We give the calls required to compute the set $K_1^1(\g^A)$
  for some invertible matrix $A$ satisfying the genericity condition in Theorem~\ref{algo:acv_e}.
  The case $j=2$ follows similarly.

  We begin by generating a random seed so we can generate a random matrix $A$ and define
  the polynomial \verb+gA+:

\begin{verbatim}
 > randomize();
 > A := LinearAlgebra:-RandomMatrix(3);
 > vars := [z1,z2,z3];
 > Asubs := {seq(vars[i] = add(A[i,j]*vars[j], j=1..3),i=1..3)};
 > gA := subs(Asubs, g);
 > jacgA := VectorCalculus:-Jacobian(gA, vars);
 > dgA := jacgA[1..1,1..3];
\end{verbatim}

  We first must compute the polynomial mapping $w_1(z) = (v_1(z), v_2(z))$. To do so, we compute a basis of the kernel of $\jac((\g^A)^{[1]})$. We can then define $\tau_1$ and $G$.
\begin{verbatim}
 > B := LinearAlgebra:-NullSpace(jacgA[2..2,1..3]);
 > v1 := add(B[1][i]*dgA[i]*vars[1], i=1..3);
 > v2 := add(B[2][i]*dgA[i]*vars[1], i=1..3);
 > tau := {z1 = 1/z1, z2 = z2/z1, z3 = z3/z1};
 > G := numer(subs(tau, [gA[1] - c1, gA[2] - c2, rand()*e - v1, 
           rand()*e - v2]));
\end{verbatim}
    Recall that $M_1^1(\g^A)$
  is only defined where $\jac(\g^A)$ is full rank. Therefore, we must remove the
  algebraic set defined by the minors of this matrix, as well as the
  variety $\V(z_1)$. By the choice of a generic matrix $A$, it suffices to remove the algebraic set
  defined by the minor
  given by the first $p-1$ columns of $\jac((\g^A)^{[1]})$.
\begin{verbatim}
 > delta := jacgA[2,1];
 > Gs := FGb:-fgb_gbasis_elim([op(G), t*numer(subs(tau,delta))*vars[1]-1]
            , 0, [t], [op(vars),e,c1,c2]);
 > V := FGb:-fgb_gbasis_elim([op(Gs), vars[1], e], 0, [op(vars),e], [c]);
\end{verbatim}

  After performing these computations, similar to what we saw in Example~\ref{ex:msj}, $K_1^1(\g^A) =
  \V(c_1,c_2)$. A similar computation for $j=2$ would reveal the same. Thus, $(0,0)$ is the only
  possible asymptotic critical value of $\g^A$ and therefore of $\g$.
\end{example}

\subsection{Improved algorithm}

We can implement the idea of Algorithm~\ref{algo:acv_e}
in a different way. For fixed $1 \leq j \leq p$, consider a basis $B$ of
the kernel of $\jac((\f^A)^{[j]})$ and define $(v_1, \dots, v_{n-p+1}) = \d f^A_j B$.
By~\cite{gencramerrule}, we may assume that the
$v_i(z)$ have common denominator $\delta(z)$, the determinant of the first $p-1$ columns of
$\jac((\f^A)^{[j]})$. Then, define the list of polynomials $G$ by
\[
  \numer(f^A_1(\tau_1(z))-c_1,\ldots,f^A_p(\tau_1(z))-c_p,
  v_1(\tau_1(z))-z_1r_1e,\ldots, v_{n-p+1}(\tau_1(z))-z_1r_{n-p+1}e),
\]
and denote $\Eliminate(G, e, \{\bz, \bc\})$ by $G^\prime$. 
We force the map \[z \mapsto (v_1(\tau_1(z)), \dots, v_{n-p+1}(\tau_1(z)))\]
to be parallel to a generic vector $\br \in \K^{n-p+1}$.
We did this before by introducing a variable $e$. Instead, let $M$ denote 
the ideal generated by the numerators of the minors of the following matrix
evaluated at $\tau_1(z)$
\[
  \begin{bmatrix}
    v_1 & \cdots & v_{n-p+1} \\
    r_1 & \cdots & r_{n-p+1}
  \end{bmatrix}.
\]
If the minors of this matrix are set to $0$, there is a rank deficiency. This means that
the two rows are parallel. In this setting, we would therefore not need to introduce a variable $e$ to
consider the linear subspace $E$ from Algorithm~\ref{algo:acv_e}, but instead, we include the minors
of this matrix. It is easy to see that the minors discussed are exactly what is obtained by
eliminating the introduced variable $e$ from the ideal given by the basis $G$. This is the content of
the following lemma.
  
\begin{lemma}\label{lem:minors}
  Let $\f = (f_1, \dots, f_p) \in \K[\bz]^p$ be a dominant polynomial mapping. Let $A \in \GL_n(\K)$
  and let $\br \in \K$ so that the genericity assumptions of Theorem~\ref{thm:algo2}
  hold. Let $G$, $M$ and $G^\prime$ be defined as above.
  Then, the following equality holds:
  \[
    \ideal{\numer(f^A_1(\tau_1(z)) - c_1, \dots, f^A_p(\tau_1(z)) -
      c_p)} + M = \ideal{G^\prime}.
  \]
\end{lemma}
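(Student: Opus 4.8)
The plan is to write down explicit generators of the three ideals, carry out the elimination of $e$ by hand, and reduce the whole statement to a single dimension count. First I would fix notation. Set $D := \numer(\delta(\tau_1(z)))$ and, using the generalised Cramer rule invoked above, write $v_k(\tau_1(z)) = V_k/D$ with $V_k \in \K[\bz]$. Then the generating set $G$ consists of $P_i := \numer(f_i^A(\tau_1(z)) - c_i) \in \K[\bz,\bc]$ for $1 \le i \le p$, together with $Q_k := V_k - r_k z_1 D e$ for $1 \le k \le n-p+1$; and $M = \langle m_{k\ell} \mid 1 \le k < \ell \le n-p+1\rangle$ with $m_{k\ell} := r_\ell V_k - r_k V_\ell$, which is exactly the numerator of the $(k,\ell)$-minor of the displayed matrix evaluated at $\tau_1(z)$. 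Put $J_0 := \langle P_1,\dots,P_p\rangle + M$, so that the assertion to prove is $\langle G'\rangle = J_0$. The one identity driving everything is $r_\ell Q_k - r_k Q_\ell = r_\ell V_k - r_k V_\ell = m_{k\ell}$.

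The inclusion $J_0 \subseteq \langle G'\rangle$ is then immediate: each $P_i$ is an element of $G$ free of $e$, and each $m_{k\ell} = r_\ell Q_k - r_k Q_\ell$ lies in $\langle G\rangle$ and is free of $e$, hence both lie in $\langle G\rangle \cap \K[\bz,\bc] = \langle G'\rangle$. For the reverse inclusion I would first use that the genericity of $\br$ forces every $r_k \neq 0$, so the identity with $\ell = 1$ gives $Q_k \in \langle Q_1\rangle + M$ for all $k$, whence $\langle G\rangle = J_0\,\K[\bz,\bc,e] + \langle Q_1\rangle$. Now $Q_1 = V_1 - (r_1 z_1 D)\,e$ is the only generator containing $e$, it has degree $1$ in $e$, and its leading coefficient lies in $\K[\bz]$; localising at $z_1 D$ and substituting $e \mapsto V_1/(r_1 z_1 D)$ identifies $\K[\bz,\bc]_{z_1D}[e]/\langle Q_1\rangle$ with $\K[\bz,\bc]_{z_1D}$, kills $Q_1$, and fixes $\K[\bz,\bc]$. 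Applying this homomorphism to an arbitrary $g \in \langle G'\rangle$ shows $g \in J_0\,\K[\bz,\bc]_{z_1D}$, that is $\langle G'\rangle \subseteq J_0 : (z_1 D)^\infty$. Combined with $J_0 \subseteq \langle G'\rangle$, the lemma reduces to the claim that $z_1 D = z_1\numer(\delta(\tau_1(z)))$ is a nonzerodivisor modulo $J_0$, equivalently $J_0 : (z_1 D)^\infty = J_0$.

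It remains to prove that nonzerodivisor claim, which I would derive from a single estimate: $\dim\bigl(\V(J_0) \cap \V(z_1\numer(\delta(\tau_1(z))))\bigr) < p$. Granting this, note that over $\{z_1 D \neq 0\}$ the map forgetting $e$ is an isomorphism from $\V(G)\setminus\V(z_1D)$ onto $\V(J_0)\setminus\V(z_1D)$ (with inverse $e = V_1/(r_1z_1D)$), so $\dim(\V(J_0)\setminus\V(z_1D)) = \dim(\V(G)\setminus\V(z_1D)) \le p$ by Corollary~\ref{cor:msj}; together with the estimate this gives $\dim \V(J_0) \le p$. Since $J_0$ is generated by $n$ elements — the $p$ polynomials $P_i$ and, after reducing $M$ via $r_1 \neq 0$, the $n-p$ minors $m_{1k}$ — in the polynomial ring $\K[\bz,\bc]$ of $n+p$ variables, it follows that $\V(J_0)$ is equidimensional of dimension $p$, so $J_0$ is a complete intersection and therefore unmixed: every associated prime of $J_0$ has dimension $p$. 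The estimate then says exactly that $z_1 D$ vanishes on none of them, which is the nonzerodivisor property and completes the proof.

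The main obstacle is establishing that dimension estimate, which I would split as $\dim(\V(J_0)\cap\{z_1=0\}) < p$ and $\dim(\V(J_0)\cap\{\numer(\delta(\tau_1(z)))=0\}) < p$. For the first, clearing denominators shows that $P_i|_{z_1=0}$ is the leading form of $f_i^A$ specialised at $z_1=1$, and the genericity of $A$ supplied by Lemma~\ref{lem:ksat} makes these $p$ forms, together with the rank-one conditions $m_{1k}=0$, drop the dimension by one on $\{z_1=0\}$. For the second, $\delta$ is a non-zero maximal minor of $\jac((\f^A)^{[j]})$, so $\V(\numer(\delta(\tau_1(z))))$ is a proper subvariety, and again genericity of $A$ forces it to meet $\V(J_0)$ properly. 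Making these two dimension drops rigorous — tracking how the $P_i$ and $m_{1k}$ degenerate on the hyperplane at infinity and on the determinantal locus, and pinning down precisely which genericity conditions on $A$ and $\br$ (from Lemma~\ref{lem:ksat} and Proposition~\ref{prop:e}) are used — is the delicate point; the remainder of the argument is formal commutative algebra.
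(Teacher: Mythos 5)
Your set-up and the easy inclusion are fine (and agree with the paper): each $P_i$ and each minor $m_{k\ell}=r_\ell Q_k-r_k Q_\ell$ lies in $\ideal{G}\cap\K[\bz,\bc]=\ideal{G^\prime}$, and your localisation/substitution $e\mapsto V_1/(r_1z_1D)$ correctly yields the chain $J_0\subseteq\ideal{G^\prime}\subseteq J_0:(z_1D)^\infty$ with $D=\numer(\delta(\tau_1(z)))$. The gap is the last step: you reduce the lemma to the assertion that $z_1D$ is a nonzerodivisor modulo $J_0$, i.e.\ that $J_0$ is already saturated with respect to $z_1D$. That assertion is strictly stronger than the lemma and is false in general, and the dimension estimate $\dim\bigl(\V(J_0)\cap\V(z_1D)\bigr)<p$ on which you base it fails precisely in the cases of interest. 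Indeed, modulo $z_1$ each $P_i$ becomes the top-degree form of $f_i^A$ evaluated at $(1,z_2,\dots,z_n)$, so the variables $\bc$ disappear from all generators of $J_0+\ideal{z_1}$; hence $\V(J_0)\cap\V(z_1)$ is a cylinder $B\times\C^p$ over the locus $B\subset\C^{n-1}$ cut out by these top forms and the restricted minors. Whenever $B\neq\emptyset$ this intersection has dimension at least $p$, and since (by your own isomorphism argument off $\V(z_1D)$) every component of $\V(J_0)$ not contained in $\V(z_1D)$ has dimension at most $p$, some irreducible component of $\V(J_0)$ must lie inside $\V(z_1D)$; thus $z_1D$ belongs to a minimal, hence associated, prime of $J_0$ and is a zerodivisor, so $J_0:(z_1D)^\infty\supsetneq J_0$. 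And $B\neq\emptyset$ happens for the paper's own benchmark family: for $p=1$ and $f_n=z_1^2+\sum_{i\geq2}(z_1z_i-1)^2$ the top forms of $f$ and of all its partial derivatives vanish simultaneously on the hyperplane $\{z_1=0\}$ at infinity (there all $\bar V_k$ vanish, so the rank-one conditions are vacuous and no choice of $\br$ helps), and a linear change of variables $A$ only moves this locus, so no genericity of $A$ rescues the estimate; likewise the claims $\dim\V(J_0)\leq p$, equidimensionality and unmixedness collapse. Your chain only brackets $\ideal{G^\prime}$ between $J_0$ and its saturation; the lemma says the lower end is attained, and that is exactly what remains unproved in your argument.

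The paper closes the reverse inclusion without any dimension, complete-intersection or primality input, by a rewriting (syzygy) argument that uses the one structural feature your localisation discards: thanks to the common denominator $\delta$ from the generalised Cramer rule, the generators involving $e$ are linear in $e$ with $e$-coefficients $r_kz_1D$ all proportional to the single polynomial $z_1D$. For a monomial order in which $e$ is largest, their leading terms coincide up to the scalars $r_k$, so the syzygies on these leading terms are generated by the pairwise ones, whose S-polynomials are exactly the minors $m_{k\ell}$; consequently any representation of an $e$-free element of $\ideal{G}$ can be rewritten, with no saturation and no localisation, as a combination of the $P_i$ and the minors. If you wish to keep your framework, this rewriting step (or an equivalent elementary induction on the $e$-degree of the representation, using $r_\ell Q_k-r_kQ_\ell=m_{k\ell}$ to eliminate $e$ exactly) is what must replace the unmixedness argument.
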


\begin{proof}[Proof of Lemma~\ref{lem:minors}]
  We shall prove this by double inclusion.

  Firstly, the numerators of the polynomials
    $f^A_1 - c_1, \dots, f^A_p - c_p$ at $\tau_1(z)$ are elements of
    $G$ and $\K[\bz, \bc]$, so it remains to
    show that the numerator of each minor is an element of
    $\ideal{G^\prime}$. Let $r_k v_i(\tau_1(z)) - r_i d_k(\tau_1(z))$
    be one such minor. Then,
    $r_i(r_k z_1e - d_k(\tau_1(z))) - r_k(r_i z_1e - v_i(\tau_1(z))) =
    r_k v_i(\tau_1(z)) - r_i d_k(\tau_1(z))$. Taking the numerators of
    both sides we find that
    $\numer(r_k v_i(\tau_1(z)) - r_i d_k(\tau_1(z))) \in
    \ideal{G^{\prime}}$. Thus, \[\ideal{\numer(f^A_1(\tau_1(z)) - c_1, \dots, f^A_p(\tau_1(z)) -
      c_p)} + M \subseteq \ideal{G^{\prime}}.\]
    
    On the other hand, let $g \in \ideal{G^\prime}$. By the definition of $G^\prime$,
    $g \in \ideal{G}$ such that all $e$-terms are cancelled. That is,
    \begin{align*}
      g &=  \numer(h_1 (f^A_1(\tau_1(z)) - c_1) + \dots + h_p (f^A_p(\tau_1(z)) - c_p) + \\
        &\qquad h_{p+1} (e z_1 r_1 - v_1(\tau_1(z))) + \dots + h_{n+1} (e z_1 r_{n-p+1} - v_{n-p+1}(\tau_1(z)))),
    \end{align*}
    so that $h_{p+1}, \dots, h_{n+1} \in \K[\bz, \bc, e]$
    are polynomials such that in the above sum, all terms involving $e$ sum to $0$. Consider the
    following monomial ordering, $e > z_1 > \dots > z_n > c_1 > \dots
    > c_p$. Then, the 
    leading term of each $\numer(e z_i r_i - v_i(\tau_1(z)))$ divides the leading term of the 
    polynomial  $e z_i^d \delta(z) r_i$, for some $d$ large enough. This leading term must cancel
    in the polynomial $g$ as it involves $e$. Therefore, 
    $(h_{p+1}, \dots, h_{n+1})$ is a syzygy on the leading terms of
    $\numer(e z_i r_i - v_i(\tau_1(z)))$. Recall that the $S$-polynomials generate the
    set of syzygies on the leading terms $e z_1^d r_1$~\cite[page~111]{CLO} and here the
    $S$-polynomials are simply the minors of the above matrix. It is therefore possible to
    rewrite $h_{p+1}, \dots, h_{n+1}$ as elements of $M$. Thus, \[ \ideal{G^\prime} \subseteq
    \ideal{\numer(f^A_1(\tau_1(z)) - c_1, \dots, f^A_p(\tau_1(z)) - c_p) + M}. \qedhere \]
\end{proof}

While this point of view allows us to drop the variable $e$
and thus reduce the number of
variables by $1$, it makes us introduce many more equations, namely
the $\binom{n}{2}$ minors of the matrix. The following lemma actually ensures that
only $n-p$ of them are needed.

\begin{lemma}\label{lem:subset}
  Let $f_1, \dots, f_k \in \K[x_1, \dots, x_n]$ and let
  $r_1, \dots, r_k \in \K$ with
  $r_1 \neq 0$. Consider the matrix $R$
  \[
    \begin{bmatrix}
      f_1 & \cdots & f_k \\      r_1 & \cdots & r_k
    \end{bmatrix}.
  \]
  Let $I \in \K[x_1, \dots, x_n]$ be the ideal generated by the
  minors of the matrix $R$. Then, $I = \ideal{r_2 f_1 - r_1
    f_2, \dots, r_k f_1 - r_1 f_k}$.
\end{lemma}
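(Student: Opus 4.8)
The plan is to prove the stated equality of ideals by double inclusion, the only nontrivial direction being a two-by-two Plücker-type identity that exploits the hypothesis $r_1 \neq 0$.

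First I would record the easy inclusion. The generators on the right-hand side are literally among the minors of $R$: for each $2 \le j \le k$, the minor formed from columns $1$ and $j$ is $f_1 r_j - f_j r_1 = r_j f_1 - r_1 f_j$. Hence $\ideal{r_2 f_1 - r_1 f_2, \dots, r_k f_1 - r_1 f_k} \subseteq I$ with nothing to prove.

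For the reverse inclusion, set $m_j := r_j f_1 - r_1 f_j$ for $2 \le j \le k$, so that $r_1 f_j = r_j f_1 - m_j$. A general generator of $I$ is the minor $f_i r_j - f_j r_i$ for some $1 \le i < j \le k$; when $i = 1$ this is $m_j$ and lies in $\ideal{m_2,\dots,m_k}$ trivially, so assume $2 \le i < j \le k$. Multiplying by the scalar $r_1$, which is invertible since $\K$ is a field, and substituting the relations above gives
\[
r_1(f_i r_j - f_j r_i) = r_j(r_1 f_i) - r_i(r_1 f_j) = r_j(r_i f_1 - m_i) - r_i(r_j f_1 - m_j) = r_i m_j - r_j m_i .
\]
Therefore $f_i r_j - f_j r_i = r_1^{-1}(r_i m_j - r_j m_i) \in \ideal{m_2,\dots,m_k}$. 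Since every generator of $I$ has this form, $I \subseteq \ideal{m_2,\dots,m_k}$, and combining the two inclusions yields $I = \ideal{r_2 f_1 - r_1 f_2, \dots, r_k f_1 - r_1 f_k}$.

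I do not expect a genuine obstacle here; the computation is routine. The one point deserving attention is that the hypothesis $r_1 \neq 0$ is used essentially: it is precisely what allows one to solve for $f_j$ in terms of $f_1$ and $m_j$ and thereby rewrite every minor not involving the first column. Without it one would only obtain $\ideal{r_2 f_1, \dots, r_k f_1}$, which is in general strictly smaller, so the normalisation ``divide through by the first row'' is what makes the $n-p$ (here $k-1$) generators suffice.
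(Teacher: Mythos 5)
Your proof is correct and follows essentially the same route as the paper: the generators on the right are the minors involving the first column, and every other minor $f_i r_j - f_j r_i$ is recovered from the identity $r_1(f_i r_j - f_j r_i) = r_i m_j - r_j m_i$ (the paper writes this as $r_j M_{1,i} - r_i M_{1,j} = \pm r_1 M_{i,j}$), after which one divides by $r_1 \neq 0$. No gaps; your remark on where $r_1 \neq 0$ is used matches the paper's reasoning.
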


\begin{proof}[Proof of Lemma~\ref{lem:subset}]
  Clearly, $\ideal{r_2 f_1 - r_1 f_2, \dots, r_k f_1 - r_1 f_k}
  \subset I$. Let $M_{i,j} = r_j f_i - r_i f_j$ be a minor of the
  matrix $R$. Then, $r_j M_{1,i} - r_i M_{1,j} = r_1 r_i f_j - r_1
  r_j f_i = r_1 M_{i,j}$. Since $r_1 \neq 0$ we conclude that
  $M_{i,j} \in \ideal{r_2 f_1 - r_1 f_2, \dots, r_k f_1 - r_1
    f_k}$. This holds for all $1 \leq i < j \leq k$  and so $I
  \subset \ideal{r_2 f_1 - r_1 f_2, \dots, r_k f_1 - r_1 f_k}$. 
\end{proof} 

This approach with the minors leads us to the design of
Algorithm~\ref{algo:acv_minor}.

\begin{algorithm2e}[htbp!]
   \DontPrintSemicolon
  \nonl\TitleOfAlgo{$acv2$}
    \label{algo:acv_minor}
  \KwIn{$\f:\K^n \rightarrow \K^p$ a dominant polynomial mapping with components in the ring
    $\K[\bz]$, the list $\bz$.}
  \KwOut{$\ACV$, a finite list of polynomials whose zero set has codimension at least $1$ in $\C^p$
    and contains the set of asymptotic critical values of $\f$.}
  
  Generate a random change of variables $A\in\K^{n\times n}$ and set\;
  \nonl$\f^A\leftarrow \f(A z)$.\;
  \For{$j$ \KwFrom $1$ \KwTo $p$}{
    Generate random numbers $\br \in \K$.\;
    $B\leftarrow\text{Basis of the kernel of }\jac((\f^A)^{[j]})$.\;
    $(v_1(z), \dots, v_{n-p+1}(z))\leftarrow\d f^A_j B$.\;
    $\delta(z) \leftarrow$ the determinant of the first $p-1$ columns of $\jac((\f^A)^{[j]})$.\; 
    $N^\prime(z)\leftarrow\{f^A_1-c_1,\ldots,f^A_p-c_p,
    r_2v_1 - r_1v_2,\ldots,
    r_{n-p+1}v_1 - r_1v_{n-p+1}\}$.\;
    $G^\prime \leftarrow \numer(N^\prime(\tau_1(z)))$.\;
    $G_s^\prime \leftarrow\Saturate(G^\prime, z_1\numer(\delta(\tau_1(z))))$.\;
    $L^\prime \leftarrow G_s^\prime \cup \{z_1\}$.\;
    $V_j^\prime \leftarrow \Eliminate(L^\prime, \{\bz\}$, $\{\bc\})$.\;
  }
  $\ACV^\prime\leftarrow\Intersect(V_1^\prime, \dots, V_p^\prime)$.\;
  \KwRet $\ACV^\prime$.\;
\end{algorithm2e}

Since Algorithms~\ref{algo:acv_e}
and~\ref{algo:acv_minor} are quite similar, we want to be able to
reuse the proof of correctness of the former for the latter's.
 The difference between these two algorithms is the stage
of the algorithm where we eliminate the introduced variable $e$. In Algorithm~\ref{algo:acv_e}, this
is in step 12; for Algorithm~\ref{algo:acv_minor}, as in Lemma~\ref{lem:minors}, we consider an ideal
equal to the one given if we eliminated $e$ after step 9 in Algorithm~\ref{algo:acv_e}. However, the
steps in between involve a saturation with respect to an ideal that does not involve $e$. This
motivates the following lemma, which shall make use of the following notation. 

For fixed $1 \leq j \leq p$, consider a basis $B$ of the null space of $\jac((\f^A)^{[j]})$ and
define $v_1(z), \dots, v_{n-p+1}(z) = \d f^A_j B$. By~\cite{gencramerrule}, we may assume that
the $v_i(z)$ have common denominator $\delta(z)$, the determinant of the first $p-1$ columns of
$\jac((\f^A)^{[j]})$.
As in Algorithm~\ref{algo:acv_e}, define $G$ to be the list of polynomials
\[
 \{\numer(f^A_1(z)-c_1,\ldots,f^A_p(z)-c_p,
  z_1v_1(z)-r_1e,\ldots, z_1v_{n-p+1}(z)-r_{n-p+1}e)\},
\]
evaluated at $\tau_1(z)$. Define the following finite lists that are defined in
Algorithm~\ref{algo:acv_e} or Algorithm~\ref{algo:acv_minor}:
\begin{align*}
&  G_s = \Saturate(G, z_1\numer(\delta(\tau_1(z)))), \\
&  G^\prime = \Eliminate(G, e, \{\bz, \bc\}), \\
&  G_s^\prime = \Eliminate(G_s, e, \{\bz, \bc\}).
\end{align*}
The relationship between these lists is investigated in the following lemma.

\begin{lemma}\label{lem:elim}
  Let $\f = (f_1, \dots, f_p) \in \K[\bz]^p$ be a dominant polynomial mapping. Let $A \in \GL_n(\K)$
  and let $\br \in \K$ so that the genericity assumptions of Theorem~\ref{thm:algo2}
  hold. Let $G$, $G_s$, $G^\prime$ and $G_s^\prime$ be defined as above.
  Then, the following equality holds:
  \[\ideal{G_s^\prime} =\ideal{\Saturate(G^\prime, z_1\numer(\delta(\tau_1(z))))}.\]
\end{lemma}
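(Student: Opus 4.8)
The plan is to exploit the fact that elimination of a variable commutes with saturation by an element not involving that variable. Concretely, write $R = \K[\bz,\bc,e]$, $S = \K[\bz,\bc]$, $h = z_1\numer(\delta(\tau_1(z))) \in S$, and let $I = \ideal{G} \subset R$. Then by definition $\ideal{G^\prime} = I \cap S$, $\ideal{G_s} = I : h^\infty$, and $\ideal{G_s^\prime} = (I : h^\infty) \cap S$. The right-hand side of the claimed equality is $(I\cap S) : h^\infty$ (computed inside $S$, though since $h \in S$ this agrees with the saturation computed in $R$ intersected with $S$). So the lemma reduces to the purely ideal-theoretic identity
\[
  (I : h^\infty) \cap S \;=\; (I \cap S) : h^\infty .
\]

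First I would prove the inclusion $\supseteq$, which holds for any ideal $I$ and any $h$: if $g \in S$ and $h^N g \in I \cap S$ for some $N$, then in particular $h^N g \in I$, so $g \in I : h^\infty$, and $g \in S$ by hypothesis. For the reverse inclusion $\subseteq$, take $g \in (I : h^\infty) \cap S$, so $g \in S$ and $h^N g \in I$ for some $N$. Since $h \in S$ and $g \in S$, the product $h^N g$ lies in $S$; combined with $h^N g \in I$ this gives $h^N g \in I \cap S$, hence $g \in (I\cap S):h^\infty$. This settles the abstract identity. The one point requiring care is that $\Eliminate$ returns a basis of the contraction to $\K[\bc,\ldots]$ exactly — here I use that the elimination subroutine of Section~\ref{sec:subroutines}, which is correct by Remark~\ref{rem:sr} and \cite[Chapter~3, Section~1, Theorem~2]{CLO}, computes a generating set of $I \cap S$ — and that $\Saturate$ likewise returns a basis of $I:h^\infty$ by Remark~\ref{rem:sr}; so the three displayed identifications of $\ideal{G^\prime}$, $\ideal{G_s}$, $\ideal{G_s^\prime}$ with contractions and saturations of $I$ are justified, and the lemma follows.

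I do not expect a serious obstacle: the content is the commutation of saturation and elimination, which is a short two-inclusion argument as above. The only subtlety worth flagging explicitly in the writeup is that the saturating element $h$ must not involve the eliminated variable $e$ — which is precisely why the argument goes through, and why the order of operations in Algorithm~\ref{algo:acv_minor} (saturate, then take minors, i.e.\ eliminate $e$) produces the same ideal as the order in Algorithm~\ref{algo:acv_e} (introduce $e$, saturate, eliminate $e$). I would state this dependence on $h \in \K[\bz,\bc]$ as the hypothesis making the two inclusions work, and note that $h = z_1\numer(\delta(\tau_1(z)))$ indeed involves none of the introduced indeterminate $e$ by its construction in Corollary~\ref{cor:msj}.
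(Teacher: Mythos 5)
Your proof is correct, but it takes a genuinely different route from the paper. You reduce the statement to the purely ideal-theoretic identity $(I:h^\infty)\cap S=(I\cap S):h^\infty$ for $I=\ideal{G}\subset\K[\bz,\bc,e]$, $S=\K[\bz,\bc]$ and $h=z_1\numer(\delta(\tau_1(z)))\in S$, and prove it by a two-line double inclusion whose only hypothesis is that the saturating polynomial does not involve the eliminated variable $e$; the identifications of $\ideal{G^\prime}$, $\ideal{G_s}$, $\ideal{G_s^\prime}$ with the corresponding contractions and saturations are exactly what Remark~\ref{rem:sr} and the specifications of $\Eliminate$ and $\Saturate$ guarantee, and your care about which ring the right-hand saturation lives in is the right point to flag. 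The paper instead argues algorithmically: it realises $\Saturate$ as elimination of an auxiliary variable $\ell$ via $\ell h-1$, fixes a lexicographic order $e>\ell>\bz>\bc$, and invokes uniqueness of reduced \gbs to conclude that eliminating $e$ before or after the saturation yields the same intersection with $\K[\bz,\bc]$. Your argument is more elementary and more robust — it is independent of how the subroutines are implemented and makes explicit the actual mathematical content (saturation commutes with elimination when $h$ avoids the eliminated variable), whereas the paper's proof buys a description that mirrors how the computation would really be carried out in one \gb run, at the cost of tying correctness to a particular choice of algorithm and ordering. No gap to report.
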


\begin{proof}[Proof of Lemma~\ref{lem:elim}]
  By~\cite[Chapter~2, Section~7, Theorem~4]{CLO}, for a given ideal $I$ and term order $\geq$,
  there exists a unique reduced \gb of $I$ with respect to $\geq$. Note that given a \gb, which
  can be computed using Buchberger's algorithm~\cite[Chapter~2, Section~7, Theorem~2]{CLO}, there
  exists an algorithm to
  compute a reduced \gb~\cite[Proposition 5.56]{beckerGB}. Then, we shall prove this
  by considering algorithms which accomplish the subroutines $\Eliminate$ and $\Saturate$ by returning
  reduced \gbs. We will then show that the reduced \gbs returned by these
  algorithms, in this case, are the same. The key is that the $\Saturate$ subroutine
  can be performed using $\Eliminate$, and so we can perform both operations at the same time.

  Thus, we first make explicit, the algorithms we shall use. The key is that both $\Eliminate$ and
  $\Saturate$ can be performed by computing elimination ideals. To compute elimination ideals, for an
  ideal $I \subset \K[x_1, \dots, x_n]$, we compute a \gb with respect to a lexicographic monomial 
  ordering where $x_1 > \dots> x_n$. By~\cite[Chapter~3, Section~1, Theorem~2]{CLO}, removing from this
  \gb all
  polynomials that involve the variables $x_1, \dots, x_k$, for some $k < n$, gives a \gb of the ideal
  $I \cap \K[x_{k+1}, \dots, x_n]$. For $\Saturate$, by~\cite[Chapter~4, Section~4, Theorem~14]{CLO}, given an ideal
  $I \in \K[x_1, \dots, x_n]$ and a polynomial $g \in \K[x_1, \dots, x_n]$, we can compute a \gb of
  $I : \ideal{g}^\infty$ through the $\Eliminate$ subroutine. To do
  so, first compute a \gb $\cG$ of the
  ideal $I + \ideal{\ell g - 1}$, where $\ell$ is an independent variable, with respect to a
  lexicographic monomial ordering with $\ell > x_1 > \dots> x_n$. Then, $\cG \cap \K[x_1, \dots, x_n]$
  is a \gb of the ideal $I : \ideal{g}^\infty$ with respect to the ordering $x_1 > \dots > x_n$. The
  last operation here is the elimination of the variable $\ell$. As in~\cite[Chapter~3, Section~1, Theorem~2]{CLO}, that is simply remove from $B$ all polynomials that involve $\ell$. 

  Therefore, we may perform both saturations and the elimination of $e$
  with one \gb computation and an intersection. We simply add the polynomial \\
  $\ell z_1\numer(\delta(\tau_1(z))) - 1$ to the list $G$ and consider a \gb
  with respect to a lexicographic monomial ordering with $e > \ell > z_1 > \dots > z_n > c_1,
  \dots > c_p$. This is the output of $\Saturate(G^\prime, z_1\numer(\delta(\tau_1(z))))$.
  We could alternatively eliminate $e$ last, which would make the result $G_s^\prime$. 
  As both ways involve intersecting with the same polynomial ring, $\K[\bz, \bc]$,
  the result is the same.  
\end{proof}

We are now in a position to utilise Theorem~\ref{thm:algo2} to give a proof of correctness for Algorithm~\ref{algo:acv_minor}.

\begin{theorem}\label{thm:algo3}
  Let $\f = (f_1, \dots, f_p) \in \K[\bz]^p$ be a dominant polynomial mapping.
  Suppose that $A \in \GL_n(\K)$ satisfies the genericity condition of Lemma~\ref{lem:ksat} and that
  $\br$ and the corresponding subspace $E \in \mathcal{E}$ satisfies the genericity
  condition of Proposition~\ref{prop:e}. Then, Algorithm~\ref{algo:acv_minor} terminates and returns as
  output a finite basis whose zero set has codimension at least $1$ in $\C^p$ and contains the set of
  asymptotic critical values of~$\f$.
\end{theorem}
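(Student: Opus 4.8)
The plan is to reduce the correctness of Algorithm~\ref{algo:acv_minor} to that of Algorithm~\ref{algo:acv_e} (Theorem~\ref{thm:algo2}), by showing that the two algorithms compute the same ideals at each step, up to the elimination of the auxiliary variable $e$. Termination is immediate: Algorithm~\ref{algo:acv_minor} only calls linear algebra and the subroutines $\Saturate$, $\Eliminate$, $\Intersect$, which terminate by Remark~\ref{rem:sr}. So the work is entirely in correctness.

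First I would fix $1 \leq j \leq p$ and set up the same notation as in the lemmas: let $B$ be a basis of the kernel of $\jac((\f^A)^{[j]})$, let $(v_1, \dots, v_{n-p+1}) = \d f^A_j B$ with common denominator $\delta(z)$, and let $G$ be the list of polynomials from Algorithm~\ref{algo:acv_e} (with the variable $e$). The output $G_s$ of step~9 of Algorithm~\ref{algo:acv_e} is $\Saturate(G, z_1\numer(\delta(\tau_1(z))))$, and eliminating $e$ from it yields $G_s^\prime$ in the notation preceding Lemma~\ref{lem:elim}. Now I observe that the list $N^\prime(z)$ built in step~7 of Algorithm~\ref{algo:acv_minor} consists precisely of $f^A_1 - c_1, \dots, f^A_p - c_p$ together with the $n-p$ polynomials $r_k v_1 - r_1 v_k$ for $k = 2, \dots, n-p+1$. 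By Lemma~\ref{lem:subset} (applied to $f_i = v_i(\tau_1(z))$, which is legitimate since $\br$ is generic, so $r_1 \neq 0$), the ideal generated by the numerators of these $n-p$ polynomials equals the ideal $M$ of Lemma~\ref{lem:minors}. Hence $\ideal{G^\prime \text{ of Algorithm~\ref{algo:acv_minor}}}$ — i.e.\ step~8's output — equals $\ideal{\numer(f^A_1(\tau_1(z)) - c_1, \dots, f^A_p(\tau_1(z)) - c_p)} + M$, which by Lemma~\ref{lem:minors} equals $\ideal{\Eliminate(G, e, \{\bz, \bc\})}$, the $G^\prime$ of the notation before Lemma~\ref{lem:elim}.

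Next I would chain the saturation: step~9 of Algorithm~\ref{algo:acv_minor} computes $\Saturate(G^\prime, z_1\numer(\delta(\tau_1(z))))$, and by Lemma~\ref{lem:elim} this ideal equals $\ideal{G_s^\prime}$, the same ideal Algorithm~\ref{algo:acv_e} would have after step~9 followed by eliminating $e$. From here the two algorithms coincide: adding $z_1$ (in Algorithm~\ref{algo:acv_minor}) versus adding $z_1$ and $e$ (in Algorithm~\ref{algo:acv_e}) to lists whose ideals agree after eliminating $e$ gives the same ideal in $\K[\bz, \bc]$; then $\Eliminate(\cdot, \{\bz\}, \{\bc\})$ produces the same $V_j$ in both cases. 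Since this holds for every $j$, the final $\Intersect(V_1^\prime, \dots, V_p^\prime)$ agrees with the output $\ACV$ of Algorithm~\ref{algo:acv_e}. By Theorem~\ref{thm:algo2}, that output is a finite list of polynomials whose zero set has codimension at least $1$ in $\C^p$ and contains $K_\infty(\f)$, which is the claim.

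I expect the main obstacle to be bookkeeping the order of operations carefully enough that the appeals to Lemma~\ref{lem:minors} and Lemma~\ref{lem:elim} really do line up — in particular, making sure that ``eliminate $e$ then saturate'' and ``saturate then eliminate $e$'' genuinely commute here (which is exactly the content of Lemma~\ref{lem:elim}, relying on the saturating polynomial $z_1\numer(\delta(\tau_1(z)))$ not involving $e$), and that Lemma~\ref{lem:subset} applies with $r_1 \neq 0$ guaranteed by the genericity of $\br$. Once those two commutations are in hand, the proof is essentially a transcription of the proof of Theorem~\ref{thm:algo2} with $e$ removed, and no new geometric input is needed.
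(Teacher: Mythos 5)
Your reduction is correct up to and including the saturation step: identifying $\ideal{G^\prime}$ with $\ideal{\Eliminate(G,e,\{\bz,\bc\})}$ via Lemmas~\ref{lem:subset} and~\ref{lem:minors}, and commuting saturation with the elimination of $e$ via Lemma~\ref{lem:elim}, is exactly the paper's route. The gap is in your final step, where you claim that adding $z_1$ to $G_s^\prime$ and adding $\{z_1,e\}$ to $G_s$ ``give the same ideal in $\K[\bz,\bc]$'', hence the same $V_j$ and the same output. This is false: elimination of $e$ (projection along $e$) does not commute with adjoining the equation $e=0$ (intersecting with $\V(e)$). One only has the inclusion $\ideal{G_s^\prime}+\ideal{z_1}\subseteq\bigl(\ideal{G_s}+\ideal{z_1,e}\bigr)\cap\K[\bz,\bc]$, i.e.\ $\V(G_s\cup\{z_1,e\})\subseteq\V(G_s^\prime\cup\{z_1\})$ after identifying along $e=0$; a point of $\V(G_s^\prime)$ with $z_1=0$ may lie only under fibres of $\V(G_s)$ with $e\neq 0$ (these are exactly the limits where $z_1v_i(\tau_1(z))$ tends to a nonzero value), and such points contribute to $V_j^\prime$ but not to $V_j$. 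The paper itself records this discrepancy in Section~\ref{sec:expo}: the system returned by Algorithm~\ref{algo:acv_minor} can have strictly higher degree, and for $p=1$ the output of Algorithm~\ref{algo:acv_e} is only a \emph{factor} of it. So the two algorithms do not produce the same $\ACV$, and your equality-based transfer of Theorem~\ref{thm:algo2} does not go through as stated.

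The one-sided inclusion that does hold is enough for half of the theorem: since the projection of $\V(G_s\cup\{z_1,e\})$ onto the $\bc$-space contains $K_1^j(\f^A)$ by Theorem~\ref{thm:algo2}, the inclusion gives $K_1^j(\f^A)\subseteq\V(V_j^\prime)$ and hence $K_\infty(\f)\subseteq\V(\ACV^\prime)$. But the codimension claim no longer comes for free from Theorem~\ref{thm:algo2}, because $\V(V_j^\prime)$ may be strictly larger than $\V(V_j)$; you must argue separately that it is still a proper subvariety of $\C^p$. The paper does this by noting that the saturated variety computed in Algorithm~\ref{algo:acv_minor} has dimension at most $p$, that after saturation by $z_1\numer(\delta(\tau_1(z)))$ the coordinate $z_1$ vanishes identically on no component (so intersecting with $\V(z_1)$ drops the dimension to at most $p-1$ by~\cite[Theorem~1.24]{IS}, as in Corollary~\ref{cor:dim}), and that projection onto the $\bc$-space does not increase dimension. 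Your proposal omits this dimension argument, so as written it establishes neither the equality of outputs it relies on nor the codimension statement.
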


\begin{proof}[Proof of Theorem~\ref{thm:algo3}]
  We will show that the only difference between Algorithm~\ref{algo:acv_e} and
  Algorithm~\ref{algo:acv_minor} is when the independent variable $e$ is eliminated
  (and subsequently, $e$ is not added to the list $G_s$ in step 11). We then show that this does
  not lose any asymptotic critical values. The proof that the result set has codimension at least $1$
  in $\C^p$ is identical to the proof in Theorem~\ref{thm:algo2}.

  The first seven steps of Algorithm~\ref{algo:acv_e} and Algorithm~\ref{algo:acv_minor} are the same,
  where we choose $\br$ and $A$ to satisfy the genericity assumptions
  in Theorem~\ref{thm:algo2}.
  Then, in step~8 of Algorithm~\ref{algo:acv_minor}, the list of polynomials and rational mappings \[
    N^\prime(z) = \{f^A_1(z)-c_1,\ldots,f^A_p(z)-c_p,
    r_2v_1(z) - r_1v_2(z),\ldots,
    r_{n-p+1}v_1(z) - r_1v_{n-p+1}(z)\}
  \]
  is defined. By Lemma~\ref{lem:subset}, the ideal defined by this basis is equal to the following ideal
  \[\ideal{f^A_1(z)-c_1,\ldots,f^A_p(z)-c_p} + M.\]

  Denote by $G$ the following list of polynomials defined in step $8$ of Algorithm~\ref{algo:acv_e}
  \[
    \numer(f^A_1(\tau_1(z))-c_1,\ldots,f^A_p(\tau_1(z))-c_p,
    v_1(\tau_1(z))-r_1z_1e,\ldots, v_{n-p+1}(\tau_1(z))-r_{n-p+1}z_1e).
  \]

  By Lemma~\ref{lem:minors}, the ideal $\numer(\ideal{N^\prime(\tau_1(z))})$
  is generated by the basis returned by the subroutine $\Eliminate(G, e, \{\bz, \bc\})$.
  Therefore, we can conclude that the difference between Algorithm~\ref{algo:acv_e} and
  Algorithm~\ref{algo:acv_minor} is the step when the variable $e$ is eliminated. Then,
  Algorithm~\ref{algo:acv_minor} terminates by Theorem~\ref{thm:algo2}. It remains to show that
  the variety generated by the output is of dimension at most $p-1$ and contains 
  the set of asymptotic critical values of $\f$.

  Denote $\Saturate(G, z_1\numer(\delta(\tau_1(z))))$ by $G_s$ (the output of step 10
  of Algorithm~\ref{algo:acv_e}) and $\Eliminate(G_s, e, \{\bz, \bc\})$ by $G_s^\prime$.
  By Lemma~\ref{lem:elim}, $G_s^\prime$ is equal to
  \[\Saturate(\allowbreak \numer(N(\tau_1(z))), z_1\numer(\delta(\tau_1(z)))),\] the output of step 10
  of Algorithm~\ref{algo:acv_minor}.
  By~\cite[Chapter~3, Section~1, Theorem~2]{CLO}, the ideal generated by $G_s^\prime$ is contained in
  the ideal generated by $G_s$. Thus, by~\cite[Chapter~1, Section~4, Proposition~8]{CLO},
  $\V(G_s) \subset \V(G_s^\prime)$ and so
  $\V(G_s \cup \{z_1, e\})  \subset \V(G_s^\prime \cup \{z_1\})$. Since these two varieties are the
  varieties generated by the outputs of step 11 of Algorithm~\ref{algo:acv_e} and
  Algorithm~\ref{algo:acv_minor} respectively, and since the projection of $\V(G_s \cup \{z_1, e\})$
  onto the $\bc$-space is the set $K_1^j(\f^A)$ by Theorem~\ref{thm:algo2}, we can
  conclude that Algorithm~\ref{algo:acv_minor} returns as output a basis
  whose zero set contains the set of asymptotic critical values of $\f$. Furthermore, as in the proof
  of Theorem~\ref{thm:algo2}, the variety $\V(G_s)$ has dimension at most $p$ and therefore
  by~\cite[Theorem 1.24]{IS}, $\V(G_s) \cap \V(z_1)$ has dimension at most $p-1$. The result then
  follows from~\cite[Chapter~9, Section~4, Theorem~8]{CLO}, for varieties $X$ and $Y$,
  $\dim(X \cup Y) = \max(\dim(X), \dim(Y))$.
\end{proof}


\section{Degree bounds and complexity estimates}\label{sec:proof}
In this section we prove the main results stated in Section~\ref{sec:intro}.
\subsection{Proof of Theorem~\ref{prop:degree}}
\label{sec:degree}

In this subsection, we use Algorithm~\ref{algo:acv_minor} to bound the degree of the set of asymptotic
critical values of polynomial mappings. In both of the algorithms designed in this paper, for each
choice of $1 \leq j \leq p$ assuming that $p>1$, we must compute a basis of the kernel of
$\jac((\f^A)^{[j]})$. Thus, we begin with a lemma bounding the degrees of the entries of such a basis.

\begin{lemma}\label{lem:basisdeg}
  Let $\f = (f_1, \ldots, f_p) \in \K[z_1, \ldots, z_n]^p$ be a dominant polynomial mapping and let $A \in
  \GL_n(\K)$ satisfy the genericity condition of Lemma~\ref{lem:ksat}. 
  Let $d = \max_{1 \leq i \leq p} \deg f_i$. Then, for all $1 \leq j \leq p$, there exists a basis $B$
  of the kernel of $\jac((\f^A)^{[j]})$ such that the entries of $B$ are rational functions whose
  numerators and denominators have degree at most $(p-1)(d-1)$.
\end{lemma}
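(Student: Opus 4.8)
The plan is to exhibit an explicit basis of the kernel via the generalised Cramer's rule of \cite{gencramerrule}, and then bound the degrees by a direct cofactor count. First dispose of the degenerate case $p=1$: here $\jac((\f^A)^{[j]})$ is the empty matrix, its kernel is $\K^n$ by the convention fixed in the introduction, and the standard basis of $\K^n$ has entries of degree $0=(p-1)(d-1)$. So assume $p>1$. Then $\jac((\f^A)^{[j]})$ is a $(p-1)\times n$ matrix whose entries are polynomials of degree at most $d-1$. Since $\f$ is dominant, so is $\f^A$, hence $\jac(\f^A)$ has rank $p$ on a dense Zariski open subset of $\C^n$; deleting a row, $\jac((\f^A)^{[j]})$ has rank $p-1$ there, so some maximal minor is a non-zero polynomial. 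Genericity of $A$ enters at this point (one may simply append this condition to those defining the open set of Lemma~\ref{lem:ksat}): for generic $A$ the submatrix $M_0$ formed by the first $p-1$ columns of $\jac((\f^A)^{[j]})$ is invertible over the function field $\K(\bz)$, because the vanishing of $\det M_0$ as a polynomial identity in $\bz$ is a proper Zariski closed condition on $A$ (it fails, e.g., for a coordinate permutation bringing an independent set of columns to the front). Write $\delta=\det M_0\neq 0$ and let $M_1$ be the $(p-1)\times(n-p+1)$ matrix of the remaining columns; this is exactly the polynomial $\delta$ computed in step 6 of Algorithms~\ref{algo:acv_e} and~\ref{algo:acv_minor}.

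Next I would build the basis explicitly. For $k=1,\dots,n-p+1$, let $e_k$ be the $k$-th standard basis vector of $\K^{n-p+1}$, let $(M_1)_{\cdot,k}$ be the $k$-th column of $M_1$, and set
\[
  b_k = \bigl(-M_0^{-1}(M_1)_{\cdot,k},\ e_k\bigr)\in\K(\bz)^n .
\]
A one-line computation gives $\jac((\f^A)^{[j]})\,b_k = M_0\bigl(-M_0^{-1}(M_1)_{\cdot,k}\bigr)+M_1 e_k = 0$, so each $b_k$ lies in the kernel. The $b_k$ are linearly independent over $\K(\bz)$ because their last $n-p+1$ coordinates form the standard basis, and since the kernel of $\jac((\f^A)^{[j]})$ has dimension $n-(p-1)=n-p+1$ over $\K(\bz)$ they constitute a basis $B$.

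Finally I would bound degrees. By Cramer's rule $M_0^{-1}=\delta^{-1}\,\operatorname{adj}(M_0)$, and since $M_0$ is $(p-1)\times(p-1)$ with polynomial entries of degree at most $d-1$, each entry of $\operatorname{adj}(M_0)$ is $\pm$ a $(p-2)\times(p-2)$ minor of $M_0$, hence a polynomial of degree at most $(p-2)(d-1)$. Multiplying $\operatorname{adj}(M_0)$ by the column $(M_1)_{\cdot,k}$, whose entries have degree at most $d-1$, produces entries of degree at most $(p-2)(d-1)+(d-1)=(p-1)(d-1)$, while $\deg\delta\le(p-1)(d-1)$ and the remaining coordinates of $b_k$ are constants; hence every entry of $B$ is a rational function whose numerator and denominator have degree at most $(p-1)(d-1)$. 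The argument is essentially bookkeeping with Cramer's rule; the only points requiring care are (i) justifying that for generic $A$ the first $p-1$ columns of $\jac((\f^A)^{[j]})$ are independent over $\K(\bz)$ — which follows from dominance of $\f^A$ together with the fact that dependence of a fixed set of columns is a proper Zariski closed condition on $A$ — and (ii) correctly accounting that the cofactors of the $(p-1)\times(p-1)$ matrix $M_0$ are $(p-2)\times(p-2)$ minors, which is what keeps the numerator bound at $(p-1)(d-1)$ rather than $p(d-1)$.
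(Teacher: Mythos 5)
Your proposal is correct and follows essentially the same route as the paper: the paper invokes the generalised Cramer's rule of \cite{gencramerrule} to obtain a kernel basis whose entries are maximal minors over the common denominator $\delta=\det M_0$, which is exactly the basis $b_k=\bigl(-M_0^{-1}(M_1)_{\cdot,k},\,e_k\bigr)$ you construct by hand, with the same degree accounting yielding the bound $(p-1)(d-1)$. Your version merely spells out the cited black box, makes explicit the (harmless) genericity requirement that the first $p-1$ columns be independent for generic $A$, and treats the trivial case $p=1$ separately.
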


\begin{proof}[Proof of Lemma~\ref{lem:basisdeg}]
  First, fix some $1 \leq j \leq p$ and consider the matrix $\jac((\f^A)^{[j]})$. Since $\f$ is
  dominant, by the genericity of $A$, $\f^A$ is also dominant. Thus, the Jacobian matrix
  $\jac((\f^A)^{[j]})$ has rank $p-1$ outside of a proper Zariski closed subset of $\C^n$. By the
  generalised Cramer's Rule for full rank matrices~\cite{gencramerrule}, we can find a basis of the
  null space of the Jacobian whose entries are rational functions of linear combinations of the minors
  of this matrix with a common denominator. This common denominator is itself the determinant of the
  submatrix comprised of the first $p-1$ columns of $\jac((\f^A)^{[j]})$. Hence, the numerator and
  denominator of each entry has degree at most the degree of the maximal minors which is at most
  $(p-1)(d-1)$. 
\end{proof}

We shall prove our main degree result by applying B\'ezout's Theorem~\cite[Theorem~1]{heintz} to
the finite lists of polynomials defined in Algorithm~\ref{algo:acv_minor}. Therefore, we first
give a lemma that will bound the degree of the lists $G$ and $G^\prime$ that are defined
in Algorithms~\ref{algo:acv_e} and~\ref{algo:acv_minor} respectively. We recall the
construction introduced in Corollary~\ref{cor:msj} and improved upon in Theorem~\ref{thm:algo3}.
Let $A \in \GL_n(\K)$ and let $\br \in \K$ so that the genericity assumptions of
Theorem~\ref{thm:algo3} hold. For fixed $1\leq j\leq p$ and a basis $B$ of the null space of
$\jac((\f^A)^{[j]})$, define $v_1(z), \dots, v_{n-p+1}(z) = \nabla f^A_j B$. Then,
\begin{align*}
  G &= \numer(f^A_1(\tau_1(z))-c_1,\ldots,f^A_p(\tau_1(z))-c_p, \\
  &\qquad r_1z_1e - v_1(\tau_1(z)), \dots, r_{n-p+1}z_1e - v_{n-p+1}(\tau_1(z))), \\
    G^\prime &=  \numer(f^A_1(\tau_1(z))-c_1,\dots,f^A_p(\tau_1(z))-c_p, \\
    &\qquad r_2v_1(\tau_1(z)) - r_1v_2(\tau_1(z)),\dots, 
    r_{n-p+1}v_1(\tau_1(z)) - r_1v_{n-p+1}(\tau_1(z))).
\end{align*}
  
\begin{lemma}\label{lem:degG}
  Let $\f = (f_1, \ldots, f_p) \in \K[z_1, \ldots, z_n]^p$ be a dominant polynomial mapping.
  For fixed $1 \leq j \leq p$, let $G$ and $G^\prime$ be finite lists of polynomials defined as above.
  Then, the degree of the highest dimension components of the algebraic sets $\V(G)$ and $\V(G^\prime)$
  are at most $(p(d-1)+2)^{n-p+1}(d+1)^p$ and $(p(d-1))^{n-p}(d+1)^p$ respectively.
\end{lemma}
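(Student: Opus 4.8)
The plan is to propagate degree estimates through the construction of Section~\ref{sec:alg} and then conclude with the affine B\'ezout inequality~\cite[Theorem~1]{heintz}. First I would record the degrees of the ingredients. By Lemma~\ref{lem:basisdeg} and the generalised Cramer rule~\cite{gencramerrule}, for the fixed $j$ one may take $v_i(z) = P_i(z)/\delta(z)$, where $\delta$ is the determinant of the first $p-1$ columns of $\jac((\f^A)^{[j]})$, so that $\deg\delta \le (p-1)(d-1)$, and where, since the entries of $\nabla f^A_j$ have degree at most $d-1$, one has $\deg P_i \le (d-1) + (p-1)(d-1) = p(d-1)$.

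The next step is to control the effect of the change of variables $\tau_1$ followed by taking numerators. The basic observation is that for a polynomial $q$ of degree exactly $e$ one has $q(\tau_1(z)) = \hat q(z)\,z_1^{-e}$, where $\hat q(z) := z_1^e q(\tau_1(z))$ has degree at most $e$ and is not divisible by $z_1$ (its restriction to $\{z_1 = 0\}$ collects precisely the top-degree coefficients of $q$, at least one of which is nonzero). I would apply this componentwise. For the entries coming from $\f^A$ it gives $\deg\bigl(\numer(f^A_i(\tau_1(z)) - c_i)\bigr) \le d+1$, the extra $+1$ being produced by the monomial $c_i z_1^{\deg f^A_i}$. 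For the entries of $G$ coming from the kernel basis, writing $v_i(\tau_1(z)) = \hat P_i(z)\,z_1^{\deg\delta}/\bigl(\hat\delta(z)\,z_1^{\deg P_i}\bigr)$, the factor $z_1^{\min(\deg\delta,\deg P_i)}$ is common to numerator and denominator, and after cancelling it one obtains
\[
  \deg\bigl(\numer(r_i z_1 e - v_i(\tau_1(z)))\bigr) \le \max(\deg P_i, \deg\delta) + 2 \le p(d-1) + 2 .
\]
For the entries of $G'$ the denominator $\hat\delta$ is common to $v_1(\tau_1(z))$ and $v_i(\tau_1(z))$, so the same cancellation of a power of $z_1$ gives
\[
  \deg\bigl(\numer(r_i v_1(\tau_1(z)) - r_1 v_i(\tau_1(z)))\bigr) \le \max(\deg P_1, \deg P_i, \deg\delta) \le p(d-1) .
\]
I expect this cancellation of the spurious power of $z_1$ to be the one genuinely delicate point: if it is overlooked the bound degrades to something of order $(2p-1)(d-1)$.

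To finish, $G$ is a system of $n+1$ polynomials in the $n+p+1$ variables $\bz,\bc,e$, namely $p$ of degree at most $d+1$ and $n-p+1$ of degree at most $p(d-1)+2$; and $G'$ is a system of $n$ polynomials in the $n+p$ variables $\bz,\bc$, namely $p$ of degree at most $d+1$ and $n-p$ of degree at most $p(d-1)$. In both cases the number of equations does not exceed the number of variables, so by the affine B\'ezout inequality~\cite[Theorem~1]{heintz} the sum of the degrees of the irreducible components of $\V(G)$ (respectively $\V(G')$), hence in particular the degree of the union of its top-dimensional components, is at most the product of the degrees of the defining polynomials, that is $(p(d-1)+2)^{n-p+1}(d+1)^p$ (respectively $(p(d-1))^{n-p}(d+1)^p$). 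Apart from this B\'ezout estimate and Lemma~\ref{lem:basisdeg}, the only real work is the $z_1$-exponent bookkeeping of the second step.
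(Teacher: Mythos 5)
Your proposal is correct and follows essentially the same route as the paper: bound the degree of each polynomial in $G$ and $G'$ (degree $d+1$ for the $p$ numerators $\numer(f_i^A(\tau_1(z))-c_i)$, and $p(d-1)+2$ resp.\ $p(d-1)$ for the kernel-basis equations, using Lemma~\ref{lem:basisdeg} and the common denominator $\delta$ from~\cite{gencramerrule}), then apply B\'ezout's inequality~\cite[Theorem~1]{heintz} to the product of these degrees. Your explicit $z_1$-exponent bookkeeping after applying $\tau_1$ is just a more detailed rendering of the paper's observation that the numerator and denominator of $v_i(\tau_1(z))$ both have degree at most $p(d-1)$ and that the minors share a common denominator.
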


\begin{proof}
  By \cite[Chapter~4, Section~6, Theorem~2]{CLO}, the variety $\V(G)$ can be expressed in terms of its
  irreducible components $\V(G) = W_1 \cup \cdots \cup W_k$. Since we are only concerned with the
  components of highest dimension, we may apply B\'ezout's Theorem~\cite[Theorem~1]{heintz} directly
  to bound this degree and similarly for the algebraic set $\V(G^\prime)$.

  Firstly, note that the degrees of the polynomials $f_1^A, \dots, f_p^A$ are still bounded by $d$ since
  $A$ is a generic linear change of coordinates. Then for $1 \leq i \leq p$, the degrees of the
  numerator and denominator of $f_i^A(\tau_1(z))$ are bounded by $d$. Therefore, the degree of the
  numerator of $f_i^A(\tau_1(z)) - c_i$ is at most $d+1$.

  We must now bound the degrees of $v_1, \dots, v_{n-p+1}$. To do this, first note that the degrees of
  the components of the gradient of $f_j^A$ are bounded by $d-1$. Then, we invoke
  Lemma~\ref{lem:basisdeg} to bound the degree of the entries of a basis $B$ of the null space of
  $\jac((\f^A)^{[j]})$. Thus, $v_1, \dots, v_{n-p+1}$, the dot products of the basis vectors with the
  gradient of $f_j^A$, have degree at most $p(d-1)$.

  Consider the list $G$. After applying the transformation $\tau_1(z)$ to the rational functions
  $v_1, \dots, v_{n-p+1}$, we obtain a rational function whose numerator and denominator have
  degree $p(d-1)$. Thus, for $1 \leq i \leq n-p+1$, the polynomial
  $\numer(r_i z_1 e - v_i(\tau_1(z)))$
  has degree $p(d-1)+2$. Therefore, the list $G$ has $n-p+1$ polynomials of degree $p(d-1)+2$ and
  $p$ polynomials of degree $d+1$.

  Now consider the list $G^\prime$. Recall that the entries of the basis $B$ have a common denominator.
  Therefore, the rational functions $v_1, \dots, v_{n-p+1}$ also have a common denominator.
  Thus, when we now compute the minors of the matrix,
  \[
    \begin{bmatrix}
      v_1(\tau_1(z)) & \cdots & v_{n-p+1}(\tau_1(z)) \\
      r_1 & \cdots & r_{n-p+1}
    \end{bmatrix},
  \]
  where $r_1, \dots, r_{n-p+1}$ are generic elements of the base field, we again find a common
  denominator that does not increase the degree of the numerator. Hence, for $2 \leq i \leq n-p+1$, the
  degree of $\numer(r_i v_1(\tau_1(z)) - r_1 v_i(\tau_1(z)))$ is $p(d-1)$. Then, the list $G^\prime$
  has $n-p$ polynomials of degree $p(d-1)$ and $p$ polynomials of degree $d+1$. One now applies
  B\'ezout's Theorem~\cite[Theorem~1]{heintz} to get the result.
\end{proof}

Now that we have bounded the degrees of the objects considered in our algorithms, we may now
prove our main degree result.

\begin{reptheorem}{prop:degree}
  Let $\f = (f_1, \ldots, f_p) \in \K[z_1, \ldots, z_n]^p$ be a dominant polynomial mapping.
  Let $d = \max_{1 \leq i \leq p} \deg f_i$.
  Then the asymptotic critical values of $\f$ are contained in a
  hypersurface of degree at most $p^{n-p+1}(d-1)^{n-p}(d+1)^{p}$.
 \end{reptheorem}

\begin{proof}[Proof of Theorem~\ref{prop:degree}]
  Let $A \in \GL_n(\K)$ and let $\br \in \K$ so that the genericity assumptions of Theorem~\ref{thm:algo3}
  hold. We consider the following finite lists of polynomials computed in Algorithm~\ref{algo:acv_minor}.
  \begin{align*}
    G^\prime &=  \numer(f^A_1(\tau_1(z))-c_1,\ldots,f^A_p(\tau_1(z))-c_p, \\
    &\qquad r_2v_1(\tau_1(z)) - r_1v_2(\tau_1(z)),\ldots, 
    r_{n-p+1}v_1(\tau_1(z)) - r_1v_{n-p+1}(\tau_1(z))), \\
    G_s^\prime &= \Saturate(G^\prime, z_1\numer(\delta(\tau_1(z)))), \\
    L^\prime &= G_s^\prime \cup \{z_1\}, \\
    V_j^\prime &= \Eliminate(L^\prime, \{\bz\}, \{\bc\}).  
  \end{align*}
  By Lemma~\ref{lem:ksat} and Theorem~\ref{thm:algo3}, $K_\infty(\f) \subset \bigcup_{j=1}^p \V(V_j^\prime)$.
  Thus, the degree of $K_\infty(\f)$ is bounded by the degree of $\bigcup_{j=1}^p \V(V_j^\prime)$. We now aim
  to use the algebraic description of $V_j^\prime$ as detailed above to give a bound on the degree of $\V(V_j^\prime)$.
  Then, the degree of $K_\infty(\f)$ will be bounded by $p$ times that bound.

  By \cite[Chapter~4, Section~6, Theorem~2]{CLO}, the variety $\V(G^\prime)$ can be expressed in terms of its
  irreducible components $\V(G^\prime) = W_1 \cup \cdots \cup W_k$.
  By~\cite[Chapter~4, Section~4, Theorem~10]{CLO}, the saturation
  \\\[\ideal{G_s^\prime} = \ideal{G}:\ideal{z_1\numer(\delta(\tau_1(z)))}^\infty\] corresponds to the variety
  \[\V(G_s^\prime) = \overline{\V(G^\prime) \setminus \V(z_1\numer(\delta(\tau_1(z))))}.\] Thus, $\V(G_s^\prime)$ is the
  union of a subset of the irreducible components of $\V(G^\prime)$, $\V(G_s^\prime) = W_{i_1} \cup \cdots \cup W_{i_{\ell}}$
  such that $W_{i_j} \nsubseteq \V(z_1\numer(\delta(\tau_1(z))))$ for $1 \leq j \leq \ell$. Hence, we can bound
  the degree of $\V(G_s^\prime)$ by the so-called strong degree of $\V(G^\prime)$, that is the sum of the degrees
  of the equidimensional components. 
  Similarly, by~\cite[Chapter~4, Section~3, Theorem~4]{CLO}, adding
  polynomials to an ideal equates to intersecting the respective varieties. Therefore, by B\'ezout's
  Theorem~\cite[Theorem~1]{heintz} and since the polynomial, $z_1$, that we add has degree $1$ we conclude that the
  degree of $\V(L^\prime)$ is bounded by the degree of $\V(G_s^\prime)$. Furthermore, by \cite[Lemma~2]{heintz},
  since the projection is an affine map and since $\V(L^\prime)$ is a constructible set in the Zariski
  topology, the degree of $\V(V_j^\prime)$ is bounded above by the degree of $\V(L^\prime)$.
 
  Note that we only need to consider the degree of the components of
  $\V(G^\prime)$ of highest dimension since $\V(G_s^\prime)$ is contained in the union of these components.
  In summary, we can bound the degree of $K_1^j(\f^A) \subseteq \V(V_j^\prime)$ by bounding the degree of
  $\V(G^\prime)$. Thus, by Lemma~\ref{lem:degG}, the degree of $K_1^j(\f^A)$ is at most $(p(d-1))^{n-p}(d+1)^p$.
  Therefore, the degree of $K_\infty(f)$ is at most $p^{n-p+1}(d-1)^{n-p}(d+1)^p$.
\end{proof}


\subsection{Proof of Theorem~\ref{prop:comp}}
\label{sec:comp}
In this subsection, we assess the worst-case 
complexity of the
three algorithms given in this paper. We apply the complexity results
attained by the \geo algorithm given in~\cite{GML}.

Let $\Mult(n)$ be a cost function for multiplying two univariate
polynomials of degree at most $n$ in terms of operations in the base
field.
For instance, $\Mult(n) = O(n \log n \log\log n)$ using the
Cantor--Kaltofen algorithm~\cite{CantorK1991}.

We also denote by $\omega$, $2 \leq \omega \leq 3$, the linear
algebra complexity exponent. That is two matrices of size $n \times
n$ over a field can be multiplied in
$O(n^\omega)$ operations in the base field. At the time of writing, the
best upper bound for $\omega$ is $2.3728639$ due to Le Gall~\cite{LeGall}.
Denote by $\Omega = 1+\omega$ the related constant of the complexity 
exponent
of linear algebra over a ring.  
Finally, we denote the evaluation complexity of the polynomials of
$G$ and $z_1\numer(\delta(\tau_1(z)))$, defined in each Algorithm, by $L$.   

Given a dominant polynomial mapping $(f_1, \dots, f_p) \in \K[\bz]$ as input, for each choice of
$1 \leq j \leq p$, the first steps of both the algorithms designed in this paper construct polynomials
that will be the input of the algebraic elimination algorithms we use. In the case of our first
algorithm, these polynomials are those described in Corollary~\ref{cor:msj}. For the second algorithm,
there are small differences described in Theorem~\ref{thm:algo2}. For both algorithms, the considered
polynomials involve the computation of rational functions $v_1, \dots, v_{n-p+1}$ and polynomial
$\delta$. We recall the proof of Corollary~\ref{cor:msj} where these are defined. Let $B$ be a basis of
the kernel of the Jacobian matrix $\jac((\f^A)^{[j]})$. Then, define $v_1(z), \dots, v_{n-p+1}(z)$ to
be such that $v_i(z)$ is the product of the gradient of $f_j^A$ with the $i$th element of the basis $B$
and let $\delta$ be the determinant of the first $p-1$ columns of $\jac((\f^A)^{[j]})$. We give a lemma
that bounds the number of arithmetic operations in $\K$ required to compute $v_1, \dots, v_{n-p+1}$ and
$\delta$. 

\begin{lemma}\label{lem:evalinter}
  Let $(f_1, \dots, f_p) \in \K[\bz]$ be a dominant polynomial mapping. Let the rational functions
  $v_1, \dots, v_{n-p+1}$ and polynomial $\delta$ be defined as above. Then, computing
  $v_1, \dots, v_{n-p+1}$ and $\delta$ requires at most $O((p-1)^{n+\omega}n^{p-1}(d-1)^n)$ arithmetic
  operations in $\K$.  
\end{lemma}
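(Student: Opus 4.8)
The plan is to compute $\delta$ and $v_1,\dots,v_{n-p+1}$ by a univariate evaluation--interpolation scheme: a Kronecker substitution collapses $z_1,\dots,z_n$ into one variable $t$, and at each interpolation node the work reduces to a determinant, an adjugate of a $(p-1)\times(p-1)$ matrix, and a few matrix--vector and dot products over $\K$. We may assume $p\ge 2$, since for $p=1$ the matrix $\jac((\f^A)^{[1]})$ is empty, $w_1=\nabla f_1$ and $\delta=1$, so there is nothing to do.

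I would begin by recording the degrees involved, all immediate from Lemma~\ref{lem:basisdeg} and the generalised Cramer rule of~\cite{gencramerrule}. As $\f$ is dominant and $A$ generic, the $(p-1)\times n$ matrix $\jac((\f^A)^{[j]})$ has generic rank $p-1$, its submatrix $P$ of first $p-1$ columns is invertible on a dense open set, and its entries have total degree at most $d-1$. Then $\delta=\det P$ has degree at most $(p-1)(d-1)$; taking the first $p-1$ columns as pivots, each kernel basis vector $b_k$ ($k=p,\dots,n$) is, after clearing the denominator $\delta$, supported on at most $p$ coordinates whose values are $(p-1)\times(p-1)$ minors of $\jac((\f^A)^{[j]})$, of degree at most $(p-1)(d-1)$; hence $\numer(v_i)=\nabla f_j^A\cdot(\delta b_{p-1+i})$ has degree at most $p(d-1)$, with common denominator $\delta$. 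These Cramer identities hold wherever $\delta\neq0$, hence as polynomial identities, so $\delta$, the minors and the $\numer(v_i)$ are genuine polynomials to be reconstructed.

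Next I would fix the Kronecker substitution $z_i\mapsto t^{\mu^{i-1}}$ with $\mu=p(d-1)+1$, which is injective on the monomials of $\delta$, of those minors and of the $\numer(v_i)$, so each is recovered unambiguously from its univariate image, a polynomial in $t$ of degree below $\mu^n$. It then suffices to pick $N=\mu^n$ distinct nodes $t_1,\dots,t_N\in\K$, evaluate the Kronecker images of $\nabla f_j^A$ and of $\jac((\f^A)^{[j]})$ at each node, and, at each node, compute over $\K$: the scalar $\delta(t_\ell)$ and the adjugate $\mathrm{adj}(P(t_\ell))$ in $O((p-1)^{\omega})$ operations; and then, for each of the $n-p+1$ non-pivot columns $k$, the product $\mathrm{adj}(P(t_\ell))\,M_{\cdot,k}(t_\ell)$ in $O((p-1)^2)$ operations followed by a dot product with $\nabla f_j^A(t_\ell)$ that, being supported on $p$ coordinates, costs $O(p)$. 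This is $O((p-1)^{\omega}+n(p-1)^2)=O((p-1)^{\omega}n^{p-1})$ per node for $p\ge 2$. Multiplying by $N=\mu^n=O\!\left(((p-1)(d-1))^{n}\right)$ gives $O\!\left((p-1)^{n+\omega}n^{p-1}(d-1)^{n}\right)$; reconstructing the $O(pn)$ distinct output polynomials from their $N$ values by fast univariate interpolation and undoing the Kronecker substitution do not dominate, and neither does the preliminary cost of expanding $f_j^A=f_j(Az)$ and forming its partial derivatives.

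The difficulty is in the bookkeeping rather than in any single step, and two points demand care. First, the Kronecker modulus must be large enough to encode the degree-$p(d-1)$ numerators $\numer(v_i)$ while keeping the node count of order $((p-1)(d-1))^{n}$ --- one therefore has to absorb the $2^{O(n)}$ factor $\bigl(\mu/((p-1)(d-1))\bigr)^{n}$ into the $O(\cdot)$. Second, the per-node linear algebra must be arranged so that the determinant, all the minors occurring in the kernel basis and the dot products defining the $v_i$ together cost only $O((p-1)^{\omega}n^{p-1})$: the key is to form $\mathrm{adj}(P(t_\ell))$ once and reuse it for all $n-p+1$ non-pivot columns, and to exploit that each kernel-basis numerator is supported on at most $p$ coordinates, so that assembling $\numer(v_i)$ from $\nabla f_j^A$ costs $O(p)$ rather than $O(n)$.
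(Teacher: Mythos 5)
Your overall strategy is the same as the paper's (Kronecker substitution, multipoint evaluation, linear algebra of size $p-1$ at each node, then interpolation), and your degree bookkeeping for $\delta$, the maximal minors and $\numer(v_i)$ is correct. The gap is in the node count. Because you reconstruct the numerators $\numer(v_i)$, of degree $p(d-1)$, directly from their values, you are forced to take the Kronecker modulus $\mu=p(d-1)+1$ and hence $N=\mu^n\ge (p(d-1))^n$ evaluation points. The discrepancy factor $\bigl(\mu/((p-1)(d-1))\bigr)^n\ge\bigl(p/(p-1)\bigr)^n$ that you propose to ``absorb into the $O(\cdot)$'' is exponential in $n$ for fixed $p$, and a big-$O$ absorbs only constant factors. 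Concretely, for $p=2$ the claimed bound is $O\bigl(n(d-1)^n\bigr)$, while your scheme already spends at least $(2d-1)^n$ operations merely visiting the nodes, and $(2d-1)^n$ is not $O\bigl(n(d-1)^n\bigr)$. So, as written, the proposal does not establish the stated complexity.

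The paper sidesteps this by never interpolating anything of degree exceeding $(p-1)(d-1)$: it uses the modulus $(p-1)(d-1)+1$, evaluates the entries of $\jac((\f^A)^{[j]})$ at the nodes, performs a Gaussian elimination per node for each of the $\binom{n}{p-1}\approx n^{p-1}$ maximal minors (this is where the factor $n^{p-1}$ in the bound comes from), interpolates $\delta$ and these minors, and only then forms the dot products with the gradient, a final step it bounds separately and treats as negligible. Your adjugate idea, computing $\mathrm{adj}(P(t_\ell))$ once per node and reusing it for all $n-p+1$ non-pivot columns, is a genuine refinement over redoing eliminations for every minor, and it would fit within the paper's bound if you stopped at the kernel-basis numerators (degree $(p-1)(d-1)$) and postponed the multiplication by $\nabla f_j^A$ to a separate polynomial-multiplication step; folding that product into the per-node values is precisely what forces the larger modulus and breaks the bound. (A minor further point: at nodes where $\delta(t_\ell)=0$ the adjugate cannot be obtained by inverting $P(t_\ell)$, so such nodes must be discarded, oversampled, or handled by a division-free adjugate computation.)
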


\begin{proof}[Proof of Lemma~\ref{lem:evalinter}]
  We consider a Jacobian matrix of size $(p-1) \times n$ whose entries have degree $d-1$ in $n$
  variables and we want to compute a basis of its null space. It consists of $n-p+1$ vectors
  of rational functions. By~\cite{gencramerrule}, the entries of these vectors have a common
  denominator, the polynomial $\delta$, that is a maximal minor of $\jac((\f^A)^{[j]})$. Moreover, each
  numerator is a linear combination of maximal minors. Thus, these vectors are rational functions whose
  numerators and common denominator have degree $(p-1)(d-1)$ in $n$ variables. 

  We first consider the denominator, $\delta$, and proceed by Kronecker
  substitution~\cite[Chapter 8.4]{GathenG2013}.
  Since $z_1$ appears with degree at most $(p-1)(d-1)$, we set
  \[z_2 = z_1^{(p-1)(d-1)+1}.\]
  Likewise, $z_2$ appears with degree at most $(p-1)(d-1)$ so we can set
  \[
    z_3 = z_2^{(p-1)(d-1)+1} = z_1^{((p-1)(d-1)+1)^2}\] and so on until
  \[z_n = z_1^{((p-1)(d-1)+1)^{n-1}}.\]
  The result is a univariate polynomial whose
  highest degree monomial comes from
  \[z_n^{(p-1)(d-1)} = z_1^{((p-1)(d-1)+1)^{n-1}(p-1)(d-1)},\]
  so of degree $O((p-1)^n (d-1)^n)$. By the same Kronecker substitution,
  the entries of the minor corresponding to the denominator
  can be seen as univariate polynomials whose highest degree monomials comes from
  \[z_n^{d-1} = z_1^{((p-1)(d-1)+1)^{n-1}(d-1)}.\]
  Therefore, they are of degree 
  $O((p-1)^{n-1} (d-1)^n)$.

  By fast multi-point evaluation techniques, we evaluate these $(p-1)^2$ entries of the minor
  in $O((p-1)^n (d-1)^n)$ points in $O^\sim((p-1)^{n+2} (d-1)^n)$ operations in the base field
  ~\cite[Chapter 10.1]{GathenG2013}.
  
  We now perform a Gaussian elimination of size $(p-1) \times (p-1)$ for each of these evaluations in
  $O((p-1)^n (d-1)^n (p-1)^\omega) = O ((p-1)^{n+\omega} (d-1)^n)$
  operations~\cite[Chapter 12.1]{GathenG2013}. It remains to interpolate the determinant as a
  univariate polynomial in $O^\sim((p-1)^n (d-1)^n)$ operations~\cite[Chapter 10.2]{GathenG2013}.
  
  In total, the most expensive step is performing the Gaussian eliminations. Therefore,
  the overall complexity of computing $\delta$ is $O ((p-1)^{n+\omega} (d-1)^n)$. 

  We then follow the same steps to compute each of the numerators. 
  Essentially, this involves computing
  the determinants of all $\binom{n}{p-1}$ maximal minors of $\jac((\f^A)^{[j]})$. To retrieve the
  numerators, we must simply compute linear combinations of these determinants and so this step is
  negligible to the complexity. Hence, approximating $\binom{n}{p-1}$ by $n^{p-1}$ as $n \to \infty$,
  repeating this evaluation--interpolation method for each minor requires
  $O((p-1)^{n+\omega}n^{p-1}(d-1)^n)$ operations.

  Now that the basis $B$ is computed, it remains to find its product with the gradient $\d f_j$.
  Thus, by the same evaluation--interpolation techniques as before, we may compute each $v_i$ in
  $O^\sim((p-1)^n(d-1)^{n+1})$ operations. This is negligible compared to the complexity of the
  Gaussian elimination step as before and so the overall complexity of computing
  $v_1, \dots, v_{n-p+1}$ and $\delta$ is $O((p-1)^{n+\omega}n^{p-1}(d-1)^n)$.
\end{proof}

Once the initialising polynomials have been computed, the remaining steps of both algorithms designed
in this paper rely on algebraic elimination algorithms. In particular we shall use the \geo
algorithm given in~\cite{GML} in combination with the lifting algorithm of~\cite{schost2003geores}
to compute a parametric system whose solution set contains the set of asymptotic critical values of a
given dominant polynomial mapping. Since both of the algorithms we analyse will use this framework,
and in the end have essentially the same theoretical complexity, we give a full complexity analysis
of Algorithm~\ref{algo:acv_minor} and then show how the result generalises to
Algorithm~\ref{algo:acv_e}.

First, we recall the representation that is the output of \geo algorithm of~\cite{GML}.
Consider polynomials $g_1, \dots, g_m, h$ in the polynomial ring $\K[x_1, \dots, x_m]$ and the zero-dimensional algebraic set $S$ defined by $g_1 = \dots = g_m = 0$, $h \neq 0$. Let $D$ be the degree
of this set and let $T$ be a linear form of the input variables $x_1, \dots, x_m$. Then, the output
of the \geo algorithm is a representation
\[\begin{cases}
  q(T) & =  0 \\ 
  q^\prime(T) x_1 & =  v_1(T) \\
       & \; \; \vdots \\
  q^\prime(T) x_m &  =  v_m(T),
\end{cases}\]
where $q\in\Q[T]$ is a univariate polynomials of degree at most $D$
and $v_1, \dots, v_m \in\Q[T]$ are univariate polynomials of degree strictly less than $D$. 
This is a representation of the set $S$ outside of Zariski closed set $\V(q^\prime)$. We now
give our main complexity result.

\begin{reptheorem}{prop:comp}
  Let $\f = (f_1, \ldots, f_p) \in \K[\bz]^p$ be a dominant polynomial mapping 
  and let $d = \max_{1 \leq i \leq p} \deg f_i$.
  
  There exists an algorithm which, on input $\f$, computes a
  non-zero polynomial $g$
  in $\K[\bc]$ such that $ K_{\infty}(\f) \subset \V(g)$ using at most 
  \[
    O^{\sim}\left( p (p(d-1))^{2(p+1)(n-p)} (d+1)^{2p(p+1)} \right) 
  \]
  arithmetic operations in $\K$. 
\end{reptheorem}

\begin{proof}[Proof of Theorem~\ref{prop:comp}]
  We shall use Algorithm~\ref{algo:acv_minor} which by Theorem~\ref{thm:algo3} terminates, and returns
  a finite basis whose zero set has codimension at least $1$ in $\C^p$ and contains the set of
  asymptotic critical values of~$\f$. First, fix some $1 \leq j \leq p$. Then, the first
  $8$ steps of Algorithm~\ref{algo:acv_minor} are to compute the finite list of polynomials
  $G^\prime$, which we denote $h_1, \dots, h_n$, and the polynomial $\delta$. This has been
  analysed in Lemma~\ref{lem:evalinter}. 

  We note that $G^\prime$ consists of $n$ polynomials in $n+p$ variables and, by
  Theorem~\ref{thm:algo3}, defines a set of dimension $p$. Let $a_1, \dots, a_p \in \K$ be generic
  elements of $\K$. Then, substituting $c_i$ for $a_i$ in the polynomials of $G^\prime$ defines
  a zero-dimensional constructible set.
  
  Using the \geo algorithm of~\cite{GML}, we compute a representation of the system
  $h_1 = \dots = h_n = 0, \; \numer(\delta(\tau_1(z_1))) \neq 0, \; c_1 = a_1, \dots, c_p = a_p$.
  By Lemma~\ref{lem:degG}, this system has degree at most $(p(d-1))^{n-p}(d+1)^p$. For ease
  of notation, we denote this degree $D$. Then, we have the representation
    \[\begin{cases}
    q(T) & =  0 \\ 
    q^\prime(T) z_1 & =  v_1(T) \\
         & \; \; \vdots \\
    q^\prime(T) z_n &  =  v_n(T),
  \end{cases}\]
  where the polynomials $q, v_1, \dots, v_n \in \K[T]$ have degree at most
  $(p(d-1))^{n-p}(d+1)^p$. Now, using the lifting algorithm of~\cite{schost2003geores} we
  obtain a parametric representation
  \[\begin{cases}
    Q & =  0 \\ 
    \frac{\partial Q}{\partial T} z_1 & =  V_1 \\
         & \; \; \vdots \\
    \frac{\partial Q}{\partial T}z_n &  =  V_n,
  \end{cases}\]
  where $Q,V_1, \dots, V_n \in \K(\bc)[T]$ are polynomials in the linear form $T$ with
  coefficients in $\K(\bc)$. By~\cite[Theorem 1]{schost2003geores}, the numerators and
  common denominators of the polynomials $Q, V_1, \dots, V_n$ have degree at most $D$ in
  $\bc$.

  Thus, with $G^\prime_s$ defined as in Algorithm~\ref{algo:acv_minor}, this parameterises the set
  $\V(G^\prime_s)$. To now compute $V_j$, we must intersect with $\V(z_1)$ and project onto the
  $\bc$-space. First, however, we convert our representation into one described by
  polynomials in $\K[\bc, T]$. To do so, we simply multiply each polynomial
  $Q, \frac{\partial Q}{\partial T}, V_1, \dots, V_n$ by their respective common denominators.
  We denote the resulting polynomials $\tilde{Q}, \tilde{\frac{\partial Q}{\partial T}}$ and
  $\tilde{V_1}, \dots, \tilde{V_n}$ respectively. By~\cite[Theorem 1]{schost2003geores}, these
  polynomials have degree at most $D$ in $\bc$ and degree at most $D$ in $T$.
  We claim that the polynomial
  \[ g_j = \res_T\left(\tilde{Q},
      \tilde{\frac{\partial Q}{\partial T}}\right)
    \res_T\left(\tilde{Q}, \tilde{V_1}\right) \]
  defines an algebraic set that contains $\V(V_j)$. To see this, first note that the zero set
  of the polynomial $\res_T(\tilde{Q}, \tilde{V_1}) \in \K[\bc]$ contains all the points in the
  projection of the intersection with $\V(z_1)$. Thus, this zero set contains $\V(V_j)$ whenever 
  this parametrisation is defined. Then, the zero set of
  $\res_T(\tilde{Q}, \tilde{\frac{\partial Q}{\partial T}})$
  contains all the points of $\V(V_j)$ where this parametrisation is not defined. Thus, the zero
  set of the product of these polynomials contains $\V(V_j)$. One computes such a polynomial for each
  choice of $j$ and returns the product $g = \prod_{j=1}^p g_j$.

  We now analyse the complexity of the algorithm described above. Firstly, by Lemma~\ref{lem:evalinter},
  the initialisation step of the algorithm requires at most \[O((p-1)^{n+\omega}n^{p-1}(d-1)^n)\]
  arithmetic operations in $\K$ for each choice of $1 \leq j \leq p$.
  Substituting $\tau_1(z)$, finding the numerator and specialising $\bc$ is negligible,
  so we now consider the complexity of the \geo algorithm.

  By~\cite[Theorem 1]{GML}, computing a geometric resolution of the set $\V(G_s^\prime)$
  requires at most \[O(n(n L+n^\Omega)\Mult(D)^2)\] arithmetic operations in $\K$.
  Assuming that $d \geq 2$ is fixed, we may bound the evaluation complexity, $L$, by
  $n \binom{n + D}{n} = O(n^{d+1})$. Thus, by excluding logarithmic factors, we arrive at a
  simplification to the class
  \[O^\sim(n^{d+3}D^2).\] 

  Furthermore, by~\cite[Theorem 2]{schost2003geores}, the lifting step requires at most
  \[O^\sim((n L+n^4)\Mult(D)\Mult(D^p) + n p^2 D \Mult(D)\Mult(D^{p-1})) = O^\sim((n
    L+n^4+n p^2)D^{p+1})\]
  arithmetic operations in $\K$. By again assuming that $d \geq 2$ is fixed, we apply the same
  simplification of the evaluation complexity. Furthermore, assuming that $n \geq p$, we arrive
  at a simpler form:
  \[O^\sim(n^{d+2}D^{p+1}).\] 
  
  The final step of importance is to compute $2p$ resultants. Recall that $\tilde{Q}, \tilde{V_1}$ and
  $\tilde{\frac{\partial Q}{\partial T}}$ have degree at most $D$ in $\bc$ and $T$. Thus, each
  Sylvester matrix has at most $2D$ columns and has entries of degree at most $D$. Hence, the
  determinant of these matrices, the resultants we wish to compute, have degree at most $2D^2$.
  We return to a Kronecker substitution to reduce to the bivariate case, leaving the variables
  $c_1$ and $T$. Since the variables $\bc$ each occur with degree at most $2D^2$, we can set
  \[c_2 = c_1^{2D^2+1}, \dots, c_p = c_1^{(2D^2+1)^{p-1}}.\]
  Therefore, with this substitution, we can write the entries of each Sylvester matrix as
  univariate polynomials in $c_1$ with degree in the class $O(2^p D^{2p})$.

  By~\cite[Corollary 11.21]{GathenG2013}, we can compute each bivariate resultant within
  $O^\sim(2^p D^{2p+2})$ arithmetic operations in $\K$. We compute $2p$ resultants and so the overall
  complexity of computing the resultants is in the class \[O^\sim(p2^{p+1}D^{2p+2}).\]
  
  In summary, the overall complexity
  for computing a polynomial whose zero set contains the asymptotic critical values of $\f$ is in the
  class:
  \[O^\sim(p(p-1)^{n+\omega}n^{p-1}(d-1)^n + n^{d+3}D^2 + n^{d+2}D^{p+1} + p2^{p+1}D^{2p+2}).\]
  The complexity of the resultant computation is dominant. Hence, this simplifies to the class:
  \[O^\sim \left( p (p(d-1))^{2(p+1)(n-p)} (d+1)^{2p(p+1)} \right).\qedhere\]
\end{proof}

\begin{corollary}\label{cor:algo2}
  Let $\f = (f_1, \ldots, f_p) \in \K[\bz]^p$ be a dominant polynomial mapping 
  and let $d = \max_{1 \leq i \leq p} \deg f_i$. Then, Algorithm~\ref{algo:acv_e} returns
  a polynomial $g$ in $\K[\bc]$ such that $ K_{\infty}(\f) \subset \V(g)$ within
  \[
    O^\sim\left(  p (p(d-1))^{2(p+1)(n-p)} (d+1)^{2p(p+1)}  \right)
  \]
  arithmetic operations in $\K$.
\end{corollary}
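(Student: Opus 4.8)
The plan is to rerun, step by step, the complexity analysis carried out for Algorithm~\ref{algo:acv_minor} in the proof of Theorem~\ref{prop:comp}, keeping track of the two — and essentially only two — places where Algorithm~\ref{algo:acv_e} differs from Algorithm~\ref{algo:acv_minor}: it carries one extra indeterminate $e$, and it works with the list $G$ of $n+1$ polynomials from Corollary~\ref{cor:msj} instead of the list $G^\prime$ of $n$ polynomials. The initialisation phase is \emph{identical} in both algorithms: for each $1\le j\le p$ one computes a basis $B$ of the kernel of $\jac((\f^A)^{[j]})$, the rational functions $v_1,\dots,v_{n-p+1}=\d f_j^A\,B$ and the polynomial $\delta$, so Lemma~\ref{lem:evalinter} bounds this part by $O\big((p-1)^{n+\omega}n^{p-1}(d-1)^n\big)$ operations in $\K$ per value of $j$, exactly as before.

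Next, fixing a generic specialisation $c_1=a_1,\dots,c_p=a_p$ with $a_i\in\K$, I would feed the system $G$ (with $\bc$ specialised and $z_1\numer(\delta(\tau_1(z)))\neq 0$) to the \geo algorithm of~\cite{GML} and then to the lifting algorithm of~\cite{schost2003geores}, intersect the resulting parametrisation with $\V(z_1)$ and $\V(e)$, and project onto the $\bc$-space via resultants, taking the product over $j=1,\dots,p$; correctness is provided by Theorem~\ref{thm:algo2}. The crucial point is the degree controlling these steps. In $G$ the variable $e$ occurs only linearly, and only in the single block of $n-p+1$ equations $\numer(z_1v_i(\tau_1(z))-r_ie)$; using one of them to solve for $e$ (legitimate once $z_1\neq 0$, $\delta\neq 0$, $r_1\neq 0$) and substituting into the others is precisely the passage to the $n-p$ minors of Lemma~\ref{lem:minors} and Lemma~\ref{lem:subset}. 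Hence one may order the elimination so that $e$ is removed first, which reduces the whole computation to that of Algorithm~\ref{algo:acv_minor}; equivalently, on the saturated locus $e$ is a rational function of $\bz,\bc$, so the zero-dimensional set whose geometric resolution is computed here projects bijectively onto the one in Algorithm~\ref{algo:acv_minor} and still has degree at most $D=(p(d-1))^{n-p}(d+1)^p$ (Lemma~\ref{lem:degG} applied to $G^\prime$). Carrying $e$ along then only adds one extra line to the geometric resolution and one more resultant per $j$, and affects only the $n$-dependent lower-order terms $n^{d+3}D^2$, $n^{d+2}D^{p+1}$ in the running time.

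Consequently the resultant step still dominates, yielding $O^\sim\big(p\,2^{p+1}D^{2p+2}\big)$, and, after the same simplifications used in the proof of Theorem~\ref{prop:comp} (treating $d$ as fixed, absorbing logarithmic and lower-order factors, and using $n>p$), one lands in the class $O^\sim\big(p\,(p(d-1))^{2(p+1)(n-p)}(d+1)^{2p(p+1)}\big)$, which is the claimed bound. I expect the main — indeed the only — delicate point to be exactly this bookkeeping around $e$: one must be sure that pushing $e$ through the geometric resolution and lifting phases does not inflate the controlling degree beyond $D$, which is what the ``$e$ is determined on the saturated locus'' observation (or, dually, the early-elimination reorganisation via Lemma~\ref{lem:minors}) guarantees.
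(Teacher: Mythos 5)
Your proposal is correct and follows the same overall strategy as the paper's proof: rerun the complexity analysis of Theorem~\ref{prop:comp} for Algorithm~\ref{algo:acv_e} and check that the differences (the extra indeterminate $e$, the list $G$ of $n+1$ polynomials instead of $G^\prime$, and one additional resultant $\res_T(\tilde{Q},\tilde{V_e})$ per choice of $j$, i.e.\ $3p$ instead of $2p$ resultants) do not alter the complexity class. Where you genuinely diverge is the treatment of the controlling degree. The paper simply quotes Lemma~\ref{lem:degG}, which bounds the degree of $\V(G)$ by $(p(d-1)+2)^{n-p+1}(d+1)^p$, and asserts that the gap with $D=(p(d-1))^{n-p}(d+1)^p$ ``does not change the complexity class''; you instead observe that $e$ occurs only linearly and is rationally determined on the saturated locus $z_1\neq 0$, $\delta\neq 0$ (equivalently, eliminating $e$ first recovers the system of Algorithm~\ref{algo:acv_minor} via Lemmas~\ref{lem:minors} and~\ref{lem:subset}), so the zero-dimensional set fed to the geometric resolution projects bijectively onto the one of Algorithm~\ref{algo:acv_minor} and has degree at most $D$ itself. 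Your justification is in fact the sharper one: substituting the Lemma~\ref{lem:degG} bound for $G$ naively into the dominant term $p\,D^{2p+2}$ would inflate the exponent (for instance, for $p=1$, $d=2$ the two degree bounds are $3$ versus $3^{n+1}$), so it is precisely the ``$e$ is determined on the saturated locus'' argument, rather than the raw B\'ezout bound for $G$, that keeps the estimate at the stated $O^\sim\bigl(p\,(p(d-1))^{2(p+1)(n-p)}(d+1)^{2p(p+1)}\bigr)$. In short: same architecture as the paper, with a cleaner and arguably necessary handling of the degree bookkeeping around $e$.
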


\begin{proof}[Proof of Corollary~\ref{cor:algo2}]
  Note that the procedure described in proving Theorem~\ref{prop:comp} can be used for
  Algorithm~\ref{algo:acv_e} with very few adjustments required. Firstly, note that the initialisation
  step is almost identical except that we consider $n+1$ polynomials in $n+2$ variables rather than
  $n$ polynomials in $n+1$ variables. Therefore, the additional variable $e$ found in
  Algorithm~\ref{algo:acv_e} will not change the complexity class. Furthermore, by Lemma~\ref{lem:degG},
  the list of polynomials considered in Algorithm~\ref{algo:acv_e} defines an algebraic set of degree
  at most $(p(d-1)+2)^{n-p+1}(d+1)^p$. Comparing this to $D$, the degree of the algebraic set defined by
  $G^\prime$ in Algorithm~\ref{algo:acv_minor}, which is equal to $(p(d-1))^{n-p}(d+1)^p$, we
  conclude that this difference also will not change the complexity class. The last main difference
  is the resultant step. Indeed, we must also consider the intersection with the variety $\V(e)$.
  This results in the computation of one more resultant per choice of $j$, $\res_T(\tilde{Q},
  \tilde{V_e})$ where $V_e$ is the lifted parametrisation of $e$. However, the difference between
  computing $2p$ resultants and $3p$ resultants does not change the complexity class. 
\end{proof}


\section{Applications}\label{sec:apps}
\subsection{Solving Polynomial Optimisation Problems}\label{sec:pop}
In this subsection we present how to use the algorithms detailed in this paper to solve global polynomial optimisation problems.

Firstly, we review to problem we wish to solve. Consider a polynomial $f \in \Q[\bz]$. We aim to compute the global infimum of this polynomial $\inf_{\bx \in \R^n} f(\bx) = f^* \in \R \cup \{-\infty\}$. We can solve this problem exactly by computing the generalised critical values of $f$.

There are three cases:
\begin{itemize}
\item $f^*$ is reached. Then, $f^*$ is a critical value of $f$;
\item $f^*$ is reached only at infinity, meaning that there is no minimiser $\bx \in \R^n$ but instead a path $\bx_t \in \R^n$ that approaches the infimum as $\norm{\bx_t}\to\infty$. Then, $f^*$ is an asymptotic critical value of $f$;
\item $f^* = -\infty$.
\end{itemize}

The procedure is as follows: We first compute an algebraic representation of the 
generalised critical values of $f$. We do this by computing a polynomial whose roots contain
the asymptotic critical values by using the algorithms described in 
this paper or in the papers \cite{jelonek2014reaching,K2,polarcurve}. Then, using the gradient ideal as 
in~\cite{faugere2012critical} we can similarly compute a polynomial whose roots contain the
critical values of $f$. There are algebraic elimination algorithms that compute such polynomials with
rational coefficients, for example \gbs~\cite[Chapter~2]{CLO} or the \geo algorithm designed
in~\cite{GML}, since we assumed that $\f \in\Q[\bz]$. Thus, after finding a common denominator,
we may assume these polynomials have integer coefficients. Then, we may use a
real root isolation algorithm such as in~\cite{efficientrealroots}, based on Descartes' rule of sign
~\cite[Theorem 2.44]{BPR}, to compute isolating intervals with rational endpoints for all real roots
of these polynomials.

Let $C = \{c_1, \dots, c_k\} \subset \R$ be the finite set of real algebraic numbers that are the
real roots of the above polynomials. Then, the set $C$ contains the generalised critical values of $f$.
By~\cite[Theorem 3.1]{KOS}, the polynomial $f$ with restricted domain $f: \R^n \setminus f^{-1}(K(f))
\to \R \setminus K(f)$ is a fibration over each connected component of $\R \setminus K(f)$. Therefore,
since $C$ is finite, the same fibration property applies to the restriction
$f: \R^n \setminus f^{-1}(C) \to \R \setminus C$. Hence, to decide the emptiness of each connected
component of $\R \setminus C$, it is sufficient to decide the emptiness of one fibre for each connected
component.

After computing the isolating intervals for the elements of $C$, we may now choose rational numbers
$r_1, \dots, r_k$ so that \[ r_1 < c_1 < r_2 < \cdots < r_k < c_k.\]

We must assess the emptiness of the fibres of these values. We do so using the algorithm
designed in~\cite{safeyschost2003}. We consider, for $0 \leq i \leq k$, the ideal $\ideal{f-r_i}$.
This algorithm requires a radical ideal such that $\V(f-r_i)$ is smooth and equidimensional.
Since $r_i$ is outside of these isolating intervals, we have that $\V(f-r_i)$ is smooth and
equidimensional. Furthermore, since $\V(\sqrt{\ideal{f-r_i}}) = \V(f-r_i)$, we may instead consider the
square-free part of $\ideal{f-r_i}$, $\sqrt{\ideal{f-r_i}}$, to decide the emptiness of $\V_\R(f-r_i) = \V(f-r_i) \cap \R^n$. 

Firstly, if $\V_\R(f-r_0)$ is non-empty then we must be in the third case and so $f^* = -\infty$.
For the remaining two cases, let $i$ be the least index such that $\V_\R(f-r_i)$ is non-empty, if
such an index exists. If $r_i$ is greater than the least critical value, which one may decide from the
isolating intervals, then the least critical value is the minimum of $f$. Else, $c_{i-1}$ corresponds
to an asymptotic critical value and is the infimum of $f$. If such an index does not exist, then the
least critical value of $f$ is the minimum and if $f$ does not have any critical values, then the
infimum is $c_k$.

The complexity of the algorithm for polynomial optimisation described is as follows. For a
polynomial $f \in \Q[\bz]$ of degree $d$, we first compute
the polynomial representation of $K(f)$. By Theorem~\ref{prop:degree} and~\cite[Theorem 4.3]{msed2007testing}, we can compute this within $O^\sim(n^7d^{4n})$ arithmetic operations in $\Q$.
By~\cite[Corollary 4.4]{K2}, $f$ has at most $d^n$ generalised critical values. Thus, with $\beta$
bounding the bit-size of the input polynomial, isolating the real roots with the algorithm designed
in~\cite{efficientrealroots} requires $O(\beta d^{4n})$ operations. We must then choose at most
$d^n + 1$ points in $\Q$, the $r_1, \dots, r_{d^n}$ as above, and decide the emptiness of
each $\V_\R(f-r_i)$. This requires the use of the algorithm designed in~\cite{safeyschost2003} at most
$d^n$ times with each computation requiring $O(n^7d^{3n})$ operations. Thus, one can compute
an isolating interval for the infimum of a polynomial $f \in \Q[\bz]$ of degree $d$ in
$O^\sim(n^7d^{4n})$ arithmetic operations in $\Q$.

\begin{example}
  Consider the polynomial $f = z_1^2z_2^2 + 2z_1z_2^3 + z_2^4 + z_1^2
  + 3z_1z_2 + 2z_2^2$. First, we compute the set of generalised critical values. Note that in this
  simple example it is possible to find exactly the real algebraic numbers that contain the generalised
  critical values because the degrees of the polynomials we compute in our algorithms are small.
  We find that $K_0(f) =\{0\}$ and using Algorithm~\ref{algo:acv_e} we find
  $K_\infty(f) \subset \{-\frac{1}{4}\}$. Now, to show that $f^* = -\frac{1}{4}$ one must first show
  that $f$ is bounded from below. To do so, decide the emptiness of the real
  variety $\V_\R(f-r)$ for some real number $r < -\frac{1}{4}$. For example, we can choose $r = -1$ and
  find that this variety is indeed empty. Finally, one must show that $-\frac{1}{4}$ truly is an 
  asymptotic critical value as Algorithm~\ref{algo:acv_e} computes a superset of the asymptotic
  critical values. Thus, one shows that $f$ takes values less than $0$ by once again deciding the
  emptiness of a fibre. So, consider the variety $\V_\R(f + \frac{1}{8})$ and find that it is not 
  empty. This shows that $f$ takes values less than $0$ and by the fibration property satisfied by the 
  generalised critical values we conclude that the infimum of $f$ is $-\frac{1}{4}$.
\end{example}

\begin{example}
  Consider the polynomial $f = z_1^3 + z_1^2z_2^2 -2z_1z_2 + 1$. We
  find that $K_0(f) =\{1\}$ and $K_\infty(f) \subset \{0\}$. We first test the third case.
  Take a value less than $0$, for example $-1$, and decide the emptiness of $\V_\R(f+1)$. We find
  that this fibre is not empty and so by the fibration property, we conclude that $f^* = -\infty$.
\end{example}

For more information on solving polynomial optimisation problems, we
refer to~\cite{msed2008computing, greuetsafey, schweighofer2006global}.

\subsection{Deciding the emptiness of semi-algebraic sets defined by a single inequality}\label{sec:emptiness}
In this subsection, we continue to explore the applications of algorithms
computing generalised critical values. Let $f \in \Q[\bz]$
be a polynomial with degree $d$ and consider the semi-algebraic set
$S$ defined by the single inequality $f > 0$. The goal is to test the
emptiness of the set $S$ and in the case that $S$ is not empty to
compute at least one point in each connected component. There exists
$e \in \Q^+$ small enough such that the problem is reduced to computing
at least one point in each connected component of the real algebraic set
$\V_\R(f-e)$. Such an $e$ is small enough in this sense if it is
less than the least positive generalised critical value of the map
$z \in \R^n \to f(z) \in \R$, we refer to~\cite[Theorem 5.1]{msed2007testing}.
To decide when this is the case, one computes isolating intervals for the
generalised critical values by~\cite[Algorithm~10.63]{BPR}. Once an appropriate $e$ has
been chosen, it remains to compute at least one point in each connected
component of  $\V_\R(f-e)$. This may be
accomplished using the algorithm designed in~\cite{safeyschost2003}. To apply
this algorithm, we require that $\ideal{f-e}$ is radical and $\V(f-e)$ is equidimensional
and smooth. Since $e$ is away from any generalised critical values we have that
$\V(f-e)$ is equidimensional and smooth. Moreover, if $\ideal{f-e}$ is not
radical, we may simply take the square-free part instead as $\V(\sqrt{\ideal{f-e}}) = \V(f-e)$. 

As in the previous application, the complexity of computing isolating intervals for all real
generalised critical values is in the class $O^\sim(n^7 d^{4n})$. After choosing an
appropriate rational number $e$, it remains to apply the algorithm designed in~\cite{safeyschost2003}.
This requires $O(n^7d^3n)$ operations. Therefore, the overall complexity of deciding the emptiness
of the semi-algebraic set defined by $f>0$ is in the class  $O^\sim(n^7 d^{4n})$. Moreover,
in the case where this set is not empty, at least one point in each connect component is computed.

\begin{example}
  Consider the polynomial $f = z_1^2(1-z_2) - (z_1z_2^2-1)^2$. Again, in this simple example
  we obtain polynomials of degree at most $2$ from our algorithms and so we can give explicitly
  the set containing the generalised critical values. The polynomial giving the asymptotic critical
  values is $c$ while for the critical values it is $229c^2-202c-27$. Hence, we find that
  $K(f) \subset \{0,1,\frac{-27}{229}\}$. We note that the value $1$ is a critical value, hence we
  may decide immediately that the semialgebraic set defined by $f > 0$ is nonempty. Now, to
  compute at least one sample point in each connected component of this set, we must choose a
  suitable fibre to investigate. Thus, we choose a rational value greater than $0$ and less than
  the least generalised critical value, such as $\frac{1}{2}$, and use the algorithm
  in~\cite{safeyschost2003} to compute sample points for each connected component of $\V_\R(f - \frac{1}{2})$.
  We may do so because $\ideal{f - \frac{1}{2}}$ is a radical ideal.
\end{example}


\section{Experiments}
\label{sec:expo}

The algorithms discussed in this paper have initially been implemented
in the \Maple computer algebra system using a combination of \FGb~\cite{FGb}and \MSolve~\cite{MSolve},
both implemented in C, to perform the \gb computations as well as to compute the
degree of various objects described in the tables below. In this section, we present the
experimental results of these implementations with computations
performed on a computing server with 1536 GB of memory and an Intel Xeon E7-4820 v4 2GHz
processor. Exceptionally, we also present the timings of these implementations
when given polynomials from practice as input with computations performed on a computing
server with 754 GB of memory and an Intel Xeon Gold 6244 3.6GHz processor. The computations were performed under finite 
fields before reconstructing the rational polynomial whose roots are the asymptotic
critical values. For our timings, the entry $\infty$ has been given in the cases when the algorithm
has not terminated within 2 days. Additionally, if a computation could not be performed we give the entry N/A.
We give the remainder of the entries correct up to two significant figures.

We compare our algorithms to the one derived from the work of~\cite[Section 4]{KOS} combined with
our first element of randomisation from Lemma~\ref{lem:ksat}. We denote the resulting algorithm
$\alg$.

Our algorithms outperform $\alg$ in every tested circumstance as expected. Additionally, in general,
Algorithm~\ref{algo:acv_minor} is faster than Algorithm~\ref{algo:acv_e}. However,
the polynomial system returned by Algorithm~\ref{algo:acv_minor} can be of higher degree than that
returned by Algorithm~\ref{algo:acv_e}. Concerning the particular case we study in this section,
the case $p=1$, we have that the polynomial returned by Algorithm~\ref{algo:acv_e} is a factor of the
output of Algorithm~\ref{algo:acv_minor}. 

We also test how Algorithm~\ref{algo:acv_e} behaves when we perform the saturation step with
two different methods. While both methods use \gbs, the first method given
in~\cite[Theorem 4.4.14]{CLO} introduces a new variable which acts as the inverse of the polynomial
one wishes to saturate by. The second method, given by Bayer~\cite{Bayer} and described
in~\cite[Exercise~15.41]{Eisenbud}, works for homogeneous ideals. Thus, we also introduce a new
variable to first homogenise our ideal. Then, to perform the saturation by the ideal $\ideal{z_1}$,
we factor out all powers of $z_1$ from the homogeneous ideal. Setting the introduced variable to $1$
returns a basis for the saturated ideal. We find that for generic dense polynomials, the first method
is the best. However, for the particular families of polynomials with asymptotic critical values that
we test out algorithms with the Bayer method is noticeably faster. 

We give three families of polynomials that have asymptotic critical values for the purpose of testing
our algorithms. For $n \geq 2$, let
\[
  f_n = z_1^2 + \sum_{i=2}^n (z_1 z_i - 1)^2, \; \;
  g_n = \sum_{i=1}^n \frac{\prod_{j=1}^n z_j^2}{z_i^2}, \; \; \;\\
  m_n = \sum_{i=1}^n \prod_{j=1}^i z_j^{2^{i-j}}.
\]
For $n \geq 2$, each of these polynomials has an asymptotic critical value at $0$. For $n \geq 3$, $f_n$
also has an asymptotic critical value at $n$. Additionally, we compare our algorithms with random dense
polynomials, which do not have asymptotic critical values. To do so, we introduce the following notation.
For a random dense polynomial in $k$ variables and degree $s$ we write
$d_s n_k$.

\begin{table}
  \centering
  \begin{tabular}{|r|r|r|r|r|}
    \hline
     & $\alg$ & Algo.~\ref{algo:acv_e} & Algo.~\ref{algo:acv_e} w/ Bayer & Algo.~\ref{algo:acv_minor}\\
    \hline
    Polynomial  & time (s) & time (s) & time (s) & time (s) \\
    \hline
    $f_5$ & 1\,200 & 0.061 & 0.056 & 0.034 \\
    \hline
    $f_{25}$ & $\infty$ & 17 & 14 & 12 \\
    \hline
    $g_5$ & $\infty$ & 5.3 & 2.1 & 0.61 \\
    \hline
    $g_6$ & $\infty$ & 480 & 63 & 20.7 \\
    \hline
    $m_4$ & $\infty$ & 2.5 & 2.0 & 0.48 \\
    \hline
    $m_5$ & $\infty$ & 6\,700 & 130 & 130 \\
    \hline
    $d_2n_{20}$ & 20 & 0.29 & 0.24 & 0.23 \\ 
    \hline
    $d_2n_{100}$ & $\infty$ & $\infty$ & 120 & 100 \\ 
    \hline
    $d_3n_5$ & $\infty$ & 5.30 & 19 & 0.031 \\ 
    \hline
    $d_3n_7$ & $\infty$ & $\infty$ & $\infty$ & 0.25 \\ 
    \hline
    $d_4n_4$ & $\infty$ & 390 & 1\,300 & 0.074 \\
    \hline
    $d_4n_6$ & $\infty$ & $\infty$ & $\infty$ & 3.3 \\
    \hline
  \end{tabular}
  \caption{Timings.}
  \label{tab:timing}
\end{table}

From Table~\ref{tab:timing}, we see that Algorithm~\ref{algo:acv_e} and Algorithm~\ref{algo:acv_minor}
surpass $\alg$ in all instances by a large factor. Moreover, for our specific polynomial families,
the Bayer method of saturation is faster. However, for generic dense polynomials, the Bayer method
is slower in all cases except $d=2$. Additionally, for all tested examples, Algorithm~\ref{algo:acv_minor}
is the quickest. In particular, for generic dense polynomials with degree $4$ in $6$ variables,
Algorithm~\ref{algo:acv_minor} finishes within a few seconds while neither of the other two
algorithms terminated within 48 hours.

Next, in Table~\ref{tab:practice}, we present the timings of Algorithm~\ref{algo:acv_minor} with
polynomials coming from practice. For these computations, we now use \MSolve~\cite{MSolve}
to perform the saturation step and when applicable we complete the computation with \FGb~\cite{FGb}.
The polynomials $f_{1,2}, f_{1,3}, f_{2,2}$ and $f_{2,3}$ are given in~\cite{kaltofen2009proof} while the
polynomials $s_1, s_2$ and $s_3$ can be found on the webpage
\url{https://www-polsys.lip6.fr/~ferguson/sauter.html}.

\begin{table}
  \centering
  \begin{tabular}{|r|r|r|}
    \hline
    Polynomial & MSolve saturation (s) & FGb elimination (s)\\
    \hline
    $f_{1,2}$ & $\infty$ & N/A \\
    \hline
    $f_{1,3}$ & 50\,000 & $\infty$ \\
    \hline
    $f_{2,2}$ & 2\,800 & 150\,000 \\
    \hline
    $f_{2,3}$ & $\infty$ & N/A \\
    \hline
    $s_1$ & 4\,400 & 100 \\
    \hline
    $s_2$ & 4\,000 & 88 \\
    \hline
    $s_3$ & 3\,800 & 100 \\
    \hline
  \end{tabular}
  \caption{Timings for Algorithm 3 on examples from practice}
  \label{tab:practice}
\end{table}

Finally, in Table~\ref{tab:set_degree}, we give our results on the
degree of hypersurfaces containing the asymptotic critical values. We
compare the bounds given for Algorithm~\ref{algo:acv_minor}
in Theorem~\ref{prop:degree} to the degree of the output of Algorithm~\ref{algo:acv_minor}
and to the degree of the ideal generated by the basis $G^\prime$ as computed in
Algorithm~\ref{algo:acv_minor}. We compute the latter degree by
using the \verb+fgb_hilbert+ function of the \FGb
library~\cite{FGb} to find the numerator of the Hilbert series of
$\K[\bz] / \ideal{G^\prime}$ and evaluating this at $1$. We note that for the case of
random dense polynomials, the degree of the ideal generated by the
basis $G^\prime$ equals the bound we give on the the degree of the asymptotic
critical values.

\begin{table}[htbp!]
  \centering
  \begin{tabular}{|r|r|r|r|r|}
    \hline
    Polynomial & Algo.~\ref{algo:acv_minor} & $G$ & $K_\infty(f)$\\
    \hline
    $f_5$ & 405 & 4 & 3\\
    \hline
    $f_{25}$ & 1\,412\,147\,682\,405 & 4 & 3\\
    \hline 
    $g_5$ & 21\,609 & 90 & 1 \\
    \hline
    $g_6$ & 93\,934\,323 & 138 & 1\\
    \hline
    $m_4$ & 43\,904 & 124 & 1\\
    \hline
    $m_5$ & 25\,920\,000 & 572 & 1\\
    \hline
    $d_2n_{20}$ & 3 & 3 & 0\\
    \hline
    $d_2n_{100}$ & 3 & 3 & 0 \\
    \hline
    $d_3n_5$ & 64 & 64 & 0 \\
    \hline
    $d_3n_7$ & 256 & 256 & 0 \\
    \hline
    $d_4n_4$ & 135 & 135 & 0 \\
    \hline
    $d_4n_6$ & 1\,215 & 1\,215 & 0 \\
    \hline
  \end{tabular}
  \caption{Comparison of degree bounds and degree reached during the algorithm.}
  \label{tab:set_degree}
\end{table}

\paragraph*{Acknowledgements.} The authors are supported by the ANR grants
ANR-18-CE33-0011 \textsc{Sesame}, ANR-19-CE40-0018 \textsc{De Rerum Natura} and
ANR-19-CE48-0015 \textsc{ECARP}, the PGMO grant \textsc{CAMiSAdo} and the
European Union's Horizon 2020 research and innovation programme under the Marie
Sklodowska-Curie grant agreement N. 813211 (POEMA).

\bibliographystyle{elsarticle-harv}
\bibliography{lemma}
\end{document}